\tikzset{
  big dot/.style={
    circle, inner sep=0pt, 
    minimum size=3pt, fill=purple
 }
}
\tikzset{
  bigbig dot/.style={
    circle, inner sep=0pt, 
    minimum size=5pt, fill=purple
 }
}
\tikzset{
  bigorange dot/.style={
    circle, inner sep=0pt, 
    minimum size=2.5pt, fill=orange
 }
}
\newcommand{\verteq}{\rotatebox{90}{$\,=$}}
\newcommand{\equalto}[2]{\underset{\scriptstyle\overset{\mkern4mu\verteq}{#2}}{#1}}
\tikzset{snake it/.style={decorate, decoration=snake}}
\newcounter{todocounter}
\let\phi\varphi
\let\epsilon\varepsilon
\newcommand{\EM}{\mathop{\mathsf{EM}\vphantom{a}}\nolimits}
\newcommand{\Y}{\mathop{\mathsf{Y}\vphantom{a}}\nolimits}
\newcommand{\Last}{\mathop{\mathsf{L}\vphantom{a}}\nolimits}
\newcommand{\Yleq}[2]{\Y_{#1}\leq\Y_{#2}}
\newcommand{\Lleq}[2]{\Last_{#1}\leq\Last_{#2}}
\newcommand{\Llt}[2]{\Last_{#1}<\Last_{#2}}
\newcommand{\Levent}[2]{\Last_{#1}(#2)}
\newcommand{\Li}{\Last_{i}(t)}
\newcommand{\Lj}{\Last_{j}(t)}
\newcommand{\Lk}{\Last_{k}(t)}
\newcommand{\Pleq}[2]{\mathsf{Leq}({#1},{#2})}%
\newcommand{\Peq}[2]{\mathsf{Eq}({#1},{#2})}%
\newcommand{\LEQ}[4]{\mathsf{LEQ}_{{#1},{#2}}({#3},{#4})}
\newcommand{\EQ}[4]{\mathsf{EQ}_{{#1},{#2}}({#3},{#4})}
\newcommand{\p}[1]{\langle {#1} \rangle}
\newcommand{\test}[1]{{#1}?}
\newcommand{\lmove}{{\leftarrow}}
\newcommand{\pastpdl}{\mathsf{PastPDL}}
\newcommand{\locpastpdl}{\mathsf{LocPastPDL}}
\newcommand{\pastpdlconstants}{\mathsf{PastPDL[Y,L]}}
\newcommand{\locpastpdlconstants}{\mathsf{LocPastPDL[Y,L]}}
\newcommand{\True}{\top}
\newcommand{\False}{\bot}
\newcommand{\loc}{\mathsf{loc}}
\newcommand{\traces}[1][\Sigma]{\mathrm{Tr}(#1)}
\newcommand{\pset}{\mathscr{P}}
\newcommand{\da}{{\downarrow}}
\newcommand{\Da}{{\Downarrow}}
\newcommand{\A}{\mathcal{A}}
\newcommand{\B}{\mathcal{B}}
\newcommand{\s}{\Sigma}
\newcommand{\g}{\Gamma}
\newcommand{\alphabet}{\s}
\newcommand{\distribution}{\tilde{\Sigma}}
\newcommand{\ls}[1][S]{#1}
\newcommand{\lt}[1][\delta]{#1}
\newcommand{\gt}[1][\Delta]{#1}
\newcommand{\sinit}[1][s]{{#1}_{\mathrm{in}}}
\newcommand{\lab}{\Gamma}
\newcommand{\lalphabet}{\alphabet\times\lab}
\newcommand{\ltraces}{\traces[\lalphabet]}
\newcommand{\lc}{\circ_{\ell}}
\newcommand{\gossip}{\mathcal{G}}
\newcommand{\sdpath}{\textsf{sd}-path\xspace}
\newcommand{\on}[1]{\mathsf{on}_{#1}}
\newcommand{\prevon}[2]{\mathsf{prev}_{#1}(#2)}
\newcommand{\same}[2]{\mathsf{same}_{#1}(#2)}
\newcommand{\length}[1]{\|#1\|}
\newcommand{\Mod}[2]{\mathsf{Mod}({#1},{#2})}
\begin{document}

\copyrightyear{2024}
\acmYear{2024}
\setcopyright{acmlicensed}\acmConference[LICS '24]{39th Annual ACM/IEEE
Symposium on Logic in Computer Science}{July 8--11, 2024}{Tallinn, Estonia}
\acmBooktitle{39th Annual ACM/IEEE Symposium on Logic in Computer Science
(LICS '24), July 8--11, 2024, Tallinn, Estonia}
\acmDOI{10.1145/3661814.3662110}
\acmISBN{979-8-4007-0660-8/24/07}

\title{An expressively complete local past propositional dynamic logic over Mazurkiewicz traces and its applications}

\author{Bharat Adsul}
\email{adsul@cse.iitb.ac.in}
\orcid{https://orcid.org/0000-0002-0292-6670}
\affiliation{%
  \institution{IIT Bombay, India}
  \streetaddress{}
  \city{}
  \state{}
  \country{}
  \postcode{}
}

\author{Paul Gastin}
\email{paul.gastin@ens-paris-saclay.fr}
\orcid{https://orcid.org/0000-0002-1313-7722}
\affiliation{%
  \institution{Université Paris-Saclay, ENS Paris-Saclay, CNRS, LMF, 91190,
  Gif-sur-Yvette, France \and
CNRS, ReLaX, IRL 2000, Siruseri, India}
  \streetaddress{}
  \city{}
  \state{}
  \country{}
  \postcode{}
}

\author{Shantanu Kulkarni}
\email{shantanu3637@gmail.com}
\orcid{https://orcid.org/0009-0001-3525-2369}
\affiliation{%
  \institution{IIT Bombay, India}
  \streetaddress{}
  \city{}
  \state{}
  \country{}
  \postcode{}
}

\author{Pascal Weil}
\email{pascal.weil@cnrs.fr}
\orcid{https://orcid.org/0000-0003-2039-5460}
\affiliation{%
  \institution{CNRS, ReLaX, IRL 2000, Siruseri, India \and
  Univ. Bordeaux, LaBRI, CNRS UMR 5800, F-33400 Talence, France}
  \streetaddress{}
  \city{}
  \state{}
  \country{}
  \postcode{}
}

\begin{CCSXML}
    <ccs2012>
       <concept>
           <concept_id>10003752.10003753.10003761</concept_id>
           <concept_desc>Theory of computation~Concurrency</concept_desc>
           <concept_significance>500</concept_significance>
           </concept>
       <concept>
           <concept_id>10003752.10003790.10003793</concept_id>
           <concept_desc>Theory of computation~Modal and temporal logics</concept_desc>
           <concept_significance>500</concept_significance>
           </concept>
       <concept>
           <concept_id>10003752.10003766.10003767.10003768</concept_id>
           <concept_desc>Theory of computation~Algebraic language theory</concept_desc>
           <concept_significance>500</concept_significance>
           </concept>
     </ccs2012>
\end{CCSXML}
    
\ccsdesc[500]{Theory of computation~Concurrency}
\ccsdesc[500]{Theory of computation~Modal and temporal logics}
\ccsdesc[500]{Theory of computation~Algebraic language theory}
    
\keywords{Mazurkiewicz traces, propositional dynamic logic, regular trace languages, expressive completeness,
asynchronous automata, cascade product, Krohn Rhodes theorem, Zielonka's theorem, Gossip automaton} %

\begin{abstract}
We propose a local, past-oriented fragment of propositional dynamic logic to reason about concurrent  scenarios modelled as Mazurkiewicz traces, and prove it to be expressively complete with respect to regular trace languages. Because of locality, specifications in this logic are efficiently translated into asynchronous automata, in a way that reflects the structure of formulas. In particular, we obtain a new proof of Zielonka's fundamental theorem and we prove that any regular trace language can be implemented by a cascade product of localized asynchronous automata, which essentially operate on a single process. 

These results refine earlier results by Adsul et al. which involved a larger fragment of past propositional dynamic logic and used Mukund and Sohoni's gossip automaton. Our new results avoid using this automaton, or Zielonka's timestamping mechanism and, in particular, they show how to implement a gossip automaton as a cascade product.
\end{abstract}
\maketitle

\section{Introduction}
Mazurkiewicz traces \cite{Mazurkiewicz_1977} (just called traces in this paper) are a well-established model of concurrent behaviours. Each specifies a partial order on events of a scenario
in which a fixed set of processes, organized along a distributed architecture, interact with each other via
shared actions.  Traces have been extensively studied, with a particular attention to the
class of regular trace languages \cite{TraceBook-DR-1995,DBLP:reference/hfl/DiekertM97}.
The significance of that class arises from its expressivity, which is equivalent to
Monadic-Second-Order (MSO) logic definability \cite{tho90traces, EbingerM-TCS96,
GuaianaRS-TCS92}, and from the fact that regular trace languages can be implemented by
asynchronous automata (also known as Zielonka automata) \cite{Zielonka-RAIRO-TAI-87}.
Asynchronous automata yield implementations which fully exploit the distributed
architecture of the processes.

Our main result establishes that a local, past-oriented fragment of Propositional Dynamic
Logic (PDL) for traces, called $\locpastpdl$, is expressively complete with respect to
regular trace languages.  Let us first comment on the significance of this logic.

PDL was introduced in \cite{flpdl} to reason about arbitrary programs. 
LDL, a purely future-oriented version of PDL, was developed in \cite{ldl} to reason about
properties of finite words, and was shown to be expressively complete with respect to 
regular word languages.
A more recent work \cite{purepastltlldl} studied a purely past-oriented 
version of LDL %
and showed that it is expressively complete and admits a singly 
exponential translation into deterministic finite-state automata which
is an exponential improvement over LDL.
PDL has also been applied to message sequence charts (MSC) and 
related systems in \cite{BolligKuskeMeinecke-LMCS2010, Mennicke-LMCS13}. More recently, a star-free
version of PDL interpreted over MSC was shown to be expressively complete with respect to first-order
logic \cite{BolligFortinGastin-JCSS-21}. 
These prior works have successfully demonstrated that PDL 
is a suitable logical formalism for writing specifications in a variety
of contexts. On the one hand, it naturally extends linear temporal logics 
by permitting richer path formulas to express regular specification 
patterns easily. On 
the other hand, it supports efficient translations into automata-theoretic models,
which are central to the resolution of many verification or synthesis problems.

The logic $\locpastpdl$ admits three types of local and past-oriented
formulas: trace formulas, event formulas and path formulas.
A trace formula is a boolean combination of basic trace formulas of the form $\EM_{i}\phi$
asserting that the last event of process $i$ satisfies the event formula $\phi$.
It is local and past in the sense that its satisfaction 
depends only on the past information available to process $i$.

The \emph{event formulas} reason about the causal past of events.
They are boolean combinations of (a) atomic checks of 
letters (or actions), (b) formulas of the form $\p{\pi}$ claiming the existence of a 
path $\pi$ starting at the current event.
The \emph{path formula} $\pi$, necessarily \emph{localized at a process}, allows to 
march along the sequence of past events in which that process 
participates, checking for regular patterns interspersed with local 
tests of other event formulas. 

A typical event formula contains inside it several localized path formulas which in turn
contain other event formulas through embedded tests.  This natural hierarchical structure
of $\locpastpdl$ formulas renders them easier to design and to understand.

It turns out that our logic $\locpastpdl$ is essentially a fragment of 
$\locpastpdlconstants$ which was introduced in 
\cite{AdsulGastinSarkarWeil-CONCUR22} %
and shown to be expressively complete.
The logic $\locpastpdlconstants$ also supports {\em additional constant 
event/trace formulas} to compare the leading events for each process. 
The expressive completeness 
result for $\locpastpdlconstants$ crucially exploits the presence 
of these additional constant formulas.
Keeping track of the ordering information between leading process events is a fundamental
and very difficult problem in concurrency theory.  Its solution as an asynchronous automaton,
called a gossip automaton
\cite{Zielonka-RAIRO-TAI-87,MukundSohoni-DC97}, is a key
distributed time-stamping mechanism and a crucial %
ingredient in many asynchronous automata-theoretic constructions
\cite{Zielonka-RAIRO-TAI-87,DBLP:conf/icalp/KlarlundMS94}.

The central result of this work is that $\locpastpdl$ is itself expressively complete.
This is done by eliminating constant formulas in $\locpastpdlconstants$ using entirely new techniques.

Another important contribution is an efficient translation of
$\locpastpdl$ formulas into local cascade products of localized asynchronous automata 
(this, without any claim about efficiency, was already implicit in \cite{AdsulGastinSarkarWeil-CONCUR22}).
In asynchronous automata, each process runs a finite local-state device and these
devices synchronize with each other on shared actions.
As in \cite{MukundSohoni-DC97, AdsulGSW-CONCUR20, AdsulGSW-LMCS22}, 
one can use asynchronous automata to also locally compute relabelling
functions on input traces, similar in spirit to the sequential
letter-to-letter transducers on words.
\begin{figure}[h]
    \centering
    \begin{tikzpicture}%
        
        \draw (2.8,0) rectangle (4,1.2);
        \node at (3.4,0.6) {$A_1$};
        \draw[-stealth'] (2.3, 0.6) -- (2.8,0.6) node [midway, above] {$t$};

        \draw (5,0) rectangle (6.2,1.2);
        \node at (5.6,0.6) {$A_2$};
        \draw[-stealth'] (4, 0.6) -- (5,0.6) node [midway, above] {\small ${A_1}(t)$};

        \draw (7.7,0) rectangle (8.9, 1.2);
        \node at (8.3,0.6) {$A_3$};
        \draw[-stealth'] (6.2, 0.6) -- (7.7,0.6) node [midway, above] {\small ${A_2}({A_1}(t))$};
        \draw[-stealth'] (8.9, 0.6) -- (10.8,0.6) node [midway, above] {\small ${A_3}({A_2}({A_1}(t)))$};

        \end{tikzpicture}%
    \caption{Cascade Product}
    \label{cascadeproduct}
\end{figure}
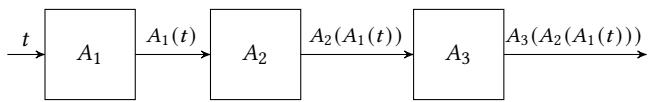

The {\em local cascade product} of asynchronous automata (or transducers)
from \cite{DBLP:conf/fsttcs/AdsulS04, AdsulGSW-CONCUR20, AdsulGSW-LMCS22} is a natural generalization
of cascade product in the sequential setting and, like in the word case 
\cite{StraubingBook}, it 
corresponds to compositions of related relabelling functions.
Note that, in a cascade product, the information flow is unidirectional
and hierarchical. 

The natural hierarchical structure of $\locpastpdl$ formulas makes it possible to
translate them in a modular way into a local cascade product of asynchronous automata.
The locality of $\locpastpdl$ is also an asset: each level in the cascade product
implementing a formula is a localized asynchronous automaton, that is, one where \emph{all
actions} take place on a \emph{single} process, and the other processes do nothing at
all.  In other words, these automata are essentially classical automata, operating on a
single process, and this is much simpler to understand and verify.  The past
orientation of $\locpastpdl$ makes our implementation deterministic.
Finally, a $\locpastpdl$ formula can be computed by a local cascade product of localized
asynchronous transducers/automata \emph{whose global-state space is singly exponential in
the size of the formula}.

The construction described above, coupled with the expressive completeness of
$\locpastpdl$ has some striking applications.  
It easily provides a new proof of Zielonka's theorem.
Another important application is a distributed
Krohn-Rhodes theorem. The classical 
Krohn-Rhodes theorem \cite{KR} implies that every regular word language
can be accepted by a {\em cascade product} of simple automata, namely, 
two-state reset automata and permutation automata. 
Our distributed version states that every regular trace language
can be accepted by a local cascade
product of localized two-state asynchronous reset automata and 
localized asynchronous permutation automata.

Additionally, we show that a gossip automaton/transducer can be implemented as a local cascade
product of localized asynchronous automata/transducers.  It is important to note that the
previously known implementations of gossip as an asynchronous transducer were
intrinsically non-local and non-hierarchical. They rely on a delicate
reuse of (boundedly many) time-stamps.  A process assigning the time-stamps
also needs to collect the information about which of its 
earlier time-stamps are in use by other processes. 
Given the `circular/self-referential' nature of this information flow, it 
appeared rather counter-intuitive that a gossip transducer could be 
implemented as a 
local cascade product (which is unidirectional in nature) of 
localized asynchronous automata (in which only one process is active).

We prove that $\locpastpdl$ is expressively complete by showing that the constant 
formulas in $\locpastpdlconstants$ can be defined in $\locpastpdl$ itself.
To do so, we develop an apparatus of special deterministic
path formulas, called \sdpath formulas, which suffice to address 
the leading events of processes in a bounded fashion. Then we
reduce the problem of checking the causal ordering of these leading events
to the problem of checking the causal ordering of the events addressed by \sdpath formulas.
We next show that the causal ordering of events addressed by \sdpath formulas
can be reduced to equality of such events, for possibly different and longer \sdpath
formulas.
Finally, the equality formulas are constructed with the help
of {\em separating} formulas.
Given a non-trivial \sdpath formula $\pi$, we  construct
a finite set $\Xi(\pi)$ of separating formulas which separate
every event $e$ from $\pi(e)$, the event referred from $e$
via address $\pi$.
The construction of these formulas, while intricate, is
entirely novel and constitutes the technical core of this work.

We now describe the organization of the paper.
In \cref{sec:locpastpdl}, we present the logics of interest, in
particular $\locpastpdl$, after
setting up basic preliminary notation.
\cref{sec:expcomplete} 
carries out the sophisticated task of expressing constant
formulas in $\locpastpdl$. The next \cref{sec:app} develops the efficient
translation into local cascade product of asynchronous devices and provides the
aforementioned applications of our expressive completeness result.
We finally conclude in \cref{sec:conclusion}.

\section{Local Past Propositional Dynamic Logic}\label{sec:locpastpdl}

\subsection{Basic notions about traces}
Let $\pset$ be a finite, non-empty set of processses. A distributed alphabet over $\pset$ is a family $\tilde{\Sigma}=\{\Sigma_{i}\}_{i \in \pset}$ of finite non-empty sets which may have non-empty intersection. The elements of $\Sigma_{i}$ are called the \emph{letters}, or \emph{actions}, \emph{that process $i$ participates in}. Let $\Sigma = \bigcup_{i\in \pset} \Sigma_{i}$.
Letters $a,b \in \alphabet$ are said to be \emph{dependent} if some process participates in both of them ($\exists i \in \pset, a, b \in \Sigma_{i}$); otherwise they are \emph{independent}. 

A poset is a pair $(E,\leq)$ where $E$ is a set and $\leq$ is a partial order on $E$. For $e,e' \in E$, $e'$ is an \emph{immediate successor} of $e$ (denoted by $e \lessdot e'$ ) if $e<e'$ and there is no $g \in E$ such that $e<g<e'$. A (\emph{Mazurkiewicz}) \emph{trace} $t$ over $\distribution$ is a triple $t= (E, \leq, \lambda)$, where $(E,\le)$ is
a finite poset, whose elements are referred to as \emph{events}, and $\lambda \colon E \to \Sigma$ is a labelling function which assigns a letter from $\alphabet$ to each event, such that
\begin{enumerate}
  \item for $e,e' \in E$, if $e \lessdot e'$, then $\lambda(e)$ and $\lambda(e')$ are dependent;
  \item for $e,e' \in E$, if $\lambda(e)$ and $\lambda(e')$ are dependent, then either $e \leq e'$ or $e' \leq e$.
\end{enumerate}

Let $t= (E,\leq,\lambda)$ be a trace over $\distribution$.  Let $i,j \in \pset$ be
processes.  Events in which process $i$ participates (that is: whose label is in
$\Sigma_i$), are called \emph{$i$-events} and the set of such $i$-events is denoted by
$E_{i}$.  $E_{i}$ is clearly totally ordered by $\leq$.  We define the \emph{location} of
an event $e \in E$ to be the set $\loc(e)$ of processes that participate in $e$, that is,
$\loc(e) = \{i \in \pset \mid \lambda(e) \in \Sigma_{i}\}$.  We slightly abuse notation to
define \emph{location} for letters as well: $\loc(a) = \{i \in \pset \mid a \in
\Sigma_{i}\}$.  The set of events in the past of $e$ is written $\da e = \{ f\in E \mid
f\leq e \}$ and the set of events in the strict past of $e$ is written $\Da e = \{ f\in E
\mid f<e \}$.  If $E_{i} \cap \Da e \neq \emptyset$, we denote by $\Y_{i}(e)$ the maximum
$i$-event in the strict past of $e$.  We denote by $\Y_{i,j}(e)$ the event
$\Y_{j}(\Y_{i}(e))$, if it exists.  If $E_{i} \neq \emptyset$, we denote by $\Li$ the
maximum (that is, last) $i$-event.  If in addition $E_{j} \cap \da \Li \neq \emptyset$, we
denote by $\Levent{i,j}{t}$ the maximum $j$-event in the past of the maximum $i$-event.

A set of traces over $\distribution$ is called a \emph{trace language}. Regular trace languages admit several characterizations, in particular in terms of MSO logic \cite{tho90traces} or of asynchronous (or Zielonka) automata \cite{Zielonka-RAIRO-TAI-87}.

\begin{example}
	\cref{fig:trace-diagram}%
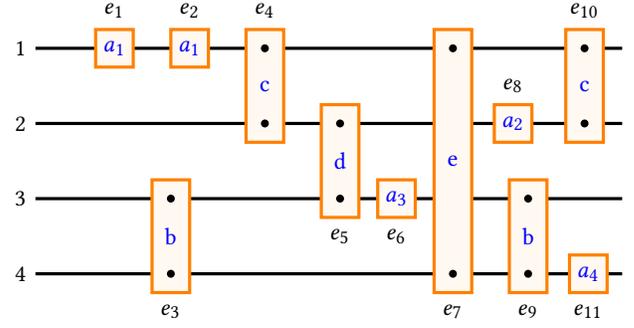
\begin{figure}[htb]
  \centering
  \begin{tikzpicture}
    \node at (0,4) {$1$};
    \node at (0,3) {$2$};
    \node at (0,2) {$3$};
    \node at (0,1) {$4$};
    \draw[black, very thick] (0.2,1) -- (8,1);
    \draw[black, very thick] (0.2,2) -- (8,2);
    \draw[black, very thick] (0.2,3) -- (8,3);
    \draw[black, very thick] (0.2,4) -- (8,4);

    \draw[orange, very thick, fill=orange!5] (1,3.75) rectangle (1.5,4.25);
    \node at (1.25,4.5) {$e_{1}$};
    \node at (1.25,4) {\textcolor{blue}{$a_{1}$}};

    \draw[orange, very thick, fill=orange!5] (2,3.75) rectangle (2.5,4.25);
    \node at (2.25,4.5) {$e_{2}$};
    \node at (2.25,4) {\textcolor{blue}{$a_{1}$}};

    \draw[orange, very thick, fill=orange!5] (1.75,0.75) rectangle (2.25,2.25);
    \node at (2,0.5) {$e_{3}$};
    \node at (2,1.5) {\textcolor{blue}{b}};
    \node[circle,fill=black,inner sep=0pt,minimum size=3pt] (e_{31}) at (2,2) {};
    \node[circle,fill=black,inner sep=0pt,minimum size=3pt] (e_{32}) at (2,1) {};

    \draw[orange, very thick, fill=orange!5] (3,2.75) rectangle (3.5,4.25);
    \node at (3.25,4.5) {$e_{4}$};
    \node at (3.25,3.5) {\textcolor{blue}{c}};
    \node[circle,fill=black,inner sep=0pt,minimum size=3pt] (e_{41}) at (3.25,4) {};
    \node[circle,fill=black,inner sep=0pt,minimum size=3pt] (e_{42}) at (3.25,3) {};

    \draw[orange, very thick, fill=orange!5] (4,1.75) rectangle (4.5,3.25);
    \node at (4.25,1.5) {$e_{5}$};
    \node at (4.25,2.5) {\textcolor{blue}{d}};
    \node[circle,fill=black,inner sep=0pt,minimum size=3pt] (e_{51}) at (4.25,3) {};
    \node[circle,fill=black,inner sep=0pt,minimum size=3pt] (e_{52}) at (4.25,2) {};

    \draw[orange, very thick, fill=orange!5] (4.75,1.75) rectangle (5.25,2.25);
    \node at (5,1.5) {$e_{6}$};
    \node at (5,2) {\textcolor{blue}{$a_{3}$}};

    \draw[orange, very thick, fill=orange!5] (5.5,0.75) rectangle (6,4.25);
    \node at (5.75,0.5) {$e_{7}$};
    \node at (5.75,2.5) {\textcolor{blue}{e}};
    \node[circle,fill=black,inner sep=0pt,minimum size=3pt] (e_{71}) at (5.75,1) {};
    \node[circle,fill=black,inner sep=0pt,minimum size=3pt] (e_{72}) at (5.75,4) {};

    \draw[orange, very thick, fill=orange!5] (6.3,2.75) rectangle (6.8,3.25);
    \node at (6.55,3.5) {$e_{8}$};
    \node at (6.55,3) {\textcolor{blue}{$a_{2}$}};

    \draw[orange, very thick, fill=orange!5] (6.5,0.75) rectangle (7,2.25);
    \node at (6.75,0.5) {$e_{9}$};
    \node at (6.75,1.5) {\textcolor{blue}{b}};
    \node[circle,fill=black,inner sep=0pt,minimum size=3pt] (e_{91}) at (6.75,2) {};
    \node[circle,fill=black,inner sep=0pt,minimum size=3pt] (e_{92}) at (6.75,1) {};

    \draw[orange, very thick, fill=orange!5] (7.25,2.75) rectangle (7.75,4.25);
    \node at (7.5,4.5) {$e_{10}$};
    \node at (7.5,3.5) {\textcolor{blue}{c}};
    \node[circle,fill=black,inner sep=0pt,minimum size=3pt] (e_{10_1}) at (7.5,4) {};
    \node[circle,fill=black,inner sep=0pt,minimum size=3pt] (e_{10_2}) at (7.5,3) {};

    \draw[orange, very thick, fill=orange!5] (7.3,0.75) rectangle (7.8,1.25);
    \node at (7.55,0.5) {$e_{11}$};
    \node at (7.55,1) {\textcolor{blue}{$a_{4}$}};

  \end{tikzpicture}
  \caption{Process trace diagram with labelled events.}
  \label{fig:trace-diagram}
\end{figure}
represents a trace with 11 events over 4 processes. The process set is $\pset = \{1,2,3,4\}$ and the event set is $E = \{e_1, \ldots, e_{11}\}$.
Each process is indicated by a horizontal line and time flows rightward. 
Each event is represented by a vertical box. Events are either local to a process, or they are non-local. Dots in the non-local events, indicate the participating processes. Each event is labelled by a letter in $\Sigma=\{a_{1},a_{2},a_{3},a_{4},b,c,d,e\}$. 
The distributed alphabet $\tilde{\Sigma}=\{\Sigma_{i}\}_{i \in \pset}$, is clear from the diagram, for example $\Sigma_{4}=\{b,e,a_{4}\}$. It is easy to infer causality relation ($\leq$) looking
at the diagram, for example, $e_{4}<e_{5}<e_{6}<e_{9}$.  We can also infer which
events are concurrent (not related by $\leq$): for example, event $e_{7}$ is concurrent to
events $e_{5}, e_{6}, e_{8}$, while the pairs of events $e_{5},e_{6}$ and
$e_{5},e_{8}$ are causally ordered by process $3$ and process $2$, respectively.

Taking a look at the maximum events of each process, we observe that $\Levent{1}{t}=e_{10}$,
$\Levent{2}{t}=e_{10}$, $\Levent{3}{t}=e_{9}$ and $\Levent{4}{t}=e_{11}$. The maximum event
for a process need not be a maximal event in the trace: for instance $\Levent{3}{t} = e_{9} < e_{11} = \Levent{4}{t}$. By definition, if $\Li$ is a $j$-event, then $\Levent{i,j}{t}=\Li$, as in the case of 
$\Levent{1,1}{t} = \Levent{1,2}{t}=\Levent{1}{t}=e_{10}$. If $\Li$ is not a $j$-event, then $\Levent{i,j}{t}= \Y_{j}(\Li) < \Li$: for instance $\Levent{1,3}{t}=\Y_{3}(\Levent{1}{t})
=e_{5}$ and $\Levent{1,4}{t}=\Y_{4}(\Levent{1}{t}) =e_{7}$.

Events of the form $\Y_{i}(e)$ and $\Y_{i,j}(e)$ are also directly visible, e.g.
$\Y_{4}(e_{6})=e_{3}$, $\Y_{3}(e_{8})=e_{5}$ and $\Y_{3,4}(e_{8}) = \Y_{4}(\Y_{3}(e_{8})) =\Y_{4}(e_{5})=e_{3}$. Note that $\Y_{i}(e)$ may not exist for every event,
for instance $\Y_{4}(e_{4})$ does not exist.
\end{example}

\subsection{Syntax and Semantics of $\locpastpdl$}

We first introduce the \emph{past propositional dynamic logic} $\pastpdl$ with the following syntax:
\begin{align*}
    \Phi & ::= \EM_{i}\varphi \mid \Phi\vee\Phi \mid \neg\Phi \\
	\varphi & ::= a \mid \varphi \lor \varphi \mid \neg \varphi \mid \p{\pi}\varphi \\ 
	\pi & ::= \lmove_i \,\mid \test{\varphi} \mid \pi + \pi \mid \pi\cdot\pi \mid \pi^* 
\end{align*}
\emph{Trace formulas} (of the form $\Phi$) are evaluated over traces, hence they define trace languages. \emph{Event formulas} (of the form $\varphi$) are evaluated at an event in a given trace and \emph{path formulas} (of the form $\pi$) are evaluated at a pair of events in a given trace. 

The semantics of $\pastpdl$ is as follows (ommitting the classical boolean connectives):
For a trace $t = (E, \leq, \lambda)$, events $e,f \in E$ and process $i \in \pset$, we let
\begin{align*}
  t &\models\EM_{i}\varphi &&\text{if $E_{i} \neq \emptyset$, and $t,\Li \models \varphi$,}
  \\
  t,e &\models a &&\text{if $\lambda(e)=a$,}
  \\
  t,e &\models\p{\pi}\varphi &&\text{if there exists an event $f$ such that} \\ &&&\text{$t,e,f \models \pi$ and $t,f \models \varphi$,}
  \\
  t,e,f &\models \lmove_i &&\text{if $e$ and $f$ are $i$-events and } \\ &&&\text{$e$ is the immediate successor of $f$ on process $i$,}
  \\
  t,e,f &\models \test{\varphi} &&\text{if $e=f$ and $t,e \models \varphi$,}
  \\
  t,e,f &\models \pi_1 + \pi_2 &&\text{if } t,e,f \models \pi_1 \text{ or } t,e,f
\models \pi_2
  \\
  t,e,f &\models \pi_1 \cdot \pi_2 &&\text{if there exists an event $g$ such that } \\ &&& t,e,g
	\models \pi_1 \text{ and } t,g,f \models \pi_2
  \\
  t,e,f &\models \pi^* &&\text{if there exist events } e=e_{0},e_{1},\ldots,e_{n}=f
  \\
  &&&\text{($n\ge 0$)} \text{ and } t,e_{i},e_{i+1}\models\pi \text{ for each } 0\leq i<n.
\end{align*}

We observe that path formulas $\pi$ can be seen as regular expressions over the (infinite)
alphabet consisting of the left moves $\lmove_i$ ($i\in\pset$) and the test formulas of
the form $\test{\varphi}$.  The left moves in this regular expression are called the
\emph{top-level moves} of $\pi$.  We say that $\pi$ is \emph{$i$-local} for some process
$i\in \pset$ if all its top-level moves are of the form $\lmove_{i}$.\footnote{%
Note that left moves $\lmove_j$ ($j \ne i$) may be used in the full description of an
$i$-local path formula $\pi$, if they occur in other path formulas which are used in event
formulas $\phi$ that are tested with $\test{\phi}$ in the regular expression defining
$\pi$.}
We then define $\locpastpdl$ to be the \emph{local} fragment of $\pastpdl$ where every
path formula is local --- that is, $i$-local for some $i$ (see \cite[Section
3]{AdsulGastinSarkarWeil-CONCUR22}).

\begin{example}
The trace in ~\cref{fig:trace-diagram} satisfies the following trace formula in $\locpastpdl$: 
  $$
  \underbrace{\EM_{1}(\p{\lmove_{1}\cdot\test{(d \lor 
  \p{\lmove_{4}}\True)}\cdot\lmove_{1}}c)}_{\phi_{1}} \lor 
  \underbrace{\EM_{3}(\p{\lmove_{4}}b)}_{\phi_{2}}.
  $$
  $\phi_{2}$ evaluates to $\False$ since if we move one event in the past on the process
  $4$ line from event $\Levent{3}{t} = e_{9}$, we encounter event $e_{7}$ which is not
  labelled $b$.  Now we examine $\phi_{1}$.  If we move one event in the past on the
  process $1$ line, from event $\Levent{1}{t} = e_{10}$, we are at event $e_{7}$.  At
  $e_{7}$ we check that backward movement is possible on process $4$ line, since $e_{3}$
  exists.  Hence the embedded test formula succeeds, and we move back again on process $1$
  line from event $e_{7}$, to see that we reach event $e_{4}$, which is labelled $c$.
  This shows that $\phi_{1}$ evaluates to $\True$ which in turn means the whole formula
  evaluates to $\True$.
\end{example}

The main result of this paper (Theorem~\ref{thm: locpastpdl is expressively complete}
below) is the expresssive completeness of $\locpastpdl$ with respect to regular trace
languages.  The proof uses the expressive completeness
\cite{AdsulGastinSarkarWeil-CONCUR22} of another variant of $\pastpdl$.  First we
consider the variant $\pastpdlconstants$, built on $\pastpdl$ using some additional
constants, defined by the following syntax and semantics.
\begin{align*}
	\Phi & ::= \EM\varphi \mid \Lleq{i}{j} \mid \Lleq{i,j}{k} \mid \Phi\vee\Phi \mid \neg\Phi \\
	\varphi & ::= a \mid \Yleq{i}{j} \mid \Yleq{i,j}{k} \mid \varphi \lor \varphi \mid \neg \varphi \mid \p{\pi} \\ 
	\pi & ::= \lmove_i \,\mid \test{\varphi} \mid \pi + \pi \mid \pi\cdot\pi \mid \pi^* 
\end{align*}
\begin{align*}
  t &\models\EM\varphi &&\text{if $t,e\models\varphi$ for some maximal event $e$ in $t$,}
  \\
  t &\models\Lleq{i}{j} &&\text{if $\Li,\Lj$ exist and 
  $\Li \leq \Lj$,} 
  \\
  t &\models\Lleq{i,j}{k} &&\text{if $\Levent{i,j}{t}, \Levent{k}{t}$ exist and $\Levent{i,j}{t} \leq \Lk$.}
  \\
	t,e &\models\Yleq{i}{j} &&\text{if $\Y_{i}(e),Y_{j}(e)$ exist and } \Y_{i}(e) \leq \Y_{j}(e)  \\
	t,e &\models\Yleq{i,j}{k} &&\text{if $\Y_{i,j}(e),\Y_{k}(e)$ exist and } \Y_{i,j}(e)\leq \Y_{k}(e) \\
	t,e &\models \p{\pi}    &&\text{ if there exists an event } f \text{ such that } t,e,f \models \pi \\
\end{align*}
Finally, we let $\locpastpdlconstants$ be the local fragment of $\pastpdlconstants$.
Notice that, apart from the modalities $\EM_{i}\varphi$
and $\p{\pi}\phi$, $\locpastpdl$ is a fragment of 
$\locpastpdlconstants$.

\begin{example}
  Consider again the trace $t$ from ~\cref{fig:trace-diagram}. It satisfies the following trace formulas in $\locpastpdlconstants$:
  \begin{itemize}
    \item $\Lleq{3}{4}$ since $e_{9} \leq e_{11}$
    \item $\lnot (\Lleq{2}{3})$ since $e_{10}$ is concurrent to $e_{9}$
    \item $\Lleq{2,3}{4}$ since $\Levent{2,3}{t} = \Y_{3}(\Levent{2}{t}) = \Y_{3}(e_{10}) = e_{5}$ which is below $\Levent{4}{t} = e_{11}$
    \item $\EM (\p{\lmove_{4}})$ since there is a maximal event $e_{11}$ from which a back move is possible on process $4$.
  \end{itemize}
  The trace $t$ does not satisfy the formula $\EM b$ as there is no maximal event labelled $b$ (the only maximal events in $t$ are $e_{10}$ 
  and $e_{11}$).
  The event $e_{8}$ in trace $t$, satisfies the following event formulas in  $\locpastpdlconstants$:
  \begin{itemize}
    \item $\Yleq{4}{3}$ since $\Y_{4}(e_{8}) = e_{3} < e_{5} = \Y_{3}(e_{8})$
    \item $\Yleq{2,3}{4}$ since $\Y_{2,3}(e_{8})= e_{3} = \Y_{4}(e_{8})$.
  \end{itemize}
\end{example}

\section{Expressive Completeness of $\locpastpdl$}\label{sec:technical-section}\label{sec:expcomplete}

The main result of this section is the following.

\begin{theorem}\label{thm: locpastpdl is expressively complete}
  A trace language is regular if and only if it can be defined by a $\locpastpdl$ sentence.
\end{theorem}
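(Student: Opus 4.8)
The plan is to derive the theorem from the known expressive completeness of $\locpastpdlconstants$ \cite{AdsulGastinSarkarWeil-CONCUR22}, so that the whole burden rests on relating the two logics. For the easy (``if'') direction I would embed $\locpastpdl$ into $\locpastpdlconstants$: the event modality $\p{\pi}\varphi$ is simply $\p{\pi\cdot\test{\varphi}}$, and the trace modality $\EM_i\varphi$ is definable from $\EM$ together with the constants, since a maximal event $m$ is the last event of each process in $\loc(m)$, and the last $i$-event $\Li$ coincides with $\Y_i(m)$ for any maximal $m$ lying above it (or with $m$ itself when $m$ is an $i$-event). As $\locpastpdlconstants$-definable languages are regular, every $\locpastpdl$ sentence then defines a regular language. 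The real difficulty is the converse: given a regular language, expressive completeness of $\locpastpdlconstants$ yields a defining sentence using the four families of constants $\Yleq{i}{j}$, $\Yleq{i,j}{k}$, $\Lleq{i}{j}$ and $\Lleq{i,j}{k}$, and it suffices to show that each of these is already definable in $\locpastpdl$.

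To eliminate the constants I would proceed through a chain of reductions. First I would introduce special deterministic path formulas (\sdpath formulas): deterministic, boundedly many path formulas that address each relevant leading event --- $\Y_i(e)$, $\Y_{i,j}(e)$, and the global events $\Li$, $\Levent{i,j}{t}$ reached from maximal events --- so that the trace-level constants are subsumed by the event-level ones and every comparison becomes a comparison between two events $\pi(e)$ and $\pi'(e)$ addressed by \sdpath formulas. Second, I would reduce causal ordering to equality: for the representative constant $\Yleq{i}{j}$, since $\Y_i(e)$ is the last $i$-event below $e$ and $\Y_j(e)<e$, we have $\Y_i(e)\le\Y_j(e)$ iff $\Y_i(e)$ is the last $i$-event in $\da\Y_j(e)$, i.e.\ iff navigating from $\Y_j(e)$ to its last $i$-predecessor lands back on $\Y_i(e)$. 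The other three families reduce analogously. This turns every ``$\le$'' comparison into an equality test between two events addressed by (possibly longer) \sdpath formulas.

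The heart of the argument is then to express equality of $e$ and $\pi(e)$, the event addressed from $e$ via an \sdpath $\pi$. Here I would construct, for each non-trivial \sdpath $\pi$, a finite set $\Xi(\pi)$ of \emph{separating} event formulas with the property that, whenever $\pi(e)<e$, at least one $\xi\in\Xi(\pi)$ holds at exactly one of the two events. Equality $\pi(e)=e$ is then expressible inside $\locpastpdl$ as the statement that $e$ and $\pi(e)$ agree on every member of $\Xi(\pi)$, each conjunct having the form $\xi\leftrightarrow\p{\pi}\xi$. The crucial requirement is that $\Xi(\pi)$ be \emph{uniform and bounded}: a single finite family, computed from $\pi$, that correctly separates $e$ from $\pi(e)$ simultaneously across all traces and all events $e$.

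I expect the construction and correctness of the separating families $\Xi(\pi)$ to be the main obstacle. Determining the relative causal order of leading events of different processes is exactly the problem solved, in the automata world, by the gossip automaton and its bounded time-stamping; reproducing this information purely within local, past, constant-free PDL --- with no counting and no time-stamps --- requires a delicate combinatorial analysis of how the two \sdpath navigations interleave in the causal past. Verifying that finitely many separating formulas suffice, and that the entire translation stays inside the \emph{local} fragment (every path formula remaining $i$-local for some $i$), is where I anticipate the genuine work to lie.
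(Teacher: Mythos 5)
Your roadmap coincides exactly with the paper's: invoke the expressive completeness of $\locpastpdlconstants$, address leading events by \sdpath formulas, reduce causal ordering to equality of \sdpath-addressed events, and certify equality by finite separating families $\Xi(\pi)$. But the proposal stops precisely where the proof begins: you never construct $\Xi(\pi)$, nor give any reason why finitely many constant-free, local, past formulas should separate $e$ from $\pi(e)$ uniformly over all traces and events --- and you acknowledge as much. This is not a routine verification that can be waved through; it is the technical core of the theorem. The paper's construction rests on two nontrivial ingredients: (i) event formulas $\same{j}{\pi}$, built by a delicate structural induction on $\pi$, expressing that $\pi$ returns the \emph{same} value at the current event and at the previous $j$-event satisfying $\p{\pi}$; and (ii) a combinatorial lemma showing that one cannot have $\length{\pi}+1$ events $e_0<\dots<e_n$ with strictly increasing $\pi$-images unless $e_0\leq\pi(e_n)$ --- whence at most $\length{\pi}$ events strictly between $\pi(e)$ and $e$ change the value of $\pi$, so that counting them modulo $\length{\pi}+1$ (formulas $\Mod{k,n}{\pi}$, themselves definable in $\locpastpdl$ thanks to the Kleene star) separates $e$ from $\pi(e)$. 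Without an argument of this kind the chain of reductions does not terminate and the theorem is not proved.

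Two of the reductions you do sketch also need repair. First, the equality you propose to express, ``$\pi(e)=e$ iff $e$ and $\pi(e)$ agree on $\Xi(\pi)$, each conjunct of the form $\xi\leftrightarrow\p{\pi}\xi$'', is the wrong target: for an \sdpath formula of positive length one always has $\pi(e)<e$, so that equality is identically false. What must be expressed is $\pi_1(e)=\pi_2(e)$ for two \sdpath formulas rooted at the \emph{same} $e$, via a formula of the shape $\neg\bigvee_{\xi}\bigl(\p{\pi_1}\xi\wedge\p{\pi_2}\neg\xi\bigr)$, where $\xi$ ranges over the union of the separating sets of all bounded-length candidate paths connecting $\pi_2(e)$ to $\pi_1(e)$ (these candidates are supplied by the order-to-equality lemma, and one must first dispose of the case $\loc(\pi_1(e))\not\subseteq\loc(\pi_2(e))$, where the two events need not be comparable, using the formulas $\on{\ell}$). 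Second, in reducing $\Yleq{i}{j}$, an \sdpath formula ending in $\test{\on{i}}$ addresses \emph{some} $i$-event in the strict past of $e$, not necessarily $\Y_i(e)$; your single ``navigate back from $\Y_j(e)$'' test must be replaced by an exists/forall combination (there is an \sdpath-addressed $j$-event dominating \emph{every} \sdpath-addressed $i$-event). Both points are fixable, but the decisive missing piece remains the construction of the separating families.
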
 

The proof relies in part on the following statement \cite{AdsulGastinSarkarWeil-CONCUR22}.

\begin{theorem}\label{concur22 theorem}
  A trace language is regular if and only if it can be defined by a sentence in $\locpastpdlconstants$.
\end{theorem}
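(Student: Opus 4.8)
This is an equivalence, and I would establish the two implications by quite different means: the direction from logic to regularity is routine, whereas the converse is the substance of the statement.

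\smallskip
\noindent\textbf{From definability to regularity.} The plan is to translate every $\locpastpdlconstants$ sentence into an equivalent $\mso$ sentence and then invoke Thomas's theorem \cite{tho90traces}, by which the $\mso$-definable trace languages are exactly the regular ones. The translation is a simultaneous structural induction producing, for each event formula a unary $\mso$ formula, for each path formula a binary $\mso$ formula, and for each trace formula an $\mso$ sentence. The only points requiring care are the Kleene star, whose relation is the reflexive--transitive closure of the relation for $\pi$ and is $\mso$-definable by the usual ``smallest forward-closed set'' encoding, and the constants: the events $\Y_{i}(e)$, $\Y_{i,j}(e)$, $\Li$ and $\Levent{i,j}{t}$ are all $\mso$-definable as maximal $i$-events in suitable pasts, so the comparisons $\Yleq{i}{j}$, $\Yleq{i,j}{k}$, $\Lleq{i}{j}$, $\Lleq{i,j}{k}$ translate verbatim.

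\smallskip
\noindent\textbf{From regularity to definability.} Here I would start from a deterministic asynchronous automaton $\A$ recognizing the language, which exists by Zielonka's theorem \cite{Zielonka-RAIRO-TAI-87}, with a finite local state set for each process. The goal is to build, for every process $i$ and every local state $q$ of $i$, an event formula $\mathsf{state}^{q}_{i}$ that holds at an event $e$ precisely when, after $\A$ has read the prefix $\da e$, process $i$ is in state $q$. Since $\A$ is deterministic and asynchronous, the state of $i$ after an event $e$ with $i\in\loc(e)$ is a function of the letter $\lambda(e)$ and of the states of the processes in $\loc(e)$ immediately before $e$, that is, of their states after the events $\Y_{j}(e)$ (or the initial states, when these do not exist); each such $\Y_{j}(e)$ is reached from $e$ by the local move $\lmove_{j}$. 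This gives a well-founded system of mutually recursive constraints among the $\mathsf{state}^{q}_{i}$, strictly descending into the past. A trace formula then combines $\EM$ with the $\Last$-comparison constants to locate the final event of each process and read off the tuple of final local states, checking the global acceptance condition of $\A$.

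\smallskip
The obstacle, and the reason the constants are present, lies in turning that well-founded but a priori unbounded recursion into genuine formulas of finite size. Unfolding the state of process $i$ along its own line is a single local iteration $\lmove_{i}^{*}$ that a path formula can perform; but at each synchronization event on that line, the transition of $i$ consumes the current state of a partner process $j$, whose value is itself the outcome of the same unbounded computation carried out on a different process line. Naively this produces a cross-process recursion whose depth grows with the trace, which no fixed formula with process-local stars can express. The point is that the information actually needed to resolve each synchronization is bounded: it is exactly the relative causal ordering of the latest events that the synchronizing processes have recorded about one another --- the Mukund--Sohoni gossip \cite{MukundSohoni-DC97}. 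This ordering is supplied, as atomic predicates, by the comparison constants $\Yleq{i}{j}$ and $\Yleq{i,j}{k}$ (together with their $\Last$-counterparts), and it is precisely their availability that collapses the unbounded recursion to a bounded-depth computation and makes the $\mathsf{state}^{q}_{i}$ definable. Establishing this collapse rigorously --- showing that the gossip ordering suffices to reconstruct every synchronization merge from within the logic, and that the resulting finite system of formulas is correct --- is the hard technical core, and its dependence on the constants at exactly this point is what the present paper must later circumvent.
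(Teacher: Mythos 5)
You should first note that the paper does not actually prove this statement: it is imported wholesale from \cite{AdsulGastinSarkarWeil-CONCUR22}, so the relevant benchmark is that cited proof, which --- as this paper's introduction recalls --- crucially runs through Mukund and Sohoni's gossip automaton \cite{MukundSohoni-DC97} and an asynchronous cascade architecture. Your left-to-right direction (translation into $\mso$ plus the equivalence of $\mso$-definability and regularity for traces) is correct and is exactly the routine argument alluded to in this paper's proof of \cref{thm: locpastpdl is expressively complete}; no complaint there.

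The right-to-left direction, however, contains a genuine gap, and you partly acknowledge it yourself. The predicates $\mathsf{state}^{q}_{i}$ cannot be obtained from ``a well-founded system of mutually recursive constraints strictly descending into the past'': formulas are finite, static objects, and your system is circular \emph{at the level of formulas} --- $\mathsf{state}^{q}_{i}$ at an event $e$ tests $\mathsf{state}^{q'}_{j}$ at $\Y_{j}(e)$, which in turn tests $\mathsf{state}^{q''}_{i}$ at still earlier events --- so although the semantic recursion is well-founded on every individual trace (its depth growing with the trace), there is no induction along which the formulas themselves can be constructed. You then assert that the constants $\Yleq{i}{j}$ and $\Yleq{i,j}{k}$ ``collapse the unbounded recursion to a bounded-depth computation,'' but that assertion \emph{is} the content of the theorem, and you explicitly defer it as ``the hard technical core.'' Knowing the causal ordering of the $\Y$-events tells you how to merge the two processes' views at a synchronization \emph{provided you already possess formulas computing the pre-merge states} --- which is precisely what is being defined, so the circularity is untouched. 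In the cited proof this is broken not by the constants alone but by a stratified architecture: the gossip information (supplied in the logic by the constants) sits at the bottom, each further level is a device localized at a single process reading labels produced by lower levels, and each level is translated into the logic by local $\lmove_{i}$-iterations whose tests refer only to already-constructed formulas; no such stratification appears in your sketch. (Two smaller unaddressed points: $\locpastpdlconstants$ requires every path formula to be local, so your cross-process unfoldings must be packaged through nested tests; and its trace layer has only $\EM\varphi$, not $\EM_{i}\varphi$, so ``locating the final event of each process'' itself needs an argument via the $\Last$-constants.) Identifying where the difficulty lies is valuable, but it is not the same as resolving it, so the proposal does not establish the hard direction.
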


\begin{proof}[Proof of Theorem~\ref{thm: locpastpdl is expressively complete}]
  First, as in \cite[Proposition 1]{AdsulGastinSarkarWeil-CONCUR22}, it is easy to see 
  that for all $\locpastpdl$ sentence $\Phi$ we can construct an equivalent MSO sentence 
  $\overline{\Phi}$. Hence, $\locpastpdl$ sentences define regular trace languages.
  For the converse, by~\cref{concur22 theorem}, we only need to prove that the additional
  formulas in $\locpastpdlconstants$ are definable in $\locpastpdl$.  First observe that
  the modality $\EM\varphi$ is equivalent to
  $$\bigvee \limits_{i \in \pset} \left (\EM_i
  \varphi \land \neg \left(\bigvee \limits_{j\in\pset} \Llt{i}{j}\right) \right),$$
  where
  $\Llt{i}{j}=(\Lleq{i}{j})\wedge\neg(\Lleq{j}{i})$. Similarly, formula $\p{\pi}$ is equivalent to formula $\p{\pi}\True$. As a result, we only need to show that the constant formulas of $\locpastpdlconstants$ can be expressed in $\locpastpdl$, and this is the objective of the rest of this section. The constant event formulas $\Yleq{i}{j}$ and $\Yleq{i,j}{k}$ are dealt with in~\cref{Yconstants theorem}, and the constant trace formulas $\Lleq{i}{j}$ and $\Lleq{i,j}{k}$ are dealt with in~\cref{Lconstants theorem}.    
\end{proof}

Towards completing the proof and establishing~\cref{Yconstants theorem,Lconstants theorem}, we introduce the class of \emph{\sdpath formulas}. These are particular path formulas which uniquely identify certain past events, such as those of the form $\Y_i(e)$ (\cref{lem:Y_i and sdpath}). The constants $\Yleq{i}{j}$ and $\Yleq{i,j}{k}$ can then be expressed by formulas involving inequalities of \sdpath formulas (\cref{cor:comp-with-sdpaths1,cor:comp-with-sdpaths2}). These inequalities are then expressed by formulas involving equalities of \sdpath formulas (\cref{subsubsec:reduceIneqtoEq}), and finally by $\locpastpdl$-formulas (\cref{subsubsec:redEqtoSep}).

The last step uses what we call \emph{separating} $\locpastpdl$ event formulas. The construction of these formulas (\cref{subsubsec: construction separating}), while technical, is entirely novel and it replaces the use of timestamping (as in Zielonka's theorem \cite{Zielonka-RAIRO-TAI-87}) or the gossip automaton (in Mukund and Sohoni's work \cite{MukundSohoni-DC97}).

The constants $\Lleq{i}{j}$ and $\Lleq{i,j}{k}$ are expressed as $\locpastpdl$ trace
formulas in a similar fashion, see~\cref{subsec:eliminateTraceConstants}.
 
If $\pset$ is a singleton, $\pset=\{ \ell \}$, then $\Yleq{\ell}{\ell}$ is equivalent to $\p{\lmove_{\ell}}\True$ 
and $\Yleq{\ell,\ell}{\ell}$ is equivalent 
to $\p{\lmove_{\ell} \cdot \lmove_{\ell}}\True$. Further
both $\Lleq{\ell}{\ell}$ and $\Lleq{\ell, \ell}{\ell}$ are equivalent
to $\EM_{\ell} \True$. Therefore
$\locpastpdl$ and $\locpastpdlconstants$ are clearly equally expressive. In the remainder of this section, we assume that $|\pset| > 1$.

\subsection{Deterministic Path Formulas and their Properties}\label{subsec:detpathproperties}

Let us first set some notation and definitions.
A disjunction $\bigvee_{a\in\Gamma}a$ of letters from some $\Gamma\subseteq\Sigma$ is
called an \emph{atomic event formula}.
Note that Boolean combinations of atomic event formulas are (equivalent to) atomic event formulas. In particular, $\True$ and $\False$ are atomic event formulas:
$\True=\bigvee_{a\in\Sigma}a$ and $\False=\bigvee_{a\in\emptyset}a$.  Also, if $i\in
\pset$, then $\on{i}=\bigvee_{a\in\Sigma_{i}}a$ is an atomic event formula, which tests
whether an event is on process $i$.

If $\phi$ is an event formula, we let $\prevon{i}{\varphi}$ be the path formula $\lmove_i\cdot (\test{\neg\varphi}\cdot\lmove_{i})^{*}\cdot \test{\varphi}$. Then $t, e, f \models \prevon{i}{\varphi}$ if $e$ is an $i$-event and $f$ is the maximum $i$-event in the strict past of $e$ which satisfies $\phi$. For instance, $\prevon{i}{\on{i}}$ is equivalent to $\lmove_{i}$.

Finally, we say that a path formula $\pi$ is \emph{deterministic} if, for each trace $t$ and each event $e$
in $t$, there exists at most one event $f$ in $t$ such that $t,e,f\models\pi$.  We write
$\pi(e)=f$ when such an event $f$ exists. It is easily verified that any path formula of the form $\test\phi$, $\lmove_i$ or $\prevon{i}{\varphi}$ is deterministic, and that, if $\pi$ and $\pi'$ are deterministic, then so is $\pi\cdot\pi'$.

We now introduce the class of \emph{simple deterministic} path formulas 
(\sdpath formulas), whose syntax is the following:
$$\pi ::= \test{\varphi} \mid \prevon{i}{\varphi} \mid \pi\cdot\pi,$$
where $\varphi$ is an atomic event formula and $i \in \pset$.
By definition, an \sdpath formula $\pi$ can be seen as a word on the (finite) alphabet consisting of
tests $\test\phi$ and local path formulas $\prevon{i}{\phi}$, where $\phi$ is atomic. We then define the
\emph{length} of $\pi$ to be the number $\length{\pi}$ of ``letters'' of the form
$\prevon{i}{\phi}$ in $\pi$.  More formally, we let $\length{\test{\varphi}}=0$,
$\length{\prevon{i}{\phi}}=1$ and $\length{\pi\cdot\pi'}=\length{\pi}+\length{\pi'}$.

We record several important properties of \sdpath formulas in the following lemma.

\begin{lemma} \label{lem:properties}
Let $\pi$ be an \sdpath formula. Then
\begin{enumerate}
  \item $\pi$ is deterministic.

  \item $\pi$ is \emph{monotone}: %
  for each trace $t$ and for all events $e,e'$ in $t$, if $e\leq e'$ and both $\pi(e)$ and
  $\pi(e')$ exist, then $\pi(e)\leq\pi(e')$.

  \item The event formula $\p{\pi}$ is equivalent to a $\locpastpdl$ formula.
\end{enumerate}
\end{lemma}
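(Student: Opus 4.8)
The three claims concern an \sdpath formula $\pi$, which by definition is a concatenation $\pi = \alpha_1 \cdot \alpha_2 \cdots \alpha_m$ where each $\alpha_k$ is either a test $\test{\varphi}$ or a local step $\prevon{i}{\varphi}$ with $\varphi$ atomic. The natural strategy throughout is induction on this concatenation structure (equivalently, on the length of the defining word), using the single-step formulas $\test{\varphi}$ and $\prevon{i}{\varphi}$ as base cases and composition under $\cdot$ as the inductive step.

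Let me sketch each part. For part (1), determinism, I would argue by induction on the word structure. The base cases are immediate: $\test{\varphi}$ forces $f = e$, and $\prevon{i}{\varphi}$ selects $f$ as the maximum $i$-event in the strict past of $e$ satisfying $\varphi$, which is unique since $E_i$ is totally ordered by $\leq$. The inductive step is exactly the closure property already noted in the text just before the lemma: if $\pi$ and $\pi'$ are deterministic then so is $\pi \cdot \pi'$, because $\pi\cdot\pi'$ evaluated at $e$ first computes the unique $\pi(e)$ (if it exists) and then the unique $\pi'(\pi(e))$. So part (1) is essentially a restatement of facts already established, packaged by induction.

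For part (2), monotonicity, I would again induct on the factorization. The key is to prove monotonicity for the two kinds of single steps and then observe that monotonicity is preserved by composition. For $\test{\varphi}$ this is trivial since $\pi$ is the identity where defined. For $\prevon{i}{\varphi}$, suppose $e \leq e'$ and both $\prevon{i}{\varphi}(e)$ and $\prevon{i}{\varphi}(e')$ exist; the target of $e$ is the largest $i$-event strictly below $e$ satisfying $\varphi$, and since $e \leq e'$ every such candidate below $e$ is also strictly below $e'$, so the maximum taken for $e$ is at most the maximum taken for $e'$ — here I use that $E_i$ is totally ordered and that $\da e \subseteq \da e'$. For the composition $\pi = \pi_1 \cdot \pi_2$: if $e \leq e'$ and both $\pi(e), \pi(e')$ exist, then (by determinism) $\pi_1(e)$ and $\pi_1(e')$ both exist and $\pi_1(e) \leq \pi_1(e')$ by the inductive hypothesis; applying the inductive hypothesis for $\pi_2$ to this inequality (both $\pi_2(\pi_1(e)) = \pi(e)$ and $\pi_2(\pi_1(e'))=\pi(e')$ exist) yields $\pi(e) \leq \pi(e')$. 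This composition argument is the crux and is where one must be careful to invoke determinism (from part (1)) to guarantee the intermediate events exist and are well-defined before chaining the monotonicity inequalities.

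For part (3), that $\p{\pi}$ is equivalent to a $\locpastpdl$ event formula, the content is that every \sdpath formula is itself already a local path formula in the sense of $\locpastpdl$ — or at least reduces to one — so that $\p{\pi}$ (equivalently $\p{\pi}\True$, as noted in the main proof) lies in the logic. Each factor $\prevon{i}{\varphi}$ unfolds to $\lmove_i \cdot (\test{\neg\varphi}\cdot\lmove_i)^* \cdot \test{\varphi}$, whose top-level moves are all $\lmove_i$, hence this factor is $i$-local; and $\test{\varphi}$ has no top-level moves at all. A general \sdpath formula concatenates such factors with possibly different processes $i$, so $\pi$ need not be $i$-local for a single $i$. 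The point to verify is that $\p{\pi}$ can nonetheless be rewritten as a $\locpastpdl$ formula: I would push the diamond inward, writing $\p{\pi_1 \cdot \pi_2}\psi$ as $\p{\pi_1}\p{\pi_2}\psi$ and iterating, so that each diamond carries only a single local factor $\prevon{i}{\varphi}$ (which is $i$-local) or a test; the atomic event formulas $\varphi$ appearing inside the tests are already $\locpastpdl$ event formulas. The main obstacle, and the step deserving the most care, is precisely this decomposition: one must check that distributing $\p{\cdot}$ over concatenation is sound (it is, by the semantics of $\pi_1 \cdot \pi_2$ and the existential reading of $\p{\cdot}$) and that the resulting nested formula meets the locality restriction of $\locpastpdl$, namely that each path subformula occurring under a single diamond is local. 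Granting this, part (3) follows by structural induction on $\pi$.
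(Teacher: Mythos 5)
Your proposal is correct and follows essentially the same route as the paper: parts (1) and (2) by structural induction on the concatenation structure (with the composition step for monotonicity relying on determinism, exactly as needed), and part (3) by distributing the diamond over concatenation, $\p{\pi_1\cdot\pi_2}\psi \equiv \p{\pi_1}\p{\pi_2}\psi$, and observing that each factor $\test{\varphi}$ and $\prevon{i}{\varphi}$ is a local path formula. The paper's proof is only a one-line sketch of the same argument.
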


\begin{proof}
The first two statements are easily proved by structural induction on $\pi$. The last statement is proved using the facts that $\p{\pi}$ is equivalent to $\p{\pi}\test\True$, that the \sdpath formulas $\test{\varphi}$ and $\prevon{i}{\varphi}$ are local, and that $\p{\pi\cdot\pi'}=\p{\pi}\p{\pi'}=\p{\pi\cdot\test{\p{\pi'}}}$.
\end{proof}

\begin{example}
  Note that \sdpath formulas used to address past events are not necessarily unique. Referring to ~\cref{fig:trace-diagram}, it is easy to verify the following examples.
  \begin{align*}
    \Y_{1}(e_{8}) &= (\prevon{2}{\True}\cdot\prevon{2}{\True}\cdot\test{\on{1}}) (e_{8})=e_{4} \\
    \Y_{1}(e_{8}) &= (\prevon{2}{\on{1}}) (e_{8}) = e_{4}\\
    \Y_{2}(e_{8}) &= (\prevon{2}{\True}\cdot\test{\on{2}}) (e_{8})= e_{5}\\
    e_{1} &= (\prevon{1}{\True}\cdot\prevon{1}{\True}) (e_{4})\\
    \Y_{3}(e_{8})&= (\prevon{2}{\True}\cdot\test{\on{3}})(e_{8}) =e_{5}\\
    \Y_{4}(e_8) &= (\prevon{2}{\on{3}}\cdot\prevon{3}{\on{4}})(e_{8}) =e_{3}
  \end{align*}
\end{example}

We say that a \sdpath formula $\pi$ is in \emph{standardized form} if it is of the form
$$\test{\phi'_{0}}\cdot\prevon{i_{1}}{\phi_{1}}\cdot\test{\phi'_{1}}
\cdot\cdots\cdot\test{\phi'_{n-1}}\cdot\prevon{i_{n}}{\phi_{n}}\cdot\test{\phi'_{n}}.$$
Since the path formulas $\prevon{i_1}{\phi_{1}} \cdot \prevon{i_2}{\phi_{2}}$ and
$\test{\phi}\cdot\test{\psi}$ are equivalent to $\prevon{i_1}{\phi_{1}} \cdot \test{\top}
\cdot \prevon{i_2}{\phi_{2}}$ and $\test{(\phi \land\psi)}$, respectively, we see 
that every \sdpath formula is equivalent to one in standardized form, of the same
length.
The following easy lemma will be useful in the sequel.

\begin{lemma}
For any $n \geq 0$, there are only finitely many logically distinct 
\sdpath formulas of length $n$.
\end{lemma}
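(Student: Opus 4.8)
The plan is to reduce immediately to the standardized form just introduced and then to observe that such forms are assembled from a finite menu of building blocks. By the remark preceding the statement, every \sdpath formula of length $n$ is equivalent to one in standardized form of the same length, so it suffices to bound the number of logically distinct standardized forms of length $n$.

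First I would count the available atoms. The components occurring in a standardized form are of the shape $\test{\phi}$ or $\prevon{i}{\phi}$, where $i \in \pset$ and $\phi$ is an atomic event formula. The set $\pset$ is finite by assumption, and there are only finitely many atomic event formulas up to logical equivalence: each is equivalent to $\bigvee_{a \in \Gamma} a$ for a unique $\Gamma \subseteq \Sigma$, giving at most $2^{|\Sigma|}$ equivalence classes. Hence there are at most $2^{|\Sigma|}$ logically distinct tests $\test{\phi}$ and at most $|\pset| \cdot 2^{|\Sigma|}$ logically distinct localized moves $\prevon{i}{\phi}$.

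A standardized form of length $n$ has a fixed syntactic skeleton, consisting of exactly $n$ localized moves $\prevon{i_k}{\phi_k}$ interleaved with $n+1$ tests $\test{\phi'_k}$; it is therefore determined, up to logical equivalence, by the tuple $(\phi'_0, i_1, \phi_1, \ldots, i_n, \phi_n, \phi'_n)$ of its atomic parameters, each drawn from one of the finite sets above. Consequently the number of logically distinct standardized forms of length $n$ is at most $(2^{|\Sigma|})^{n+1} \cdot (|\pset| \cdot 2^{|\Sigma|})^{n}$, which is finite, and the claim follows. There is essentially no obstacle here; the only point needing a moment's care is that the count must be taken over logical equivalence classes rather than raw syntax, which is handled by the finiteness of atomic event formulas up to equivalence — passing to equivalence classes can only decrease the count.
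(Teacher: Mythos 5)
Your proof is correct and follows essentially the same route as the paper's: reduce to standardized form (of the same length), then observe that the skeleton of such a form is determined by finitely many parameters — processes from the finite set $\pset$ and atomic event formulas indexed by subsets of $\Sigma$. The only difference is that you make the counting bound explicit, which the paper leaves implicit.
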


\begin{proof}
Let $\pi$ be an \sdpath formula of length $n$. We may assume that 
$\pi$ is in standardized form:
$$\pi = \test{\phi'_{0}}\cdot\prevon{i_{1}}{\phi_{1}}\cdot\test{\phi'_{1}}
\cdot\cdots\cdot\test{\phi'_{n-1}}\cdot\prevon{i_{n}}{\phi_{n}}\cdot\test{\phi'_{n}}.$$
The $i_j$ range over the finite set $\pset$.  And the $\phi_i$ and $\phi'_i$ are atomic
event formulas, and hence range over a finite set (the power set of $\Sigma$).  The result
follows directly.
\end{proof}

The link between the constants $\Yleq{i}{j}$, $\Yleq{i,j}{k}$ and the \sdpath formulas is 
given by \cref{lem:Y_i and sdpath}. 
 
\begin{proposition}\label{lem:Y_i and sdpath}
Let $t$ be a trace $t$, $e$ an event and $i$ a process such that $\Y_{i}(e)$ exists.  Then
$\Y_{i}(e)=(\pi\cdot\test{\on{i}})(e)$ for some \sdpath formula $\pi$ satisfying $1 \le
\length\pi < |\pset|$.  %
\end{proposition}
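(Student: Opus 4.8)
The plan is to exhibit a concrete \sdpath from $e$ down to $\Y_i(e)$ whose ``letters'' are jumps of the form $\prevon{a}{\on{b}}$, and to argue that the processes visited along the way can be taken pairwise distinct; this immediately yields the length bound $\length\pi \le |\pset|-1 < |\pset|$. Since $\Y_i(e)<e$ strictly, at least one move is always needed, so $\length\pi \ge 1$ comes for free. I would first dispose of the easy case $i\in\loc(e)$: here $\Y_i(e)$ is just the previous $i$-event, so $\pi=\prevon{i}{\on{i}}$ works, has length $1<|\pset|$, and lands on an $i$-event, so the trailing $\test{\on{i}}$ succeeds. From now on assume $i\notin\loc(e)$.

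The engine of the construction is a single reduction step. Write $f=\Y_i(e)$. Using the trace axioms (a covering chain from $f$ to $e$ ends at $e$ on some process), one sees there is a process $a\in\loc(e)$, necessarily $a\neq i$, with $f\le\Y_a(e)$; since $f$ is the maximum $i$-event below $e$, this forces $f$ to be the latest $i$-event weakly below $\Y_a(e)$. I then travel along process $a$ in a single jump to $e'=\prevon{a}{\on{b}}(e)$ for a well-chosen ``next'' process $b$, arranging that $f\le e'$ and that $f$ is still the latest $i$-event weakly below $e'$. Thus either $e'=f$ (an $i$-event, taking $b=i$, and we stop) or $i\notin\loc(e')$ with $f=\Y_i(e')$, so the same problem recurs at the strictly smaller event $e'$. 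Determinism and monotonicity of \sdpath formulas (\cref{lem:properties}) guarantee that the composed path evaluates to $f$; the delicate point is the precise choice of $b$ ensuring $f\le e'$ while making progress toward $i$.

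The crux, and the step I expect to be the main obstacle, is to arrange these choices so that the visited process sequence $a_1,a_2,\dots$ is repetition-free, for this is exactly what bounds the length. The danger is that a naive one-event-at-a-time descent (always taking $e'=\Y_a(e)$) can remain on, or return to, the same process arbitrarily often. To prevent this I would, at each stage, travel along the chosen process as far back as it remains useful in one $\prevon{a}{\on{b}}$ move, intuitively following the ``freshest information about $i$'' in the spirit of Mukund--Sohoni's latest-information analysis, and then show that this process is never needed again. Concretely, I would make the \emph{extremal} choice landing on the latest event $e'$ still satisfying $f=\Y_i(e')$, and prove by contradiction, using the total order on each $E_p$ together with monotonicity, that if some process $p$ were revisited then the earlier jump on $p$ could already have reached the later target, contradicting extremality.

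Finally I would assemble the pieces. The reduction terminates because the current event strictly decreases; the repetition-free guarantee produces a process sequence $a_1\in\loc(e),a_2,\dots,a_r,a_{r+1}=i$ of at most $|\pset|$ distinct entries; and the resulting \sdpath $\pi=\prevon{a_1}{\on{a_2}}\cdots\prevon{a_r}{\on{a_{r+1}}}$ satisfies $\pi(e)=\Y_i(e)$ with $1\le\length\pi=r\le|\pset|-1<|\pset|$, while appending $\test{\on{i}}$ is harmless since $\Y_i(e)$ is an $i$-event. The only genuinely subtle ingredient is the distinctness argument; an alternative to the extremal proof would be an induction on $|\Da e|$ strengthened to produce a path that \emph{avoids a prescribed already-used process}, which is precisely what would allow the recursive sub-path to be concatenated with the first jump while preserving distinctness.
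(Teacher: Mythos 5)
You have the right target---a chain of jumps $\prevon{a_1}{\on{a_2}}\cdots\prevon{a_r}{\on{a_{r+1}}}$ through pairwise distinct processes ending at $i$---and you correctly locate the crux, but you do not prove it: the repetition-freeness of the process sequence, which is the \emph{only} source of the bound $\length{\pi}<|\pset|$, is deferred to an ``extremal choice'' argument that is never carried out and whose natural readings fail. In your top-down recursion the invariant you maintain is just $f=\Y_i(e')$ for the \emph{current} event $e'$, so the intermediate events are re-anchored at every stage and nothing prevents two stages from using the same process. Worse, your rule ``land on the latest event $e'$ still satisfying $f=\Y_i(e')$'' degenerates, on each line $a$, to $e'=\prevon{a}{\on{a}}(g)=\Y_a(g)$, i.e.\ exactly the one-event-at-a-time descent that you yourself flag as capable of revisiting a process arbitrarily often (and across different lines $a$ the candidate landing points need not be comparable, so ``latest'' is not even well defined); elsewhere you say to travel ``as far back as possible'', which is the opposite rule, and no argument is given that either choice makes the claimed contradiction (``the earlier jump on $p$ could already have reached the later target'') go through. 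The fallback of an induction that avoids a prescribed already-used process also does not obviously work: avoiding one process does not give distinctness of the whole sequence, and the first jump from the current event must use a process of its location, which may be precisely the forbidden one. So the length bound---the actual content of the proposition---remains unproved.

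The paper sidesteps the difficulty by building the data bottom-up and anchoring every intermediate event at the \emph{original} $e$: it constructs $f=e_1<\cdots<e_m<e$ and processes $i=i_1,\dots,i_m$ with $e_j=\Y_{i_j}(e)$ for all $j$, extending by choosing $e_n\lessdot e'\le e$, $i_{n+1}\in\loc(e_n)\cap\loc(e')$ and $e_{n+1}=\Y_{i_{n+1}}(e)$ whenever $\loc(e_n)\cap\loc(e)=\emptyset$. Distinctness is then immediate ($i_{n+1}=i_j$ would force $e_{n+1}=\Y_{i_j}(e)=e_j\le e_n<e_{n+1}$), each jump is automatically correct because $\Y_{i_{j-1}}(e)=e_{j-1}$ leaves no $i_{j-1}$-event between $e_{j-1}$ and $e$, hence none between $e_{j-1}$ and $e_j$, so $\prevon{i_j}{\on{i_{j-1}}}(e_j)=e_{j-1}$, and the bound $m<|\pset|$ falls out because the extra process $\ell\in\loc(e_m)\cap\loc(e)$ is disjoint from all the $i_j$. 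If you want to rescue your top-down version, the missing idea is to strengthen your invariant to ``the current event equals $\Y_p(e)$ for the original $e$ and the process $p$ just used''---at which point you have reconstructed the paper's proof read backwards.
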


\begin{proof}
Suppose that $f= \Y_{i}(e)$ exists. By definition of $\Y_{i}$, we have $f < e$. We construct 
sequences of $m$ events $f=e_1 < e_2 < e_3 < \cdots < e_m < e$ and $m$ {\em distinct} 
processes $i=i_1, i_2, i_3 \ldots, i_m$ 
such that
\begin{enumerate}
	\item For $1 \leq j \leq m$, $\Y_{i_{j}}(e) = e_j$. 
\item For $1 < j \leq m$, $e_{j-{1}}$ and $e_{j}$ are $i_j$-events, and
	$\loc(e_{j-{1}})\cap\loc(e)=\emptyset$.
\item There is a process which participates in both $e_m$ and $e$, that is, $\loc(e_m) \cap \loc(e) \neq \emptyset$.
\end{enumerate}
We begin the construction by letting $e_1 = f$ and $i_1 = i$.  In particular, we have
$Y_{i_1}(e) = e_1$.

Suppose that we have constructed a length $n$ sequence of events
$e_1 < e_2 < e_3 < \cdots < e_n < e$ and a length $n$ sequence of
distinct processes
$i_1, i_2, i_3 \ldots, i_n$ such that
(a) for $1 \leq j \leq n$, $\Y_{i_{j}}(e) = e_j$ and
(b) for $1 < j \leq n$, $e_{j-{1}}$ and $e_{j}$ are $i_j$-events, and
$\loc(e_{j-{1}})\cap\loc(e)=\emptyset$.
If $\loc(e_n) \cap \loc(e) \ne \emptyset$, we let $m = n$ and we are done.

If instead $\loc(e_n) \cap \loc(e) = \emptyset$, we extend these sequences as follows. 
We have $i_n \not\in \loc(e)$, since $i_n \in \loc(e_n)$.
Since $e_n < e$, there exists $e'$ such that $e_n \lessdot e' \leq e$.
As an immediate successor of $e_n$, $e'$ satisfies $\loc(e_n) \cap \loc(e') \neq \emptyset$ and, as a result, we have $e' < e$. Choose
$i_{n+{1}}$ to be any process in $\loc(e_n) \cap \loc(e')$.
As $e'$ is an $i_{n+{1}}$-event strictly below $e$, $\Y_{i_{n+{1}}}(e)$
exists. We now set $e_{n+{1}}=\Y_{i_{n+{1}}}(e)$. Clearly, both
$e_n$ and $e_{n+{1}}$ are $i_{n+{1}}$-events and $e_n < e' \leq e_{n+{1}} < e$. 
We now argue that,  for any $1 \leq j \leq n$, $i_{n+{1}} \neq i_{j}$; indeed, if $i_{n+1} = i_j$, then $e_{n+{1}}= \Y_{i_{n+{1}}}(e) = \Y_{i_{j}}(e)=e_j \leq e_n$, a contradiction.
Thus the processes $i_1, i_2, \ldots, i_n, i_{n+{1}}$ are pairwise distinct.

We repeat this procedure as long as $\loc(e)$ is disjoint from the location of the last event constructed. Since $\pset$ is finite, the procedure can be repeated only finitely many times. Let $m$ be the length of
the final event sequence $e_1 < e_2 < \cdots < e_m$ and the final process
sequence $i_1, i_2, \ldots, i_m$. At that stage, we have $\loc(e_m) \cap \loc(e) \neq \emptyset$. 

We now use these sequences of events and processes to construct the announced \sdpath formula $\pi$. 

The fact that $\Y_{i_m}(e)=e_m$ implies that no event $e'$ satisfying $e_m < e' < e$ is an $i_m$-event. Let $\ell \in \loc(e_m) \cap \loc(e)$. Then both $e_m$ and $e$ are $\ell$-events and $e_m = \Y_{i_m}(e) = \prevon{\ell}{\on{i_m}}(e)$.

Similarly, for $1 < j \leq m$, $\Y_{i_{j-1}}(e) = e_{j-1}$ and hence no $i_j$-event $e'$ satisfying  $e_{j-1} < e' < e$ is an $i_{j-1}$-event. In particular, $\Y_{i_{j-1}}(e_j) = e_{j-1}$. Moreover, both $e_{j-1}$ and $e_j$ are $i_j$-events, so $e_{j-1} = \Y_{i_{j-1}}(e_j) = \prevon{i_j}{\on{i_{j-1}}}(e_j)$.

Now let $\pi = \prevon{\ell}{\on{i_m}} \cdot 
\prevon{i_m}{\on{i_{m-1}}} 
\cdots  \prevon{i_2}{\on{i_1}}$. Then $\pi$ is an \sdpath formula and $Y_i(e) = f = e_1 = \pi(e)$. Moreover, since $i_1 = i$, we also have $Y_i(e) = (\pi\cdot\test{\on i})(e)$.

To conclude, we only need to verify that $m < |\pset|$. Let indeed $\ell$ be any process in $\loc(e_m) \cap \loc(e)$. Since $\loc(e_j) \cap \loc(e) = \emptyset$ for every $j < m$, we find that $\ell\notin\loc(e_j)$, and hence $\ell$ is distinct from $i_1, i_2, \ldots, i_m$. Therefore $m+1 \le |\pset|$, which completes the proof.
\end{proof}

\begin{example}
The proof of \cref{lem:Y_i and sdpath} shows that, for the trace in \cref{fig:trace-diagram}, we have
  $$\Y_{2}(e_{11}) = (\prevon{4}{\on{3}}\cdot\prevon{3}{\on{2}}\cdot\test{\on{2}}) (e_{11}) = e_{5}$$ 
\end{example}

\subsection{Expressing Constant Event Formulas in $\locpastpdl$}\label{subsec:eliminateEventConstants}

The main theorem in this section is the following.

\begin{theorem}\label{Yconstants theorem}
Let $i,j,k\in\pset$. The constant event formulas $\Yleq{i}{j}$ and $\Yleq{i,j}{k}$ can be expressed in $\locpastpdl$
\end{theorem}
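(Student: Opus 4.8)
The plan is to eliminate both constants by reducing them to Boolean combinations of two kinds of $\locpastpdl$ event formulas: existence assertions $\p\mu$ for \sdpath formulas $\mu$, and comparisons of the two events addressed, at the current event $e$, by a pair of \sdpath formulas. The starting point is \cref{lem:Y_i and sdpath}: each event $\Y_i(e)$ is addressed by an \sdpath formula, the catch being that the witnessing formula depends on $t$ and on $e$, so no single formula works uniformly. The remedy is the length bound $1\le\length\pi<|\pset|$ in \cref{lem:Y_i and sdpath} together with the finiteness, up to logical equivalence, of bounded-length \sdpath formulas: this lets us replace an implicit choice of formula by a finite disjunction or conjunction over a fixed family.

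Concretely, let $\mathcal M_i=\{\pi\cdot\test{\on i}\mid \pi\text{ an \sdpath formula},\ 1\le\length\pi<|\pset|\}$, a finite set up to equivalence. Every $\mu\in\mathcal M_i$ addresses, when defined, an $i$-event strictly below $e$, so $\mu(e)\le\Y_i(e)$; and by \cref{lem:Y_i and sdpath} equality is attained for some $\mu$. Hence $\Y_i(e)$ is the (attained) maximum of the finitely many values $\mu(e)$, and $\alpha_i:=\bigvee_{\mu\in\mathcal M_i}\p\mu$ expresses ``$\Y_i(e)$ exists'' and is a $\locpastpdl$ formula by \cref{lem:properties}. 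To compare two such maxima I use the elementary observation that $\Y_i(e)\le\Y_j(e)$ iff every defined $\mu(e)$ $(\mu\in\mathcal M_i)$ is dominated by some defined $\nu(e)$ $(\nu\in\mathcal M_j)$, which sidesteps having to name the maximizers and yields
$$\Yleq{i}{j}\ \equiv\ \alpha_i\wedge\alpha_j\wedge\bigwedge_{\mu\in\mathcal M_i}\Bigl(\p\mu\rightarrow\bigvee_{\nu\in\mathcal M_j}\psi^{\le}_{\mu,\nu}\Bigr),$$
where $\psi^{\le}_{\mu,\nu}$ asserts that $\mu(e)$ and $\nu(e)$ both exist and $\mu(e)\le\nu(e)$. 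For $\Yleq{i,j}{k}$ the same argument applies after noting that $\Y_{i,j}(e)=\Y_j(\Y_i(e))$ is, by two applications of \cref{lem:Y_i and sdpath} and closure of \sdpath formulas under concatenation, the attained maximum of $\sigma(e)$ over the finite family $\mathcal M_{i,j}=\{\pi\cdot\test{\on i}\cdot\rho\cdot\test{\on j}\mid 1\le\length\pi,\length\rho<|\pset|\}$; one then writes the formula for $\Yleq{i,j}{k}$ exactly as above with $\mathcal M_i,\mathcal M_j$ replaced by $\mathcal M_{i,j},\mathcal M_k$.

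This reduces the theorem to constructing, for \sdpath formulas $\mu,\nu$, a $\locpastpdl$ event formula $\psi^{\le}_{\mu,\nu}$ that compares $\mu(e)$ and $\nu(e)$, and I expect this to be the main obstacle: since $\mu(e)$ and $\nu(e)$ may lie on different processes, their causal comparison is exactly the concurrent, gossip-like difficulty at the heart of the paper. My plan for this step is twofold. First, turn the cross-process comparison into a same-process one: if $\mu(e)$ is an $\ell$-event for some $\ell\in\loc(\mu(e))$, then $\mu(e)\le\nu(e)$ iff $\mu(e)$ lies in the (inclusive) $\ell$-past of the latest $\ell$-event below $\nu(e)$, an event that is itself addressed by (longer) \sdpath formulas from $e$; this reduces $\psi^{\le}_{\mu,\nu}$ to equalities $\psi^{=}_{\mu',\nu'}$ of \sdpath-addressed events. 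Second, build each $\psi^{=}_{\mu',\nu'}$ in $\locpastpdl$ using \emph{separating} event formulas, which distinguish every event from the event a given non-trivial \sdpath formula addresses. I expect the construction of these separating formulas to be the genuine difficulty: it is the novel ingredient replacing timestamping and the gossip automaton, and its correctness will rest on the determinism and monotonicity of \sdpath formulas recorded in \cref{lem:properties}.
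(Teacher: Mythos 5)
Your top-level reduction is essentially the paper's: by \cref{lem:Y_i and sdpath} and the finiteness of bounded-length \sdpath formulas, $\Y_i(e)$ and $\Y_{i,j}(e)$ are attained maxima over the finite families $\mathcal M_i$ and $\mathcal M_{i,j}$, so the constants become Boolean combinations of existence assertions and pairwise comparisons. Your $\forall\mu\,\exists\nu$ arrangement is logically equivalent to the paper's $\exists\nu\,\forall\mu$ version in \cref{cor:comp-with-sdpaths1,cor:comp-with-sdpaths2}, and your justification of it is sound. The remaining architecture you announce --- inequalities via equalities, equalities via separating formulas --- is also exactly the paper's (\cref{prop: existence of Pleq}, \cref{prop: existence of Peq}, \cref{prop: existence of separating formulas}).

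The one step that would fail as sketched is your reduction of $\psi^{\le}_{\mu,\nu}$ to equalities. You propose to test whether $\mu(e)$ lies in the inclusive $\ell$-past of $g$, the latest $\ell$-event below $\nu(e)$. Although $g$ is indeed addressed from $e$ by $\nu$ followed by a short \sdpath formula, the relation ``$\mu(e)$ is in the $\ell$-past of $g$'' amounts to $\mu(e)=(\lmove_\ell)^m(g)$ for an unbounded $m$, so it is not a finite disjunction of equalities between events addressed by bounded-length \sdpath formulas; you have merely restated the cross-referencing problem on a single process line. The paper's \cref{lem:order1} avoids this by working at the other end of $\mu$: walk down $\mu$ from $e$ and stop at the \emph{first} intermediate event $e_i$ that falls below $f=\nu(e)$. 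Writing $\prevon{j_i}{\phi_i}$ for the move of $\mu$ that produced $e_i$, the maximal $j_i$-event $g\le f$ is sandwiched as $e_i\le g<e_{i-1}$ precisely because $e_{i-1}\not\le f$, whence $e_i$ is either $g$ itself or $\prevon{j_i}{\phi_i}(g)$ --- a single bounded step from an event reachable from $f$ by an \sdpath formula of length less than $|\pset|$. This is what yields the bounded equality $\mu_1(e)=\pi''(\nu(e))$ with $\mu_1$ a \emph{prefix} of $\mu$ and $\length{\pi''}\le|\pset|$, and hence makes $\Pleq{\mu}{\nu}$ a finite disjunction of $\Peq{\cdot}{\cdot}$ formulas. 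With that substitution your plan goes through.
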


\begin{proof}[Overview of the proof]
The proof relies on a complex construction, which occupies the rest of \cref{subsec:eliminateEventConstants}. We show (\cref{prop: existence of Pleq} below) that, for each pair $(\pi,\pi')$ of \sdpath formulas, there exists a $\locpastpdl$ event formula $\Pleq{\pi}{\pi'}$ which expresses the following: $t,e \models \Pleq{\pi}{\pi'}$ if and only if $\pi(e)$ and $\pi'(e)$ exist and $\pi(e) \le \pi'(e)$. \cref{lem:Y_i and sdpath} is then used to show that $\Yleq{i}{j}$ and $\Yleq{i,j}{k}$ are logically equivalent to $\locpastpdl$ event formulas using formulas of the form $\Pleq{\pi}{\pi'}$, see~\cref{cor:comp-with-sdpaths1,cor:comp-with-sdpaths2}.

The proof of \cref{prop: existence of Pleq} (in \cref{subsubsec:reduceIneqtoEq}) uses the existence of another class of $\locpastpdl$ formulas, written $\Peq{\pi}{\pi'}$, which express that $\pi(e)$ and $\pi'(e)$ exist and $\pi(e) = \pi'(e)$ (\cref{prop: existence of Peq} below). And the proof of \cref{prop: existence of Peq}, in \cref{subsubsec:redEqtoSep}, in turn uses the existence of finite sets of so-called separating formulas, which are constructed in~\cref{subsubsec: construction separating}.
\end{proof}

We now unravel this complex proof structure.

\begin{proposition}\label{prop: existence of Pleq}
For each pair of \sdpath formulas $\pi$ and $\pi'$,
there exists a $\locpastpdl$ event formula $\Pleq{\pi}{\pi'}$ such that $t,e \models
\Pleq{\pi}{\pi'}$ if and only if $\pi(e)$ and $\pi'(e)$ exist and $\pi(e) \le \pi'(e)$.
\end{proposition}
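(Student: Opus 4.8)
The plan is to build $\Pleq{\pi}{\pi'}$ as a Boolean/PDL combination of existence formulas and of the equality formulas $\Peq{\cdot}{\cdot}$, whose existence I take for granted (\cref{prop: existence of Peq}). First, the \emph{existence} of both $\pi(e)$ and $\pi'(e)$ is captured by $\p\pi\wedge\p{\pi'}$, which are $\locpastpdl$ formulas by \cref{lem:properties}(3). So from now on I assume both events exist, and write $f=\pi(e)$ and $g=\pi'(e)$; the remaining task is an $\locpastpdl$ formula true at $e$ exactly when $f\le g$.

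The second step reduces the comparison $f\le g$ in the partial order to a comparison along a single process line. The elementary observation is that, for events $f,g$, one has $f\le g$ if and only if there is a process $\ell\in\loc(f)$ such that the maximal $\ell$-event $h$ lying weakly below $g$ satisfies $f\le h$: if $f\le g$, then for any $\ell\in\loc(f)$ the event $f$ is itself an $\ell$-event weakly below $g$, so $h$ exists and $f\le h$; conversely $f\le h\le g$. Crucially, both $f$ and $h$ are $\ell$-events, and $h$ is \sdpath-addressable from $e$: if $g$ is an $\ell$-event then $h=(\pi'\cdot\test{\on\ell})(e)$, and otherwise $h=\Y_\ell(g)$, which by \cref{lem:Y_i and sdpath} equals $(\pi'\cdot\rho\cdot\test{\on\ell})(e)$ for one of the finitely many \sdpath formulas $\rho$ of length $<|\pset|$. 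Disjoining over $\ell\in\pset$ and over these finitely many candidate addresses, and writing $\tau$ for the resulting \sdpath formula addressing $h$, the problem reduces to expressing, for fixed \sdpath formulas $\pi$ and $\tau$, that $\pi(e)$ and $\tau(e)$ are $\ell$-events with $\pi(e)\le\tau(e)$.

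The third step is where inequality is genuinely traded for equality. Since $u:=\pi(e)$ and $v:=\tau(e)$ both lie on the totally ordered $\ell$-line, $u\le v$ holds exactly when $u$ is reached from $v$ by finitely many backward $\ell$-steps, i.e.\ when $u=(\tau\cdot\lmove_\ell^{\,k})(e)$ for some $k\ge 0$. As $\lmove_\ell=\prevon{\ell}{\True}$, each $\tau\cdot\lmove_\ell^{\,k}$ is again an \sdpath formula, so the condition $u\le v$ is equivalent to the disjunction $\bigvee_{k\ge 0}\Peq{\pi}{\tau\cdot\lmove_\ell^{\,k}}$, a family of equalities of \sdpath addresses evaluated at the \emph{same} anchor $e$. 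Each disjunct is $\locpastpdl$ by \cref{prop: existence of Peq}, and this is exactly the reduction of $\Pleq{\cdot}{\cdot}$ to $\Peq{\cdot}{\cdot}$ promised by the overview.

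The main obstacle is that the disjunction in the third step is \emph{a priori infinite}: the backward distance between two $\ell$-events is unbounded, yet each equality formula $\Peq{\cdot}{\cdot}$ speaks about a single fixed pair of addresses. Collapsing $\bigvee_{k\ge 0}\Peq{\pi}{\tau\cdot\lmove_\ell^{\,k}}$ into one finite $\locpastpdl$ formula is the heart of the matter; I expect to do this by pushing the unbounded iteration into a Kleene-starred path formula $\lmove_\ell^{*}$ over process $\ell$ while using the equality machinery to certify, \emph{locally}, that the backward walk has landed on the event addressed by $\pi$. This localization of a same-anchor equality into a test usable inside a star is precisely why the construction must descend to \cref{prop: existence of Peq} and, beneath it, to the separating formulas. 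A secondary, purely bookkeeping point, already handled above via \cref{lem:Y_i and sdpath}, is to ensure that the maximal $\ell$-event below $g$ is \sdpath-addressable by a bounded-length formula, so that only finitely many candidate $\tau$ arise.
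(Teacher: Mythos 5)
Your overall strategy --- capturing existence via $\p{\pi}\wedge\p{\pi'}$ and then trading the inequality for equalities of \sdpath addresses evaluated at the same anchor $e$ --- has the right shape, and your second step (localizing the comparison to a process line $\ell$ and addressing the maximal $\ell$-event below $\pi'(e)$ via \cref{lem:Y_i and sdpath}) is sound. The gap is in your third step: walking backward along the $\ell$-line from $\tau(e)$ to $\pi(e)$ yields the disjunction $\bigvee_{k\ge 0}\Peq{\pi}{\tau\cdot\lmove_\ell^{k}}$ over an \emph{unbounded} $k$, and you do not collapse it; this is not bookkeeping but the actual content of the proposition. The collapse you sketch --- a Kleene-starred $\lmove_\ell^{*}$ with a test certifying that the walk ``has landed on the event addressed by $\pi$'' --- cannot work as stated: a test inside a path formula must be an event formula of the current event of the walk, whereas ``being equal to $\pi(e)$'' is a relation to the anchor $e$, not a local property. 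The separating formulas of \cref{prop: existence of separating formulas} only distinguish $e$ from $\sigma(e)$ for a fixed address $\sigma$ and are consumed through $\p{\sigma}\xi$ evaluated at the anchor; they do not furnish a local landing predicate usable under a star.

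The paper avoids the unbounded walk altogether via \cref{lem:order1}: for $e\not<f$, one has $\pi(e)\le f$ if and only if $\pi''(e)=\pi'''(f)$ for some \emph{prefix} $\pi''$ of $\pi$ and some \sdpath formula $\pi'''$ with $\length{\pi'''}\le|\pset|$. The two addresses need not meet at $\pi(e)$ itself: one takes the first step of $\pi$ (in standardized form) whose image drops weakly below $f$, and shows that this intermediate event is reachable from $f$ by a bounded-length path (essentially a $\Y$-type path to the maximal relevant event below $f$, followed by at most one more $\prevon{j}{\varphi}$ step). The boundedness of $\pi'''$, together with the finiteness of \sdpath formulas of bounded length, makes the disjunction $\bigvee_{\pi'',\pi'''}\Peq{\pi''}{\pi'\pi'''}$ finite, and then $\Pleq{\pi}{\pi'}=\p{\pi}\wedge\p{\pi'}\wedge\bigvee_{\pi'',\pi'''}\Peq{\pi''}{\pi'\pi'''}$ works. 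To repair your argument you would need to prove a bounded meeting lemma of this kind; without it the proof is incomplete.
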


The proof of \cref{prop: existence of Pleq} is deferred to \cref{subsubsec:reduceIneqtoEq}. \cref{prop: existence of Pleq} yields the two following corollaries, which establish~\cref{Yconstants theorem}.

\begin{corollary}\label{cor:comp-with-sdpaths1}
Let $i,j \in \pset$ be processes. Then $\Yleq{i}{j}$ is logically equivalent to the following $\locpastpdl$ event formula:
\begin{align*}
  \underbrace{\bigvee_{\pi'} \p{\pi'\cdot\test{\on{i}}}}_{\varphi_{1}} 
  &\land \underbrace{\bigvee_{\pi} \p{\pi\cdot\test{\on{j}}}}_{\varphi_{2}} 
  \\ &\land \underbrace{\bigvee_{\pi} 
\bigwedge_{\pi'} \Big(\p{\pi'\cdot\test{\on{i}}} 
\implies \Pleq{\pi'\cdot\test{\on{i}}}{\pi\cdot\test{\on{j}}}\Big)}_{\varphi_{3}}
\end{align*}
where the disjunctions and conjunctions run over \sdpath formulas $\pi,\pi'$  of length at least $1$ and at most $|\pset|-1$
\end{corollary}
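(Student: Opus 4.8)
The plan is to prove the stated equivalence pointwise in both directions, using \cref{lem:Y_i and sdpath} to relate $\Y_i(e)$ to \sdpath formulas and \cref{prop: existence of Pleq} for the meaning of $\Pleq{\cdot}{\cdot}$. Throughout I rely on one elementary observation: if $\rho$ is an \sdpath formula with $\length\rho\ge 1$ and $(\rho\cdot\test{\on i})(e)$ exists, then this event is an $i$-event lying \emph{strictly} below $e$ (strictness because $\length\rho\ge 1$ guarantees at least one backward step), so by maximality of $\Y_i(e)$ among the $i$-events in the strict past of $e$ we get $(\rho\cdot\test{\on i})(e)\le\Y_i(e)$; the analogous statement holds with $j$ in place of $i$. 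I also note at the outset that the displayed formula is a genuine \locpastpdl\ formula: since there are only finitely many logically distinct \sdpath formulas of each bounded length, the disjunctions and conjunctions are finite, each $\p{\pi'\cdot\test{\on i}}$ is \locpastpdl\ by \cref{lem:properties}(3), and each $\Pleq{\cdot}{\cdot}$ is \locpastpdl\ by \cref{prop: existence of Pleq}.

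First I would dispose of the two existence conjuncts. Combining the observation with \cref{lem:Y_i and sdpath}, $t,e\models\varphi_1$ holds exactly when $\Y_i(e)$ exists: if some $(\pi'\cdot\test{\on i})(e)$ exists then an $i$-event sits strictly below $e$, forcing $\Y_i(e)$ to exist; conversely \cref{lem:Y_i and sdpath} supplies an \sdpath formula $\pi'$ of length between $1$ and $|\pset|-1$ with $(\pi'\cdot\test{\on i})(e)=\Y_i(e)$, witnessing $\varphi_1$. Symmetrically, $t,e\models\varphi_2$ iff $\Y_j(e)$ exists.

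For the forward direction, assuming $t,e\models\Yleq{i}{j}$ (so $\Y_i(e),\Y_j(e)$ exist and $\Y_i(e)\le\Y_j(e)$), the conjuncts $\varphi_1,\varphi_2$ hold by the previous paragraph. For $\varphi_3$ I would pick the \sdpath formula $\pi$ given by \cref{lem:Y_i and sdpath} with $(\pi\cdot\test{\on j})(e)=\Y_j(e)$; then for every $\pi'$ whose premise $\p{\pi'\cdot\test{\on i}}$ holds, the observation yields $(\pi'\cdot\test{\on i})(e)\le\Y_i(e)\le\Y_j(e)=(\pi\cdot\test{\on j})(e)$, so $\Pleq{\pi'\cdot\test{\on i}}{\pi\cdot\test{\on j}}$ holds and the implication is satisfied, establishing $\varphi_3$.

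For the converse, assume $t,e\models\varphi_1\land\varphi_2\land\varphi_3$; then $\Y_i(e)$ and $\Y_j(e)$ exist. I fix a $\pi$ witnessing the outer disjunction of $\varphi_3$, and use \cref{lem:Y_i and sdpath} to get a $\pi'_0$ with $(\pi'_0\cdot\test{\on i})(e)=\Y_i(e)$. Instantiating the inner conjunction at $\pi'_0$, whose premise holds, gives $\Y_i(e)=(\pi'_0\cdot\test{\on i})(e)\le(\pi\cdot\test{\on j})(e)$ via \cref{prop: existence of Pleq}; and the observation applied on process $j$ gives $(\pi\cdot\test{\on j})(e)\le\Y_j(e)$. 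Hence $\Y_i(e)\le\Y_j(e)$, i.e.\ $t,e\models\Yleq{i}{j}$. The only delicate point is the quantifier alternation in $\varphi_3$: a single $\pi$ must dominate all $\pi'$, which works precisely because $\Y_i(e)$ is itself reached by some $\pi'_0$ while bounding every other reachable $i$-event from above. The companion statement \cref{cor:comp-with-sdpaths2} for $\Yleq{i,j}{k}$ would be treated the same way, decomposing $\Y_{i,j}(e)$ into a two-stage \sdpath address.
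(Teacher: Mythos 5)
Your proposal is correct and follows essentially the same route as the paper's proof: characterize $\varphi_1,\varphi_2$ as existence of $\Y_i(e),\Y_j(e)$ via \cref{lem:Y_i and sdpath}, witness the outer disjunction of $\varphi_3$ by the \sdpath formula addressing $\Y_j(e)$, and in the converse use that every addressed $i$-event (in particular the one equal to $\Y_i(e)$) is bounded by the addressed $j$-event, which in turn is bounded by $\Y_j(e)$. Your explicit remarks on strictness (length $\ge 1$ forces a genuine backward step) and on finiteness of the disjunctions are sound and merely make precise what the paper leaves implicit.
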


\begin{proof}
By definition, $t,e \models \Yleq{i}{j}$ if and only if $\Y_{i}(e)$ and $\Y_{j}(e)$ exist, and $\Y_{i}(e) \leq \Y_{j}(e)$.
If $\pi'$ is an \sdpath formula and $\pi'\cdot\test{\on{i}}(e)$ exists, then there is an
$i$-event in the strict past of $e$, and hence $\Y_{i}(e)$ exists.  Conversely,
\cref{lem:Y_i and sdpath} shows that if $\Y_{i}(e)$ exists, then it is equal to
$\pi'\cdot\test{\on{i}}(e)$ for some such $\pi'$, with length between 1 and $|\pset|-1$.
Therefore $t,e\models \phi_1$ if and only if $\Y_{i}(e)$ exists. Similarly $\phi_2$ expresses the existence of $\Y_{j}(e)$.

Now assume that $t,e \models \Yleq{i}{j}$.  Let $\pi$ be given by \cref{lem:Y_i and
sdpath} such that $\Y_{j}(e)= \pi\cdot\test{\on{j}}(e)$, and let $\pi'$ be an arbitrary
\sdpath with $1 \le \length{\pi'}< |\pset|$.
Consider the $(\pi,\pi')$ implication in $\phi_3$.  If $\pi'\cdot\test{\on{i}}(e)$ does not
exist, this implication is vacously satisfied.  If instead $\pi'\cdot\test{\on{i}}(e)$ exists,
then it is an $i$-event, so $\pi'\cdot\test{\on{i}}(e) \leq \Y_{i}(e)$ and the same
implication is also satisfied.
This establishes the fact that $t,e\models  \Yleq{i}{j}$ implies that $t,e\models \phi_3$.

Conversely, assume that $\Y_{i}(e)$ and $\Y_{j}(e)$ exist, and $t,e \models \phi_3$.
Then there exists an \sdpath $\pi$ which addresses a $j$-event in the strict past of $e$
(namely $\pi\cdot\test{\on{j}}(e)$) such that every $i$-event of the form
$\pi'\cdot\test{\on{i}}(e)$ (where $\pi'$ is an \sdpath formula with $1 \le\length{\pi'}
< |\pset|$) lies below $\pi\cdot\test{\on{j}}(e)$.
In particular, $\Y_{i}(e) \leq \pi\cdot\test{\on{j}}(e) \leq Y_{j}(e)$, and this concludes the proof.
\end{proof}

\begin{corollary}\label{cor:comp-with-sdpaths2}
Let $i,j,k \in \pset$ be processes. Then $\Yleq{i,j}{k}$ is logically equivalent to the following $\locpastpdl$ event formula:

\begin{multline*}
\underbrace{\bigvee_{\pi',\pi''} \p{\pi'\cdot\test{\on{i}}\cdot\pi''\cdot\test{\on{j}}}}_{\phi_{1}}
\enspace\land\enspace \underbrace{\bigvee_{\pi} \p{\pi\cdot\test{\on{k}}}}_{\phi_{2}}\\
\land\enspace \underbrace{\left(
\begin{aligned}
\bigvee_{\pi} \bigwedge_{\pi',\pi''} 
\Big(&\p{\pi'\cdot\test{\on{i}}\cdot\pi''\cdot\test{\on{j}}} \enspace\implies \\ &\Pleq{\pi'\cdot\test{\on{i}}\cdot\pi''\cdot\test{\on{j}}}{\pi\cdot\test{\on{k}}}\Big)
\end{aligned} \right)
}_{\phi_{3}}
\end{multline*}
where the disjunctions and conjunctions run over \sdpath formulas $\pi,\pi',\pi''$  of length at least $1$ and at most $|\pset|-1$. 
\end{corollary}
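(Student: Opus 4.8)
The plan is to follow the blueprint of the proof of \cref{cor:comp-with-sdpaths1} almost verbatim, the only genuinely new ingredient being that the target $\Y_{i,j}(e)=\Y_{j}(\Y_{i}(e))$ is a \emph{two-step} address. As in that corollary, I would argue that the three conjuncts $\phi_{1},\phi_{2},\phi_{3}$ express, respectively, (i) the existence of $\Y_{i,j}(e)$, (ii) the existence of $\Y_{k}(e)$, and (iii) the inequality $\Y_{i,j}(e)\le\Y_{k}(e)$, so that their conjunction is logically equivalent to $\Yleq{i,j}{k}$.

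First I would dispatch $\phi_{2}$, which is literally the conjunct $\phi_{2}$ of \cref{cor:comp-with-sdpaths1}: by \cref{lem:Y_i and sdpath}, $t,e\models\phi_{2}$ iff $\Y_{k}(e)$ exists. For $\phi_{1}$, the key point is that a two-fold application of \cref{lem:Y_i and sdpath} addresses $\Y_{i,j}(e)$ by a concatenated \sdpath formula. Concretely, if $\Y_{i,j}(e)$ exists then so does $\Y_{i}(e)$, and \cref{lem:Y_i and sdpath} gives $\pi'$ with $\Y_{i}(e)=(\pi'\cdot\test{\on{i}})(e)$ and $1\le\length{\pi'}<|\pset|$; applying the lemma again \emph{at} the event $\Y_{i}(e)$ gives $\pi''$ with $\Y_{j}(\Y_{i}(e))=(\pi''\cdot\test{\on{j}})(\Y_{i}(e))$ and $1\le\length{\pi''}<|\pset|$. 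Using determinism of \sdpath formulas (\cref{lem:properties}), so that concatenation corresponds to composition of the associated partial functions, I conclude $\Y_{i,j}(e)=(\pi'\cdot\test{\on{i}}\cdot\pi''\cdot\test{\on{j}})(e)$, whence $t,e\models\phi_{1}$. Conversely, if $(\pi'\cdot\test{\on{i}}\cdot\pi''\cdot\test{\on{j}})(e)$ exists for some $\pi',\pi''$, it witnesses an $i$-event strictly below $e$ (so $\Y_{i}(e)$ exists) and a $j$-event strictly below that $i$-event (so $\Y_{i,j}(e)$ exists). Thus $\phi_{1}$ expresses exactly the existence of $\Y_{i,j}(e)$.

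Finally I would treat $\phi_{3}$, whose crux is a monotonicity observation: for any \sdpath formulas $\pi',\pi''$, if $g=(\pi'\cdot\test{\on{i}}\cdot\pi''\cdot\test{\on{j}})(e)$ exists, then $g\le\Y_{i,j}(e)$. Indeed $f'=(\pi'\cdot\test{\on{i}})(e)$ is an $i$-event with $f'<e$, hence $f'\le\Y_{i}(e)$; and $g=(\pi''\cdot\test{\on{j}})(f')$ is a $j$-event with $g<f'\le\Y_{i}(e)$, so $g$ is a $j$-event strictly below $\Y_{i}(e)$ and therefore $g\le\Y_{j}(\Y_{i}(e))=\Y_{i,j}(e)$. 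With this in hand both directions follow as before. If $t,e\models\Yleq{i,j}{k}$, pick $\pi$ addressing $\Y_{k}(e)$ via \cref{lem:Y_i and sdpath}; every existing address $g$ of the above shape satisfies $g\le\Y_{i,j}(e)\le\Y_{k}(e)=(\pi\cdot\test{\on{k}})(e)$, so by \cref{prop: existence of Pleq} the relevant $\Pleq{\pi'\cdot\test{\on{i}}\cdot\pi''\cdot\test{\on{j}}}{\pi\cdot\test{\on{k}}}$ holds and the implication is satisfied, giving $t,e\models\phi_{3}$. Conversely, if $\Y_{i,j}(e)$ and $\Y_{k}(e)$ exist and $t,e\models\phi_{3}$, I feed it the pair $\pi',\pi''$ constructed above that addresses $\Y_{i,j}(e)$ itself; the matching implication forces $(\pi\cdot\test{\on{k}})(e)$ to exist with $\Y_{i,j}(e)\le(\pi\cdot\test{\on{k}})(e)\le\Y_{k}(e)$. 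The main, and only mild, obstacle is bookkeeping: justifying cleanly that $(\pi'\cdot\test{\on{i}}\cdot\pi''\cdot\test{\on{j}})(e)$ decomposes as the composition of its two sub-addresses, and checking that the double invocation of \cref{lem:Y_i and sdpath} keeps each of $\pi',\pi'',\pi$ within the bound $1\le\length{\cdot}<|\pset|$ matching the ranges of the disjunctions and conjunctions; everything else transfers mechanically from the single-step case of \cref{cor:comp-with-sdpaths1}.
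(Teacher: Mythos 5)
Your proposal is correct and follows essentially the same route as the paper's own proof: both use \cref{lem:Y_i and sdpath} (applied twice, at $e$ and at $\Y_i(e)$) to show $\phi_1$ captures existence of $\Y_{i,j}(e)$, and both rest $\phi_3$ on the observation that any event of the form $(\pi'\cdot\test{\on{i}}\cdot\pi''\cdot\test{\on{j}})(e)$ is a $j$-event strictly below an $i$-event in the strict past of $e$ and hence lies below $\Y_j(\Y_i(e))$. The extra bookkeeping you flag (composition of deterministic sub-addresses, the length bound $1\le\length{\cdot}<|\pset|$ from each invocation of the lemma) goes through exactly as you describe.
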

\begin{proof}
 By definition, $t,e \models \Y_{i,j} \leq \Y_{k}$ if and only if $\Y_{j}(\Y_{i}(e))$ and
 $\Y_{k}(e)$ exist, and $\Y_{j}(\Y_{i}(e)) \leq \Y_{k}(e)$.  As in the proof
 of~\cref{cor:comp-with-sdpaths1}, \cref{lem:Y_i and sdpath} can be used to prove that
 $\phi_{1}$ expresses the existence of $\Y_{j}(\Y_{i}(e))$ and $\phi_2$ expresses the
 existence of $\Y_k(e)$.
 
Assume that $t,e \models \Yleq{i,j}{k}$.  Let $\pi$ be given by \cref{lem:Y_i and sdpath}
such that $\Y_{k}(e)= \pi\cdot\test{\on{k}}(e)$, and let $\pi',\pi''$ be arbitrary \sdpath
formulas with length between $1$ and $|\pset|-1$.  If
$\pi'\cdot\test{\on{i}}\cdot\pi''\cdot\test{\on{j}}(e)$ does not exist, then the
$(\pi,\pi',\pi'')$ implication in $\phi_3$ is vacously satisfied.  If it does exist, then
it is a $j$-event which is strictly below an $i$-event in the strict past of $e$.  In
particular, this event sits below $\Y_{j}(\Y_{i}(e)) \leq \Y_{k}(e) =
\pi\cdot\test{\on{k}}(e)$, and hence the corresponding implication in $\phi_3$ is again
satisfied.  Thus $t,e \models \varphi_{3}$.

Conversely, suppose that $\Y_{j}(\Y_{i}(e))$ and $\Y_{k}(e)$ exist and $t,e \models \varphi_3$. Again as in the proof of \cref{cor:comp-with-sdpaths1}, there exists an \sdpath $\pi$ such that the $k$-event $\pi\cdot\test{\on{k}}(e)$ sits above any $j$-event of the form $\pi'\cdot\test{\on{i}}\cdot\pi''\cdot\test{\on{j}}(e)$ (where $\pi',\pi''$ are \sdpath formulas of length between $1$ and $|\pset|-1$).
In particular (again using \cref{lem:Y_i and sdpath}), $\Y_{j}(\Y_{i}(e)) \leq \pi\cdot\test{\on{k}}(e) \le \Y_{k}(e)$ and this concludes the proof.
\end{proof}

\subsubsection{Reducing inequalities to equalities}\label{subsubsec:reduceIneqtoEq}

Our next step is to prove \cref{prop: existence of Pleq}, which relies on the following proposition.

\begin{proposition}\label{prop: existence of Peq}
For each pair of \sdpath formulas $\pi$ and $\pi'$, 
there exists a $\locpastpdl$ event formula $\Peq{\pi}{\pi'}$ such that $t,e \models
\Peq{\pi}{\pi'}$ if and only if $\pi(e)$ and $\pi'(e)$ exist and $\pi(e) = \pi'(e)$.
\end{proposition}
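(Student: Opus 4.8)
The plan is to build $\Peq{\pi}{\pi'}$ in two layers: first peel off the (easy) existence conditions and the degenerate length-$0$ cases, and then express the genuine equality of the two addressed events through a finite family of \emph{separating} event formulas.

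First I would dispose of existence and the trivial cases. Since $\pi$ and $\pi'$ are \sdpath formulas, \cref{lem:properties}(3) shows that $\p\pi$ and $\p{\pi'}$ --- which hold at $e$ exactly when $\pi(e)$, resp.\ $\pi'(e)$, exist --- are $\locpastpdl$ event formulas, so I may freely conjoin them. If $\length\pi = 0$, then $\pi = \test\phi$ and $\pi(e)=e$ whenever $t,e\models\phi$; as every positive-length \sdpath formula moves strictly into the past (each $\prevon{i}{\psi}$ does), $\pi(e)=\pi'(e)$ forces $\length{\pi'}=0$ as well, and then $\Peq{\pi}{\pi'}$ is simply $\phi\wedge\phi'$ (and $\False$ if $\length{\pi'}\ge 1$); the symmetric case is identical. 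So the real work is when both $\pi$ and $\pi'$ are nontrivial, with final moves on processes $i$ and $j$ respectively. Note that if $\pi(e)=\pi'(e)=f$, then $f$ is both an $i$-event and a $j$-event, so $\{i,j\}\subseteq\loc(f)$.

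For the core, the plan is to use the separating formulas constructed in \cref{subsubsec: construction separating}. I would rely on the following property of a finite set $\Xi$ attached to the pair $(\pi,\pi')$: it is a finite collection of $\locpastpdl$ event formulas that \emph{separates} the targets, in the sense that for every trace $t$ and event $e$ with $\pi(e),\pi'(e)$ both defined, $\pi(e)=\pi'(e)$ holds if and only if $\pi(e)$ and $\pi'(e)$ satisfy exactly the same formulas of $\Xi$. Granting this, I would set
\[
\Peq{\pi}{\pi'} \;=\; \p\pi \wedge \p{\pi'} \wedge \bigwedge_{\xi\in\Xi}\Big(\p{\pi\cdot\test\xi}\;\leftrightarrow\;\p{\pi'\cdot\test\xi}\Big).
\]
Each conjunct is a legitimate $\locpastpdl$ event formula: since $\pi$ is local and $\xi$ is already a $\locpastpdl$ event formula, the argument of \cref{lem:properties}(3) (appending the test $\test\xi$ and distributing $\p{\cdot}$ over concatenation) shows that $\p{\pi\cdot\test\xi}$ is equivalent to a $\locpastpdl$ event formula, and likewise for $\pi'$. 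The forward implication of the criterion is immediate, since $\pi(e)=\pi'(e)$ makes $\p{\pi\cdot\test\xi}$ and $\p{\pi'\cdot\test\xi}$ test one and the same event; the converse is exactly the separating property.

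The hard part will be establishing that such a finite, $\locpastpdl$-definable separating family exists --- this is the technical heart, deferred to \cref{subsubsec: construction separating}. The difficulty is that $\pi(e)$ and $\pi'(e)$ may be distinct yet concurrent, or may differ only through intermediate events lying on a process distinct from both $i$ and $j$; distinguishing them forces the separating formulas to certify, from the single vantage point of $e$, that \emph{no} event in a relevant interval carries a given marker --- precisely the cross-process, interval-style check that naive nearest-occurrence tests such as $\prevon{i}{\xi}$ cannot perform on their own. I would expect the separating family to be built by induction on $\length\pi+\length{\pi'}$, peeling off final moves and recording along the common local chains enough boundary information to pin down each target uniquely. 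The base case of two length-$1$ formulas on the \emph{same} process $i$ is already instructive: there $\prevon{i}{\xi}(e)=\prevon{i}{\xi'}(e)$ reduces to $\p{\prevon{i}{\xi\vee\xi'}\cdot\test{\xi\wedge\xi'}}$, and the construction must extend this simple nearest-occurrence idea to the genuinely concurrent, multi-process situation.
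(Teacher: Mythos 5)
Your formula for $\Peq{\pi}{\pi'}$ and your forward direction match the paper's proof almost exactly (the paper writes $\neg\bigvee_{\xi}\big(\p{\pi}\xi\land\p{\pi'}\neg\xi\big)$ where you write a conjunction of biconditionals, and it skips your harmless length-$0$ case split). The problem is in the converse direction, where you defer everything to a finite set $\Xi$ ``attached to the pair $(\pi,\pi')$'' that separates $\pi(e)$ from $\pi'(e)$ whenever they differ. That is not what \cref{prop: existence of separating formulas} provides: the sets $\Xi(\sigma)$ constructed in \cref{subsubsec: construction separating} are indexed by a \emph{single} \sdpath formula $\sigma$ of positive length and separate an event $g$ from its own image $\sigma(g)$. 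Postulating a pair-indexed separating family is, in effect, assuming the entire content of the converse direction, and you give no way to choose a \emph{finite} such $\Xi$ (in particular no length bound on the paths whose separating sets you would need to include).

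The missing bridge is the following two-step reduction, which is the actual substance of the paper's proof. Write $f=\pi(e)$, $f'=\pi'(e)$ and suppose $f\neq f'$. First, if $\loc(f)\not\subseteq\loc(f')$, the atomic formula $\on{i}$ for $i\in\loc(f)\setminus\loc(f')$ already separates them --- so the ``distinct yet concurrent'' situation you flag as a difficulty is in fact the trivial case, dispatched by putting the formulas $\on{i}$ into $\Xi$. Second, if $\loc(f)\subseteq\loc(f')$, then $f$ and $f'$ share a process and are therefore causally ordered, say $f<f'$; now \cref{lem:order1} (applied to $e$ and $f'=\pi'(e)$) yields a factorization $\pi=\pi_1\cdot\pi_2$ and an \sdpath formula $\pi''$ with $\length{\pi''}\le|\pset|$ such that $\pi_1(e)=\pi''(f')$, hence $f=(\pi''\cdot\pi_2)(f')$ with $1\le\length{\pi''\cdot\pi_2}\le\length{\pi}+|\pset|$. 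This exhibits $f$ as the image of $f'$ under a single \sdpath formula of \emph{bounded} length, so the single-path set $\Xi(\pi''\cdot\pi_2)$ applies and the bound makes $\Xi$ finite. Without this reduction your argument does not connect to the construction you are deferring to, so the proof as written has a genuine gap even granting everything in \cref{subsubsec: construction separating}.
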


The proof of \cref{prop: existence of Peq} is deferred to \cref{subsubsec:redEqtoSep}. For now, we show how it is used to establish \cref{prop: existence of Pleq}. We first record the following technical lemma.

\begin{lemma}\label{lem:order1}
  Let $t$ be a trace, let $e,f$ be two events in $t$ with $e\not< f$ and let $\pi$ be an 
  \sdpath formula such that $\pi(e)$ exists. Then $\pi(e)\leq f$ if and 
  only if there exists a prefix $\pi'$ of $\pi$ and a \sdpath formula $\pi''$ with 
  $\length{\pi''}\leq|\pset|$ such that $\pi'(e)=\pi''(f)$.
\end{lemma}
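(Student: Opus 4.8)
The plan is to establish the two implications separately. The backward implication is essentially free, while the forward implication is the substance and is where \cref{lem:Y_i and sdpath} is used. For the backward direction, suppose $\pi'$ is a prefix of $\pi$ and $\pi''$ an \sdpath formula with $\pi'(e)=\pi''(f)$, and write $\pi=\pi'\cdot\rho$. Every \sdpath formula moves weakly into the causal past (a $\prevon{i}{\varphi}$ step strictly decreases the current event and a test leaves it unchanged), so $\rho(g)\le g$ whenever $\rho$ is defined at $g$, and likewise $\pi''(f)\le f$. Hence $\pi(e)=\rho(\pi'(e))\le\pi'(e)=\pi''(f)\le f$, giving $\pi(e)\le f$. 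This direction does not use the hypothesis $e\not<f$; that hypothesis is needed only to make the forward direction true, since if $e<f$ then $\pi(e)\le e<f$ holds automatically while $e$ need not be addressable from $f$ by a bounded \sdpath formula.

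For the forward direction, assume $\pi(e)\le f$ and put $\pi$ in standardized form $\test{\phi'_0}\cdot\prevon{i_1}{\phi_1}\cdots\prevon{i_n}{\phi_n}\cdot\test{\phi'_n}$. This produces a weakly decreasing chain $e=g_0\ge g_1\ge\cdots\ge g_n=\pi(e)$, where $g_j$ is the image of $e$ under the prefix up to and including $\prevon{i_j}{\phi_j}$. The key structural facts are that each $g_{j-1}$ is an $i_j$-event, being the source of the step $\prevon{i_j}{\phi_j}$, and that $g_j$ is the maximal $i_j$-event in the strict past of $g_{j-1}$ satisfying $\phi_j$. Let $\ell$ be the least index with $g_\ell\le f$. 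If $\ell=0$, then $e=g_0\le f$, which together with $e\not<f$ forces $e=f$, and we may take $\pi'=\test{\phi'_0}$ and $\pi''$ a length-$0$ test. So assume $\ell\ge1$; the prefix $\pi'$ ending at $g_\ell$ then satisfies $\pi'(e)=g_\ell\le f$, and it remains to exhibit an \sdpath formula $\pi''$ with $\length{\pi''}\le|\pset|$ and $\pi''(f)=g_\ell$.

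The heart of the argument is to address the $i_\ell$-event $g_\ell$ starting from $f$. By minimality of $\ell$ we have $g_{\ell-1}\not\le f$, and $g_\ell<g_{\ell-1}$. Let $h=\Last_{i_\ell}(\da f)$ be the maximal $i_\ell$-event below $f$, which exists because $g_\ell$ is an $i_\ell$-event below $f$; thus $g_\ell\le h$. Since $i_\ell$-events are totally ordered and $g_{\ell-1}\not\le f$, we cannot have $g_{\ell-1}\le h$ (this would give $g_{\ell-1}\le f$), so $h<g_{\ell-1}$. I then split into two cases. If $g_\ell=h$, then $g_\ell$ is either $f$ itself, when $f$ is an $i_\ell$-event (addressed by $\test{\on{i_\ell}}$, length $0$), or $\Y_{i_\ell}(f)$, which \cref{lem:Y_i and sdpath} addresses from $f$ by an \sdpath formula of length at most $|\pset|-1$. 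If $g_\ell<h$, then $h$ fails $\phi_\ell$ (otherwise $g_\ell<h<g_{\ell-1}$ with $h$ satisfying $\phi_\ell$ would contradict maximality of $g_\ell$), and one verifies that $g_\ell=\prevon{i_\ell}{\phi_\ell}(h)$; composing an address of $h$ from $f$ (namely $f$ itself, or $\Y_{i_\ell}(f)$ via \cref{lem:Y_i and sdpath}) with the single step $\prevon{i_\ell}{\phi_\ell}$ yields a $\pi''$ with $\length{\pi''}\le(|\pset|-1)+1=|\pset|$.

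I expect the main obstacle to be exactly the case $g_\ell<h$: certifying that an event which is \emph{not} the last $i_\ell$-event below $f$ is nevertheless reachable from $f$ by a bounded \sdpath formula. The two observations that make it work are that $\Last_{i_\ell}(\da f)$ lies strictly below $g_{\ell-1}$ — forced by $g_{\ell-1}\not\le f$ together with the total order on $i_\ell$-events — and that it must fail the atomic test $\phi_\ell$, so that one $\prevon{i_\ell}{\phi_\ell}$ step from it recovers precisely $g_\ell$. The uniform bound $|\pset|$ then follows by adding a single step to the length $|\pset|-1$ guaranteed by \cref{lem:Y_i and sdpath}.
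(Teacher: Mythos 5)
Your proof is correct and follows essentially the same route as the paper's: take the least index $\ell$ with $g_\ell\le f$, locate the maximal $i_\ell$-event $h\le f$ (which is $f$ itself or $\Y_{i_\ell}(f)$, the latter addressed via \cref{lem:Y_i and sdpath} with length at most $|\pset|-1$), observe that $g_\ell\le h<g_{\ell-1}$ by total order of $i_\ell$-events and $g_{\ell-1}\not\le f$, and append at most one $\prevon{i_\ell}{\phi_\ell}$ step to reach $g_\ell$. The only cosmetic difference is that you fold the $e=f$ case into the index analysis rather than treating it upfront.
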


\begin{figure}
  \centering
  \begin{tikzpicture}[scale=0.7]

   \node[circle,fill=black,inner sep=0pt,minimum size=5pt,label=below:{$\pi(e)$}] (target) at (3,0) {};
   \node[circle,fill=black,inner sep=0pt,minimum size=5pt,label=below:{$e$}] (e) at (8,5) {};
   \node[circle,fill=black,inner sep=0pt,minimum size=5pt,label=below:{$f$}] (f) at (10,0) {};

   \node[black] at (0,2.5) {$j_{i}$};
   \draw[black, very thick] (0.5,2.5) -- (11.5,2.5);
   \draw[orange, very thick, fill=orange!5] (7,2.30) rectangle (7.25,3.5);
   \node[circle,fill=black,inner sep=0pt,minimum size=3pt,label=below:{$e_{i-1}$}] (e_{i-1}) at (7.12,2.5) {};
   \draw[orange, very thick, fill=orange!5] (4,2.70) rectangle (4.25,1);
   
   \node[circle,fill=black,inner sep=0pt,minimum size=3pt,label={[orange,above]:$\varphi_{i}$}](e_{i}) at (4.12,2.5) {};
   \node[circle,fill=black,inner sep=0pt,minimum size=3pt,label=above:{}] (e_{i2}) at (4.12,1.25) {};
   \node[] at (4.12,0.8) {$e_{i}$};
   
   \node[circle,fill=black,inner sep=0pt,minimum size=3pt,label=above:{}] (e_{(i-1)2}) at (7.12,3.25) {};

   \draw[purple, very thick, fill=purple!5] (6,2.70) rectangle (6.25,1.5);
   \node[circle,fill=black,inner sep=0pt,minimum size=3pt,label=above:{}] (g_{up}) at (6.12,2.5) {};
   \node[circle,fill=black,inner sep=0pt,minimum size=3pt,label=below:{}] (g_{down}) at (6.12,1.75) {};
   \node[] (labelg) at (6.12,0.8) {$\equalto{g}{Y_{j_{i}}(f)}$};

   \path [draw=blue,snake it] (8,5)--(7.12,3.25) (7.8,4) node{$\pi'$};
   \path [draw=blue,snake it] (4.12,1.25)--(3,0);
   
   \path [draw=blue,snake it] (7.12,2.5)--(4.12,2.5);
   \path [draw=red,snake it] (6.12,2.5)--(4.12,2.5);
   \path [draw=red,snake it] (10,0)--(6.12,1.75) (8.5,1.12) node{$\pi'''$};

   \draw[|-|,draw=purple] (4.3,4) -- (6.9,4) node[text =purple,midway, anchor=south]{$\lnot \varphi_{i}$};
\end{tikzpicture}
  \caption{Proof of \cref{lem:order1}}
  \label{fig:enter-label}
\end{figure}

\begin{proof}
One direction is easy: if $\pi'(e)=\pi''(f)$ for a prefix $\pi'$ of $\pi$, then $\pi(e)\leq\pi'(e) = \pi''(f)\leq f$.
 
Conversely, suppose that $\pi(e)\leq f$ with
  $\pi=\test{\varphi'_{0}}\cdot\prevon{j_{1}}{\varphi_{1}}\cdot\test{\varphi'_{1}}
  \cdot\cdots\cdot\test{\varphi'_{k-1}}\cdot\prevon{j_{k}}{\varphi_{k}}\cdot\test{\varphi'_{k}}$ (we may assume that $\pi$ is in standardized form).
  
If $e=f$, the announced result holds with $\pi'=\test{\varphi'_{0}}$ and $\pi''=\test{\True}$. We now assume that $e\neq f$.
  
Let $e_{0}=e$ and, for $1\leq i\leq k$, $e_{i}=\prevon{j_{i}}{\varphi_{i}}(e_{i-1})$. By definition of $\prevon j{\phi}$, we have $j_i \in \loc(e_i)\cap \loc(e_{i-1})$. And we also have $e_k = \pi(e)\leq f$.  Let $i$ be the least index such that $e_{i}\leq f$; then $1\leq i\leq k$ since $e_{0}=e\not\leq f$. Now let $\pi'=\test{\varphi'_{0}}\cdot\prevon{j_{1}}{\varphi_{1}}\cdot\test{\varphi'_{1}}
\cdot\cdots\cdot\test{\varphi'_{i-1}}\cdot\prevon{j_{i}}{\varphi_{i}}$. Then $\pi'$ is a prefix of $\pi$ and $e_{i}=\pi'(e)$.
  
Since $e_i$ and $e_{i-1}$ are $j_{i}$-events, the maximal $j_{i}$-event such that $g\leq f$ satisfies $e_{i}\leq g < e_{i-1}$, see \cref{fig:enter-label}. Moreover, $g = \pi'''(f)$ for some \sdpath formula $\pi'''$ with
$\length{\pi'''}<|\pset|$. Indeed, if $g < f$, we have $g=\Y_{j_{i}}(f)$ (since $e_{i-1}\not\leq f$) and \cref{lem:Y_i and sdpath} shows the existence of $\pi'''$. If instead
$g=f$, then we let $\pi'''=\test{\True}$.
  
We can now conclude the proof: if $g=e_{i}$ then we let $\pi''=\pi'''$.
Otherwise, we have $e_{i}=\prevon{j_{i}}{\varphi_{i}}(g)$ and we let 
$\pi''=\pi'''\cdot\prevon{j_{i}}{\varphi_{i}}$. In either case, we have $\pi'(e)= e_i = \pi''(f)$, as announced. 
\end{proof}

\begin{proof}[Proof of \cref{prop: existence of Pleq}]
Let $\pi,\pi'$ be \sdpath formulas. %
We define $\Pleq{\pi}{\pi'}$ to be the $\locpastpdl$ event formula
$$\Pleq{\pi}{\pi'} = \p{\pi} \wedge \p{\pi'} \wedge \bigvee_{\pi'',\pi'''} \Peq{\pi''}{\pi'\pi'''}$$
where $\pi''$ ranges over prefixes of $\pi$ and $\pi'''$ ranges over \sdpath formulas of
length at most $|\pset|$ (so that $\pi''$ and $\pi'\pi'''$ have length at most
$|\pi|$ and $|\pi'|+|\pset|$, respectively).

First suppose that $t,e \models \Pleq{\pi}{\pi'}$. Clearly $\pi(e)$ and $\pi'(e)$ exist. Since $t, e \models \bigvee\limits_{\pi'',\pi'''} \Peq{\pi''}{\pi'\pi'''}$, there exists a prefix $\pi''$ of $\pi$ and an \sdpath formula $\pi'''$ such that $\pi''(e) = \pi'''(\pi'(e))$. This yields $\pi''(e) \leq \pi'(e)$. Since $\pi''$ is a prefix of $\pi$, it follows that $\pi(e) \leq \pi''(e)$, and hence $\pi(e) \leq \pi'(e)$.

Conversely, suppose that $\pi(e)$ and $\pi'(e)$ exist and that $\pi(e) \leq  \pi'(e)$. The first (existence) condition implies that $t,e \models \p{\pi} \land \p{\pi'}$. 
Since $\pi(e) \leq \pi'(e)$, \cref{lem:order1} applied with $f=\pi'(e)$ shows that
$\pi''(e)=\pi'''(\pi'(e))$ for some prefix $\pi''$ of $\pi$ and an \sdpath formula $\pi'''$
with length at most $|\pset|$.
That is, $t,e \models \Peq{\pi''}{\pi'\pi'''}$ for those particular $\pi'', \pi'''$ and
this completes the proof.
\end{proof}

\subsubsection{Reduction of equalities to separating formulas}\label{subsubsec:redEqtoSep}

Our aim here is to prove \cref{prop: existence of Peq}. For this, we use yet another intermediate result.

\begin{proposition}\label{prop: existence of separating formulas}
For each \sdpath formula $\pi$ of length at least 1, there exists a
finite set $\Xi(\pi)$ of $\locpastpdl$ event formulas such that, for every
trace $t$ and event $e$ in $t$ such that $\pi(e)$ exists, $\Xi(\pi)$ \emph{separates} $e$ and $\pi(e)$: that is, there exist $\xi, \xi'\in \Xi(\pi)$ such that $\xi$ holds at $e$ and not at
$\pi(e)$, and $\xi'$ holds at $\pi(e)$ and not at
$e$.
\end{proposition}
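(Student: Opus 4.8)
The plan is to construct $\Xi(\pi)$ by induction on $\length\pi$, after putting $\pi$ in standardized form; throughout I write $f=\pi(e)$ and use that $\length\pi\ge 1$ forces $f<e$. A first, cheap ingredient is to throw every single-letter event formula $a$ (for $a\in\Sigma$) into $\Xi(\pi)$: whenever $\lambda(e)\neq\lambda(f)$ these already separate $e$ and $f$ in both directions, so the genuine work concerns the case $\lambda(e)=\lambda(f)$, in which $e$ and $f$ share the same location.

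The key idea driving the construction is that $\pi$ is deterministic (\cref{lem:properties}), so iterating it backwards from any event $g$ produces a finite strictly descending chain $g>\pi(g)>\pi^{2}(g)>\cdots$; let $N(g)$ be its length. Since $f=\pi(e)$, we have $N(f)=N(e)-1$, so $e$ and $f$ \emph{always carry opposite parity} of $N$. When $\pi$ is $i$-local for a single process $i$, this parity is directly expressible in $\locpastpdl$: the event formulas
\[
P_{\mathrm{even}}=\p{(\pi\cdot\pi)^*}(\neg\p{\pi}), \qquad P_{\mathrm{odd}}=\p{\pi\cdot(\pi\cdot\pi)^*}(\neg\p{\pi})
\]
hold at $g$ precisely when $N(g)$ is even, resp.\ odd, because reaching a point where $\neg\p{\pi}$ holds means reaching the bottom of the chain. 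Here $(\pi\cdot\pi)^*$ is again $i$-local and $\neg\p{\pi}$ is a $\locpastpdl$ event formula by \cref{lem:properties}, so both $P_{\mathrm{even}}$ and $P_{\mathrm{odd}}$ are legitimate $\locpastpdl$ event formulas; as $e$ and $f$ disagree on parity, $\{P_{\mathrm{even}},P_{\mathrm{odd}}\}$ separates them in both directions. Thus for $i$-local $\pi$ one may simply take $\Xi(\pi)=\{P_{\mathrm{even}},P_{\mathrm{odd}}\}$.

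The main obstacle is the general case, where the hops of $\pi$ lie on several processes. Then $(\pi\cdot\pi)^*$ is no longer local, so the parity formula falls outside $\locpastpdl$, and there is in general no single process along which one can even navigate from $e$ to $f$ (they may share no process at all). The naive remedy --- splitting $\pi$ at a process change as $\pi=\prevon{i_1}{\phi_1}\cdot\pi'$, separating $e$ from $g_1:=\prevon{i_1}{\phi_1}(e)$ by local parity and separating $g_1$ from $f=\pi'(g_1)$ by $\Xi(\pi')$ inductively --- does not suffice, because separators of the adjacent pairs $(e,g_1)$ and $(g_1,f)$ need not distinguish the endpoints $e$ and $f$. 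I expect the technical heart of the proof to be a strengthened induction that repairs this: for instance, building $\Xi(\pi)$ so as to separate $e$ from \emph{every} event on the witnessing chain simultaneously, or reducing separation of $e$ and $f$ to separation of the per-process predecessors $\Y_p(e)$ and $\Y_p(f)$ (which must differ for some $p$ in the common location, since $f<e$ forces $\Y_p(f)<\Y_p(e)$ for every such $p$). The finiteness of $\Xi(\pi)$ should then follow from the finiteness lemma for \sdpath formulas, as only boundedly many atomic formulas and \sdpath formulas of bounded length are ever invoked.
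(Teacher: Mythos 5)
Your treatment of the single-process case is correct and is genuinely different from the paper's: the parity formulas $\p{(\pi\cdot\pi)^*}(\neg\p{\pi})$ and $\p{\pi\cdot(\pi\cdot\pi)^*}(\neg\p{\pi})$ do classify events by the parity of the length of the (deterministic, strictly descending) iteration chain of $\pi$, this parity flips between $e$ and $\pi(e)$, and when every hop of $\pi$ is on one process these are legitimate $\locpastpdl$ formulas. The paper does not use this parity-of-the-full-chain idea even locally; it counts something else (see below). So far so good.

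However, the general case is a genuine gap, and it is the entire content of the proposition: you explicitly write ``I expect the technical heart of the proof to be \dots'' and then offer two speculative remedies, neither of which is carried out and neither of which works as stated. Separating $e$ from every event on the witnessing chain does not obviously help, because the chain itself is not addressable by a local path formula; and reducing to separation of $\Y_p(e)$ and $\Y_p(f)$ is circular, since the separating formulas are being constructed precisely in order to express the constants $\Yleq{i}{j}$ without assuming access to $\Y_p$. What the paper actually does is the following. Writing $\pi=\test{\phi'}\cdot\prevon{i}{\phi}\cdot\pi'$, every event satisfying $\p{\pi}$ is an $i$-event, so one can afford to navigate \emph{only along process $i$} at the top level, burying the multi-process structure of $\pi$ inside tested event formulas. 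The key auxiliary formula is $\same{i}{\pi}$ (\cref{lm: same is locpastpdl}), asserting that the current $i$-event and the previous $i$-event satisfying $\p{\pi}$ have the \emph{same} $\pi$-image; its $\locpastpdl$-definability is proved by a delicate structural induction on $\pi$ (with a separate argument for each of the cases $\test{\phi}$, $\test{\phi}\cdot\pi'$ and $\prevon{i}{\phi}\cdot\pi'$), and this induction is exactly the missing technical core. One then counts, modulo $n=\length{\pi}+1$, the $i$-events below the current one at which the $\pi$-image changes (the formulas $\Mod{k,n}{\pi}$ of \cref{lem:path-mod}), and \cref{lem:technical} guarantees that between $\pi(e)$ and $e$ the number of such change points is at least $1$ and at most $\length{\pi}$, so the two counts cannot agree modulo $n$. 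Note also that the counting is of change points rather than of chain length, and the modulus is $\length{\pi}+1$ rather than $2$; your parity statistic has no analogue here because the full chain cannot be traversed by a local path formula. Without a construction of $\same{i}{\pi}$ (or an equivalent device), the proposal does not establish the proposition.
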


The proof of \cref{prop: existence of separating formulas} is complex and is given in \cref{subsubsec: construction separating}. Let us see immediately how that statement is used to prove \cref{prop: existence of Peq}.

\begin{proof}[Proof of \cref{prop: existence of Peq}]
  Let $\pi,\pi'$ be \sdpath formulas.  Let $\Xi$ be the union of the set $\{\on{i} \mid i
  \in \pset\}$ and of the $\Xi(\sigma)$ where $\sigma$ is an \sdpath formula of length at
  least $1$ and at most $\max(|\pi|,|\pi'|)+|\pset|$. Notice that $\Xi$ is a finite 
  set since, up to equivalence, there are finitely many \sdpath formulas of bounded 
  length. 
  We define the $\locpastpdl$ event formula $\Peq{\pi}{\pi'}$ by
  $$
  \Peq{\pi}{\pi'} = \p{\pi} \land \p{\pi'} \enspace\land\enspace \neg\bigvee_{\xi\in\Xi} 
  \big(\p{\pi}\xi \land \p{\pi'}\neg\xi\big) \,.
  $$

Suppose first that $\pi(e)$ and $\pi'(e)$ exist and are equal. Clearly $t,e \models \p{\pi} \land \p{\pi'}$. Moreover, since $\pi(e) = \pi'(e)$, we have $t,e \models \p{\pi}\varphi$ if and only if $t,e \models \p{\pi'}\varphi$, for every event formula $\varphi$. This implies $t,e \models \neg\bigvee_{\xi\in\Xi}  \big(\p{\pi}\xi \land \p{\pi'}\neg\xi\big)$, and hence $t,e \models \Peq{\pi}{\pi'}$.
  
Conversely, assume that $t,e \models \p{\pi} \land \p{\pi'}$ and let $f = \pi(e)$ and
$f'=\pi'(e)$.  Suppose $f \neq f'$.  We show that $t,e \not\models \Peq{\pi}{\pi'}$, 
i.e., $t,e\models \p{\pi}\xi \land \p{\pi'}\neg\xi$ for some $\xi\in\Xi$.  If
$\loc(f)\not\subseteq \loc(f')$ then we choose $\xi=\on{i}$ for some $i\in\loc(f)
\setminus \loc(f')$. 
  
Suppose now that $\loc(f) \subseteq \loc(f')$. Then $f$ and $f'$ are $\ell$-events for any process $\ell \in \loc(f)$, and hence they are ordered. Without loss of generality, we assume that $f < f'$. 
By Lemma~\ref{lem:order1}, there exist \sdpath formulas $\pi_1, \pi_2, \pi''$ such that $\pi=\pi_{1}\cdot\pi_{2}$, $\length{\pi''} \le |\pset|$ and $\pi_{1}(e) = \pi''(f')$, see \cref{fig:Eq-with-separating}.
\begin{figure}[htb]
    \centering
   \begin{tikzpicture}
\draw[black, very thick] (0.2,1) -- (8,1);
     \filldraw[purple] (7,4) circle (3pt) node[anchor=north west]{$e$};
     \filldraw[purple] (5,1) circle (3pt) node[anchor=north west]{$\pi'(e)$};
     \filldraw[purple] (2,1) circle (3pt) node[anchor=north west]{$\pi(e)$};
     \filldraw[purple] (3.65,2) circle (3pt) node[anchor=south east]{};

\path [draw=blue,snake it] (7,4)--(5,1) (6.5,2.80) node{$\pi'$};
\path [draw=blue,snake it] (7,4)--(3.65,2) (4.5,2.80) node{$\pi_{1}$};
\path [draw=blue,snake it] (3.65,2)--(2,1) (2.35,1.65) node{$\pi_{2}$};
\path [draw=blue,snake it] (3.65,2)--(5,1) (4.75,1.65) node{$\pi''$};
\end{tikzpicture}
    \caption{Proof of \cref{prop: existence of Peq}}
    \label{fig:Eq-with-separating}
\end{figure}
In particular, $f=\pi_{2}(\pi_{1}(e))=\pi_{2}(\pi''(f'))$.
Observe that the path $\pi''\pi_{2}$ has length at most $|\pi|+|\pset|$
and at least $1$ (since $(\pi''\pi_{2})(f')=f<f'$).  \cref{prop: existence of separating
formulas} then implies the existence of a formula $\xi\in\Xi(\pi''\pi_{2})\subseteq\Xi$ 
such that $t,f\models\xi$ and $t,f'\models\neg\xi$.
Therefore $t,e$ satisfies $\p{\pi}\xi \land \p{\pi'}\neg\xi$, and
$t,e \not\models \Peq{\pi}{\pi'}$.
In the symmetric case $f'<f$ we use a formula $\xi'\in\Xi$ such that 
$t,f'\models\neg\xi'$ and $t,f\models\xi'$. We obtain
$t,e\models\p{\pi}\xi' \land \p{\pi'}\neg\xi'$,
which concludes the proof. 
\end{proof}

\subsubsection{Construction of separating formulas}\label{subsubsec: construction separating}

Here we establish \cref{prop: existence of separating formulas}. This is a complex construction which involves constructing, for each \sdpath formula $\pi$, $\locpastpdl$ event formulas $\same{j}{\pi}$ ($j\in\pset$) which capture the fact that both $\p\pi$ and $\p{\prevon j{\p\pi}}$ are defined on a $j$-event $e$, and point to the same event in the past of $e$ (\cref{lm: same is locpastpdl}); and $\locpastpdl$ formulas $\Mod{k,n}{\pi}$ ($0\le k < n$) which allow counting
modulo $n$ the number of events $f$ in the past of the current event such that 
$\pi(f)$ exists and $\pi(f)\neq\pi(f')$ for all $f'<f$ where $\pi$ is defined.

We start with an easy lemma.

\begin{lemma}\label{lem:technical}
Let $\pi$ be an \sdpath formula of length $n$. Let $t$ be a trace and
$e_{0},e_{1},\ldots,e_{n}$ be events in $t$ on each of which $\p\pi$ holds.
If $\pi(e_0) < \pi(e_1) < \cdots < \pi(e_n)$, then $e_{0} \leq \pi(e_n)$.
\end{lemma}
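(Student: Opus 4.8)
The plan is to prove the statement by induction on the length $n$ of $\pi$. The base case $n=0$ is immediate: an \sdpath formula of length $0$ is (equivalent to) a single test $\test\varphi$, so $\pi(e_0)=e_0$, the chain hypothesis degenerates to one element, and the conclusion $e_0\le e_0=\pi(e_0)$ holds trivially. For the inductive step I would put $\pi$ in standardized form and peel off its \emph{last} move, writing $\pi=\rho\cdot\prevon{j}{\phi}\cdot\test{\psi}$ with $\length\rho=n-1$. Since $\pi$ is deterministic by \cref{lem:properties} and each $\pi(e_i)$ exists, the intermediate event $h_i:=\rho(e_i)$ is well defined for every $i$, and $\pi(e_i)=\prevon{j}{\phi}(h_i)$.

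The decisive observation is that every $h_i$ is a $j$-event. Indeed $\prevon{j}{\phi}=\lmove_j\cdot(\test{\neg\phi}\cdot\lmove_j)^{*}\cdot\test\phi$ can only be applied at a $j$-event, and it is applied at $h_i$ because $\pi(e_i)$ exists. Hence all of $h_0,\dots,h_n$ and all of $\pi(e_0),\dots,\pi(e_n)$ lie on the single process line $j$ and are therefore pairwise comparable. I would then exploit the maximality built into $\prevon{j}{\phi}$: since $\pi(e_i)$ is the greatest $j$-event satisfying $\phi$ strictly below $h_i$, while $\pi(e_{i+1})$ is itself a $j$-event satisfying $\phi$ with $\pi(e_{i+1})>\pi(e_i)$, the event $\pi(e_{i+1})$ cannot sit strictly below $h_i$; comparability on line $j$ then forces $h_i\le\pi(e_{i+1})$. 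Combining this with $\pi(e_{i+1})<h_{i+1}$ yields the strictly increasing chain $h_0<h_1<\cdots<h_{n-1}$.

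At this point I would apply the induction hypothesis to the formula $\rho$, which has length $n-1$, and to the $n$ events $e_0,\dots,e_{n-1}$: as $\rho(e_0)=h_0<\cdots<h_{n-1}=\rho(e_{n-1})$, it gives $e_0\le\rho(e_{n-1})=h_{n-1}$. Finally, instantiating the inequality $h_i\le\pi(e_{i+1})$ at $i=n-1$ gives $h_{n-1}\le\pi(e_n)$, and transitivity delivers $e_0\le\pi(e_n)$, as required.

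The main obstacle, and the real crux, is recognising that one should peel off the \emph{last} prev-on move rather than the first, and that the targets of a common last move all sit on one process line, so that the order hypothesis on the $\pi(e_i)$ can be transported, via the maximality clause of $\prevon{j}{\phi}$, one level up into a strictly increasing chain of intermediate events $h_i$. Once this staircase mechanism is in place, everything else — the base case, the determinism and monotonicity of \sdpath prefixes from \cref{lem:properties}, and the bookkeeping of which events are $j$-events — is routine.
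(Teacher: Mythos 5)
Your proof is correct; it establishes the lemma by a decomposition that is the mirror image of the paper's. The paper writes $\pi=\test{\varphi'}\cdot\prevon{j}{\varphi}\cdot\pi'$, peeling off the \emph{first} move: it sets $e'_i=\prevon{j}{\varphi}(e_i)$, applies the induction hypothesis to the suffix $\pi'$ and the $n$ shifted events $e'_1,\ldots,e'_n$ to get $e'_1\le\pi(e_n)$, and then needs only one maximality argument at the top pair (from $\pi(e_0)<\pi(e_1)$ and monotonicity of $\pi'$ it gets $e'_0<e'_1$, whence $e'_1<e_0$ would force $e'_1=e'_0$, so $e_0\le e'_1$). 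You peel off the \emph{last} move instead, and the price of that choice is that the maximality argument must be run $n$ times, once per consecutive pair, to manufacture the strictly increasing chain $h_0<\cdots<h_n$ of intermediate events before the induction hypothesis (applied to the prefix $\rho$ and the original, unshifted events $e_0,\ldots,e_{n-1}$) kicks in; the final inequality $h_{n-1}\le\pi(e_n)$ then closes the argument. Both proofs rest on exactly the same two facts — sources and targets of $\prevon{j}{\phi}$ lie on the totally ordered line $E_j$, and $\prevon{j}{\phi}(h)$ is the maximum $j$-event satisfying $\phi$ strictly below $h$ — so neither is more general; the paper's version does slightly less work inside the inductive step, while yours has the arguably cleaner feature that the induction hypothesis is invoked on a prefix of $\pi$ and on the original events. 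All the supporting details you gesture at (well-definedness of the $h_i$ via determinism, the $h_i$ being $j$-events, comparability on the $j$-line) check out.
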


\begin{proof}
The proof is by induction on $n = \length\pi$, and the statement is trivial if $n=0$ 
since in this case $\pi(e_{0})=e_{0}$.

Assume now that $n > 0$, say, $\pi=\test{\varphi'}\cdot\prevon{j}{\varphi}\cdot\pi'$. For each $0\le i\le n$, let $f_i =  \pi(e_i)$. Note that, since $\p\pi$ holds at each $e_i$, then each $e_i$ is a $j$-event, and so is each $e'_{i} = \prevon{j}{\varphi}(e_{i})$. In particular, the $e_i$ and $e'_i$ are totally ordered. In addition, $f_i =\pi'(e'_{i})$, see \cref{fig:lem:technical}.
\begin{figure}[htp]
    \centering
    \begin{tikzpicture}[scale=0.55]
    \node[black] at (0,4) {j};
    \draw[black, very thick] (0.2,1) -- (15,1);
    \draw[black, very thick] (0.2,4) -- (15,4);

    \path [draw=blue,snake it] (2,4)--(1,1) (1,2) node{$\pi'$};
    \path [draw=blue,snake it] (4,4)--(2,4);

    \filldraw[purple] (4,4) circle (3pt)   node[anchor=north]{$e_{0}$};

    \filldraw[brown] (2,4) circle (3pt) node[anchor=south]{$\varphi$}
    (2.25,3.45) node[]{$e'_{0}$};

    \filldraw[orange] (1,1) circle (3pt) node[anchor=north]{$f_{0}$};
    \node[orange] at (2.5,0.5) {<};
    \node[orange] at (5.5,0.5) {<};
    \node[orange] at (10,0.5) {<};

    \path [draw=blue,snake it] (5,4)--(4,1) (4,2) node{$\pi'$};
    \path [draw=blue,snake it] (7,4)--(5,4);

    \filldraw[purple] (7,4) circle (3pt) node[anchor=north]{$e_{1}$};
    
    \filldraw[brown] (5,4) circle (3pt) node[anchor=south]{$\varphi$} (5.25,3.45) node[] {$e'_{1}$};
    \filldraw[orange] (4,1) circle (3pt) node[anchor=north]{$f_{1}$};

    \path [draw=blue,snake it] (12,4)--(11,1) (11,2) node{$\pi'$};
    \path [draw=blue,snake it] (14,4)--(12,4);

    \filldraw[purple] (14,4) circle (3pt) node[anchor=north]{$e_{n}$};
    
    \filldraw[brown] (12,4) circle (3pt) node[anchor=south]{$\varphi$} (12.25,3.45)node[]{$e'_{n}$};
    \filldraw[orange] (11,1) circle (3pt) node[anchor=north]{$f_{n}$};

    \filldraw [fill=purple, draw=black] (9,3.75) rectangle (10,4.25) (9.5,3.5) node {$e_2 \cdots e_{n-1}$};
    \filldraw [fill=orange, draw=black] (8,0.75) rectangle (9,1.25) (8.5,0.45) node[orange] {$f_2 \cdots f_{n-1}$};

    [ultra thick] \draw [draw=red, fill=red, ->] (11,1) -- (5,4);
    \node[text = red] at (7.25,2.25) {$e'_{1} \leq f_{n}$};

\draw[|-|,draw=purple] (2.35,4.5) -- (3.95,4.5) node[text =purple,midway, anchor=south]{$\lnot \varphi$};
\draw[|-|,draw=purple] (5.35,4.5) -- (6.95,4.5) node[text =purple,midway, anchor=south]{$\lnot \varphi$};
\draw[|-|,draw=purple] (12.35,4.5) -- (13.95,4.5) node[text =purple,midway, anchor=south]{$\lnot \varphi$};

\end{tikzpicture}
    \caption{Lemma ~\ref{lem:technical}}
    \label{fig:lem:technical}
\end{figure}

The induction hypothesis implies $e'_1\leq f_n$.  Moreover, $f_0 < f_1$ implies $e'_0 <
e'_1$.  If $e'_1 < e_0$, we have $e'_1 = e'_0$ by definition of $ \prevon{j}{\varphi}$,
a contradiction.  So we have $e_0 \le e'_1 \le f_n$ and this concludes the proof.
\end{proof}

We now establish the existence of the $\locpastpdl$ formulas $\same{j}{\pi}$ announced in the introduction of \cref{subsubsec: construction separating}.

\begin{proposition}\label{lm: same is locpastpdl}
  Let $\pi$ be an \sdpath formula and $j\in\pset$. There exists a $\locpastpdl$ event formula $\same{j}{\pi}$ such that, for every trace $t$ and event $e$, we have $t,e \models \same{j}{\pi}$ if and only if $\p{\pi}$ and $\p{\prevon{j}{\p{\pi}}}$ hold at $e$, and $\pi(e) = \pi(\prevon{j}{\p{\pi}}(e))$.
\end{proposition}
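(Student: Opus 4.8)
The plan is to prove \cref{lm: same is locpastpdl} by induction on $\length\pi$, reducing the equality $\pi(e)=\pi(\prevon{j}{\p{\pi}}(e))$ to a condition that a single deterministic walk along one process line can verify. The crucial constraint is that I may \emph{not} invoke \cref{prop: existence of Peq} or \cref{prop: existence of Pleq}: those depend on this proposition, so every comparison of events must be carried out by navigating a single process chain and testing local properties. The base case $\length\pi=0$ is immediate: then $\pi=\test{\varphi'}$ and $\pi(g)=g$ whenever $\varphi'$ holds at $g$, while $e'=\prevon{j}{\p{\pi}}(e)<e$; hence $\pi(e)=e\neq e'=\pi(e')$ and I set $\same{j}{\pi}=\False$.

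For the inductive step I write $\pi$ in standardized form $\test{\varphi'_{0}}\cdot\prevon{j_{1}}{\varphi_{1}}\cdot\pi_{1}$ with $\length{\pi_{1}}=\length\pi-1$, and I put $e'=\prevon{j}{\p{\pi}}(e)$, $a=\prevon{j_{1}}{\varphi_{1}}(e)$ and $a'=\prevon{j_{1}}{\varphi_{1}}(e')$, so that $\pi(e)=\pi_{1}(a)$ and $\pi(e')=\pi_{1}(a')$. Since $e$ and $e'$ satisfy $\p{\pi}$ they are $j_{1}$-events, hence $a,a'$ are $j_{1}$-events and monotonicity (\cref{lem:properties}) gives $a'\le a$. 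Two facts drive the reduction. First, enumerating the $j_{1}$-events carrying $\p{\pi_{1}}$ in $[a',a]$ as $a'=b_{0}<\cdots<b_{r}=a$, we have $b_{s-1}=\prevon{j_{1}}{\p{\pi_{1}}}(b_{s})$ and, by monotonicity, $\pi_{1}(a')=\pi_{1}(b_{0})\le\cdots\le\pi_{1}(b_{r})=\pi_{1}(a)$; therefore $\pi(e)=\pi(e')$ if and only if $\same{j_{1}}{\pi_{1}}$ holds at every $b_{s}$ ($1\le s\le r$), that is, if and only if there is \emph{no increase event} in $(a',a]$, where an increase event is a $j_{1}$-event satisfying $\p{\pi_{1}}\wedge\neg\same{j_{1}}{\pi_{1}}$ (exactly the events that $\Mod{k,n}{\pi_{1}}$ is meant to count). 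By the induction hypothesis $\same{j_{1}}{\pi_{1}}$, and so this increase-event test, is a $\locpastpdl$ event formula. Second, $e'$ is recognizable along the $j_{1}$-line: being the maximal $j$-event with $\p{\pi}$ below $e$, and itself a $j_{1}$-event, it satisfies $e'=\prevon{j_{1}}{(\on{j}\wedge\p{\pi})}(e)$, and no $j_{1}$-event strictly between $e'$ and $e$ carries $\on{j}\wedge\p{\pi}$.

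These observations let me encode the condition by a single $j_{1}$-local path formula read from $e$: descend to $a$ (the previous $\varphi_{1}$-event), test the absence of an increase event at $a$ and at every event down to and including $e'$ (recognized by $\on{j}\wedge\p{\pi}$), then descend past $e'$ to the first $\varphi_{1}$-event below it, which is exactly $a'$, testing the absence of increase events at every event strictly above $a'$ and stopping at $a'$ itself. As all moves are $\lmove_{j_{1}}$ and all tests are $\locpastpdl$ event formulas, this path is $j_{1}$-local and $\p{\cdot}$ of it is $\locpastpdl$ (as in \cref{lem:properties}). I then define $\same{j}{\pi}$ as the conjunction of $\p{\pi}$, $\p{\prevon{j}{\p{\pi}}}$, and the disjunction of this walk with the degenerate ``merge at the first step'' case $a<e'$ (equivalently $a=a'$, in which case $\pi(e)=\pi_{1}(a)=\pi_{1}(a')=\pi(e')$ trivially); the latter is detected by the elementary chain comparison $\p{(\lmove_{j_{1}}\cdot\test{\neg\varphi_{1}})^{*}\cdot\lmove_{j_{1}}\cdot\test{\neg\varphi_{1}\wedge\on{j}\wedge\p{\pi}}}$ asserting that no $\varphi_{1}$-event lies between $e'$ and $e$ on the $j_{1}$-line.

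I expect the main obstacle to be precisely that $\pi(e)$ and $\pi(e')$ are reached from $e$ by two different routes, so the interval $(a',a]$ is approached from both ends, while the only available mechanism is a walk along one process line: that single walk must both locate $a'$ (by first passing $e'$) and certify the absence of increase events on the whole sandwiched interval, without overshooting. Getting the boundary conditions right is where the care lies: the merge case $a<e'$, the possibility $a=e'$, and testing $\neg$(increase) at $a$ and at $e'$ but not at $a'$. The facts that make the one-line walk correct are monotonicity and the identification $e'=\prevon{j_{1}}{(\on{j}\wedge\p{\pi})}(e)$, which together keep every relevant event on, and recognizable within, the $j_{1}$-chain.
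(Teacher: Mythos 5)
Your proof is correct and follows essentially the same strategy as the paper's: reduce the equality $\pi(e)=\pi(\prevon{j}{\p{\pi}}(e))$ to the chain of consecutive $\p{\pi_1}$-events on the $j_1$-line between the two $\prevon{j_1}{\varphi_1}$-images, certify each consecutive pair by the inductively obtained $\same{j_1}{\pi_1}$, and use monotonicity to telescope. The only differences are organizational — you induct on the length of the standardized form and fold the $j\neq j_1$ case into the walk via the marker $\on{j}\wedge\p{\pi}$, where the paper does a structural induction with separate cases for $\test{\phi}\cdot\pi'$ and $\prevon{i}{\phi}\cdot\pi'$ and reduces $j\neq i$ to $j=i$ by $\same{j}{\pi}=\same{i}{\test{\on{j}}\cdot\pi}$ — and your boundary analysis (the merge case $a<e'$, exclusion of $a'$ from the tests) matches the paper's treatment of its sub-cases.
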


\begin{proof}
The proof is by structural induction on $\pi$. We observe that if $t,e$ satisfies $\p{\prevon{j}{\p{\pi}}}$, then $e$ is a $j$-event.
  
\paragraph*{Case 1: $\pi=\test{\phi}$}
In this case, $\pi(e) = e$ if it exists, whereas $\prevon{j}{\p\pi}$ holds only in the strict past of $e$. So we can never have $\pi(e) = \pi(\prevon{j}{\p{\pi}}(e))$ and we can let $\same{j}{\pi} =\False$.

\paragraph*{Case 2: $\pi=\test{\phi}\cdot\pi'$}

Note that $t,e \models \p\pi$ if and only if $t,e \models \phi \land \p{\pi'}$.
In that case, $\pi(e) = \pi'(e)$.

Assume that $\pi(e)$ and $\prevon{j}{\p\pi}(e)$ exist.
Consider the sequence $e_k < e_{k-1} < \dots < e_0$ of all $j$-events satisfying $\p{\pi'}$, 
starting at $e_k = \prevon{j}{\p\pi}(e)$ and ending at $e_0 = e$, see \cref{fig: testphi dot pi'}.
\begin{figure}[htb]
  \centering
  \begin{tikzpicture}[scale=0.55]
    \node at (0,5) {j};
    \draw[black, very thick] (0.2,1) -- (15,1);
    \draw[black, very thick] (0.2,5) -- (15,5);
        
    \filldraw[purple] (14,5) circle (3pt) node[anchor=north west]{$e = e_{0}$} node [anchor=south] {$\varphi$};    
    \filldraw[purple] (2,5) circle (3pt) node[anchor=south]{$\varphi$} node[anchor=north west] {$e_{k}$};
    \filldraw[orange] (1,1) circle (3pt) node[anchor=north]{$\pi(e)$};
  
    \filldraw[purple] (10,5) circle (3pt) node[anchor=north west]{$e_{1}$} node [anchor = south]{$\lnot \varphi$};
    \filldraw[purple] (4,5) circle (3pt) node[anchor=north west]{$e_{k-1}$} node [anchor=south] {$\lnot \varphi$};

  \draw[|-|,draw=purple] (2.2,6) -- (13.8,6) node[text =purple,midway, anchor=south]{$( \langle\pi'\rangle \implies (\lnot \varphi \land same_{j}(\pi'))$};
    \path [draw=blue,snake it] (2,5)--(1,1) (1,2.80) node{$\pi'$};
    \path [draw=blue,snake it] (14,5)--(1,1) (7,2.5) node{$\pi'$};
    \path [draw=blue,snake it] (10,5)--(1,1) (4,2.75) node{$\pi'$};
    \path [draw=blue,snake it] (4,5)--(1,1) (2.9,2.80) node{$\pi'$};
  \filldraw [fill=purple, draw=black] (6,4.75) rectangle (8,5.25) (6.75,4.5) node[purple] {$e_{k-2} \cdots  e_{2}$};
  \end{tikzpicture}
  \caption{Case $\pi=\test{\phi}\cdot\pi'$}  
    \label{fig: testphi dot pi'}
\end{figure}
In particular, $e_{i+1} = \prevon{j}{\p{\pi'}}(e_i)$ for every $0 \le i < k$.

If $\pi(e) = \pi(\prevon{j}{\p{\pi}}(e))$, then $\pi'$ also maps $e$,
$\prevon{j}{\p{\pi}}(e)$, and hence all the $e_i$, to the same event: equivalently,
$\pi'(e_{i+1}) = \pi'(e_i)$ for every $0\leq i < k$.

Conversely, suppose that, for every $0\leq i < k$, we have $\pi'(e_i) = \pi'(e_{i+1})$.
It follows that every $\pi'(e_i)$ is equal to $\pi'(e_0)$.  By definition, $\pi'(e_0) =
\pi(e)$ and $\pi'(e_k) = \pi(e_k)=\pi(\prevon{j}{\p{\pi}}(e))$.  Therefore,
$\prevon{j}{\p\pi}(e) = \pi(e)$.

Thus, if $\pi(e)$ and $\prevon{j}{\p{\pi}}(e)$ exist, then $\same{j}{\pi}$ holds at $e$ if
and only if the $j$-events strictly between $\prevon{j}{\p\pi}(e)$ and $e$ that do satisfy
$\p{\pi'}$ (that is: the $e_i$) do not satisfy $\phi$, and do satisfy $\same{j}{\pi'}$.
This justifies letting $\same{j}{\pi}$ be %
$$\varphi\wedge\same{j}{\pi'}\wedge
\left\langle \lmove_{j}\cdot
\Big(\test{\big(\neg\p{\pi'}\lor(\neg\varphi\land\same{j}{\pi'})\big)}\cdot\lmove_{j}\Big)^{*}
\right\rangle \p{\pi} .$$

\paragraph*{Case 3: $\pi=\prevon{i}{\varphi}\cdot\pi'$}
An event $e$ satisfies $\p\pi$ if and only if $e' = \prevon{i}{\phi}(e)$
exists and satisfies $\p{\pi'}$.  In that case, $\pi(e) = \pi'(e')$. We first consider the case where $j=i$, which is notationally slightly simpler yet contains the substance of the proof.

Let $e''$ be the immediate predecessor of $e$ on process $i$. 
Suppose first that $e''$ does not satisfy $\phi$
(see \cref{fig: case 3.1 e'' not phi}).
    \begin{figure}
      \centering
      \begin{tikzpicture}[scale=0.5]
        \node at (0,5) {i=j};
        \draw[black, very thick] (0.5,5) -- (15,5);

        \filldraw[purple] (14,5) circle (3pt) node[anchor=north west]{$e$} node [anchor=south] {$\varphi$};    
        \filldraw[purple] (2,5) circle (3pt) node[anchor=south]{$\varphi$} node[anchor=north west] {$e'$};
        \filldraw[orange] (1,1) circle (3pt) node[anchor=north]{$\pi(e)$};
    
        \filldraw[purple] (12.2,5) circle (3pt) node[anchor=north]{$\equalto{e''}{\xleftarrow[]{}_{i}(e)}$} node [anchor = south]{$\lnot \varphi$};

    \draw[|-|,draw=purple] (2.2,6) -- (13.8,6) node[text =purple,midway, anchor=south]{$\lnot \varphi$};
        \path [draw=blue,snake it] (2,5)--(1,1) (1,2.80) node{$\pi'$};
         \path [draw=blue,snake it] (14,5)--(2,5);
    \end{tikzpicture}
      \caption{Case $\pi=\prevon{i}{\varphi}\cdot\pi'$ and $t,e'' \models  \lnot \varphi$}
      \label{fig: case 3.1 e'' not phi}
    \end{figure}
Then $e' < e''$ and $e' = \prevon{i}{\phi}(e'')$.  It follows that $e''$ satisfies
$\p\pi$, $e'' = \prevon{i}{\p{\pi}}(e)$ and $\pi(e'') = \pi'(e') = \pi(e)$.

Suppose now that $e''$ satisfies $\phi$ (see \cref{fig: case 3.1 e'' phi}).
    \begin{figure}
      \centering
      \begin{tikzpicture}
        \node at (0,4) {i=j};
        \draw[black, very thick] (0.5,4) -- (8,4);

        \draw (1,4) to (1,4.75);
        \filldraw[orange] (0.5,0.5) circle (2pt) node[anchor=north]{$\pi(e)$};
        \filldraw[purple] (1,4) circle (2pt) node[anchor=north east]{$f'$} to (1,4.75) node[anchor=south]{$\varphi$} (1,5.25) node[]{$\langle\pi'\rangle$};
        \filldraw[purple] (7,4) circle (2pt) node[anchor=north]{$e'=e''$} to (7,4.75) node[anchor=south]{$\varphi$} (7,5.25) node[]{$same_{i}(\pi')$};
        \filldraw[purple] (5.5,4) circle (2pt) node[anchor=north]{} to (5.5,4.75) node[anchor=south]{$\lnot \varphi$} (5.5,5.25) node[]{$same_{i}(\pi')$};
        \filldraw[purple] (4,4) circle (2pt) node[anchor=north]{} to (4,4.75) node[anchor=south]{$\lnot \varphi$} (4,5.25) node[]{$same_{i}(\pi')$};
        \filldraw[purple] (2.5,4) circle (2pt) node[anchor=north]{} to (2.5,4.75) node[anchor=south]{$\lnot \varphi$} (2.5,5.25) node[]{$same_{i}(\pi')$};
         \filldraw[purple] (7.8,4) circle (2pt) node[anchor=north west]{$e$} to (7.8,5.5) node[anchor=south]{$same_{i}(\pi)$};

        \path [draw=blue,snake it] (0.5,0.5)--(1,4) (0.5,2.50) node{$\pi'$};
       \path [draw=blue,snake it] (0.5,0.5)--(7,4) (1.5,3.5) node{$=$};
       \path [draw=blue,snake it] (0.5,0.5)--(5.5,4) (2.75,3.50) node{$=$};
       \path [draw=blue,snake it] (0.5,0.5)--(4,4) (4,3.50) node{$=$};
       \path [draw=blue,snake it] (0.5,0.5)--(2.5,4) (5.5,3.50) node{$=$} (4.5,2.5) node {$\pi'$};
    
    \draw[|-|,draw=purple] (1.1,4.1) -- (2.4,4.1) node[text =purple,midway, anchor=south]{$\lnot \langle\pi'\rangle$};
    \draw[|-|,draw=purple] (2.6,4.1) -- (3.9,4.1) node[text =purple,midway, anchor=south]{$\lnot \langle\pi'\rangle$};
    \draw[|-|,draw=purple] (4.1,4.1) -- (5.4,4.1) node[text =purple,midway, anchor=south]{$\lnot \langle\pi'\rangle$};
    \draw[|-|,draw=purple] (5.6,4.1) -- (6.9,4.1) node[text =purple,midway, anchor=south]{$\lnot \langle\pi'\rangle$};
       
    \end{tikzpicture}
      \caption{Case $\pi=\prevon{i}{\varphi}\cdot\pi'$ and $t,e'' \models \phi$}
      \label{fig: case 3.1 e'' phi}
    \end{figure}
Then $e' = e''$.  Assume that $f = \prevon{i}{\p\pi}(e)$ exists.  Then $f\leq e'$ and $f$
satisfies $\p{\pi}$, so $f' = \prevon{i}{\phi}(f)$ exists and satisfies $\p{\pi'}$.

We first verify that $f' = \prevon{i}{\phi \land \p{\pi'}}(e')$: if it is not the case,
there exists an $i$-event $g$ strictly between $f'$ and $e'$, satisfying $\phi \land
\p{\pi'}$.  Then the immediate successor of $g$ on the $i$-line satisfies $\p{\pi}$.
This yields a contradiction since $g > f' =
\prevon{i}{\varphi}(\prevon{i}{\p{\pi}}(e))$.

Let now $f' = e_k < \dots < e_1 = e'$ be the sequence of $i$-events between $f'$ and $e'$
which
satisfy $\p{\pi'}$.  In particular, for each $1\le h < k$, $e_{h+1} =
\prevon{i}{\p{\pi'}}(e_h)$.

Now $\pi(e) = \pi(f)$ if and only if $\pi'(e') = \pi'(f')$. This is equivalent to requiring that $\pi'(e_h) = \pi'(e') = \pi'(e_1)$ for all $h$, which in turn is equivalent to each $e_h$ ($1\le h < k$) satisfying $\same{i}{\pi'}$. Since $f' = \prevon{i}{\phi \land \p{\pi'}}(e')$ note that all events in the interval $(f',e')$ either satisfy $\neg\p{\pi'}$ or they are among $e_h$ hence they satisfy $\neg\phi \land \same{i}{\pi'}$

This justifies letting $\same{j}{\pi}$ be the conjunction of $\p{\pi}$ and
  \begin{align*}
\p{\lmove_{i}} &\neg\phi \lor \Big\langle \lmove_{i} \cdot\test{(\varphi \land \same{i}{\pi'})} \cdot \lmove_{i} \cdot \\
  & \Big( \test{\big( \p{\pi'} \implies (\neg\varphi \wedge \same{i}{\pi'}) \big)} \cdot \lmove_{i} \Big)^{*}
  \cdot \test{\big( \p{\pi'}\wedge\varphi \big)} 
  \Big\rangle.
  \end{align*}

\paragraph*{Sub-case 3.2: $i \ne j$}
Here we let $\same{j}{\pi} = \same{i}{\on{j}?\cdot\pi}$. Note that Case 2 gives a formula for $ \same{i}{\on{j}?\cdot\pi}$ in terms of $\same{i}{\pi}$, which in turn is constructed in Case 3.1 above.

To justify this choice for $\same{j}{\pi}$, we observe that any event satisfying $\p\pi$
is an $i$-event, since $\pi$ starts with $\lmove_i$.  Similarly, if $\prevon{j}{\p{\pi}}$ holds at an event $e$, then both $e$ and $f=\prevon{j}{\p{\pi}}(e)$ are $j$-events (since $\prevon{j}{\p{\pi}}$ starts and ends with $\lmove_j$)%
.  As a result, $\pi$ and $\prevon{j}{\p{\pi}}$ are defined at $e$ if
and only if $\test{\on{j}}\cdot\pi$ and $\prevon{i}{\p{\test{\on{j}}\cdot\pi}}$ are. Moreover, in that case, we have 
$f = \prevon{j}{\p{\pi}}(e) = \prevon{i}{\p{\test{\on{j}}\cdot\pi}}(e)$, $\pi(e) = (\test{\on{j}}\cdot\pi)(e)$ and 
$\pi(f) = (\test{\on{j}}\cdot\pi)(f)$. Therefore $\pi(e) = \pi(f)$ if and only if $(\on{j}?\cdot\pi)(e) = (\on{j}?\cdot\pi)(f)$. This concludes the proof.
\end{proof}

We now use Lemma~\ref{lm: same is locpastpdl} to construct, for each \sdpath formula $\pi$, $\locpastpdl$ event formulas which count modulo $n$ the number of events $f$ in the past of the current event $e$ with different $\pi$-images.

\begin{lemma}\label{lem:path-mod}
  Let $\pi$ be an \sdpath formula of positive length, say $\pi = \test{\phi'}\cdot\prevon{i}{\phi}\cdot\pi'$. For every $n>1$ and $0\leq k<n$, there exists a $\locpastpdl$ 
  event formula $\Mod{k,n}{\pi}$ such that, for any trace $t$ and event $e$ in $t$,
  $t,e\models\Mod{k,n}{\pi}$ if and only if $t,e\models\p{\pi}$ and
  $$
  \left|\Big\{f<e \mid t,f\models\p{\pi}\wedge\neg\same{i}{\pi}\Big\}\right| \enspace=\enspace k\pmod n.
  $$
\end{lemma}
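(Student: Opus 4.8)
The plan is to collapse the modular count to a single process line and then realise it with a deterministic path formula that jumps from one relevant event to the previous one. First I would record two observations. Set $\psi := \p{\pi}\wedge\neg\same{i}{\pi}$; by \cref{lem:properties}(3) and \cref{lm: same is locpastpdl} this is a $\locpastpdl$ event formula, and it is exactly the formula marking the events $f$ that the lemma asks us to count. Since $\pi=\test{\phi'}\cdot\prevon{i}{\phi}\cdot\pi'$ begins (after a test) with the move $\prevon{i}{\phi}$, every event at which $\p\pi$ holds is an $i$-event; in particular every $\psi$-event is an $i$-event, and whenever $\p\pi$ holds at the current event $e$, the event $e$ itself is an $i$-event. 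Consequently the set $\{f<e \mid t,f\models\psi\}$ consists of $i$-events lying strictly below $e$ on the totally ordered $i$-line, so counting them causally is the same as counting them along process $i$. This is the crucial reduction: the genuinely concurrent problem collapses to a word-like counting problem on one line.

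Next I would introduce the deterministic path formula $\rho := \prevon{i}{\psi}$, which maps an event to the largest $\psi$-event strictly $i$-below it (and is undefined if there is none); write $\rho^{m}$ for the $m$-fold concatenation $\rho\cdot\rho\cdots\rho$. Applying $\rho$ repeatedly from $e$ enumerates the $\psi$-events below $e$ in decreasing order, so the $m$-th such event is reached by $\rho^{m}$, and the number $M_\psi := |\{f<e\mid t,f\models\psi\}|$ equals the largest $m$ for which $\rho^{m}(e)$ exists. To read off $M_\psi$ modulo $n$ I would define
$$\Mod{k,n}{\pi} \;:=\; \p{\pi}\;\wedge\;\big\langle \rho^{k}\cdot(\rho^{n})^{*}\cdot\test{\neg\p{\rho}}\big\rangle.$$
The first conjunct enforces $t,e\models\p\pi$, as required. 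For the second conjunct, a path exists precisely when, for some $q\ge 0$, the $(k+qn)$-th $\psi$-event below $e$ exists and has no $\psi$-event strictly $i$-below it (this last condition being exactly $\neg\p{\rho}$, the final test). Hence the path exists iff $M_\psi = k+qn$ for some $q\ge 0$; and since $0\le k<n$, this is equivalent to $M_\psi\equiv k\pmod n$.

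Finally I would verify that $\Mod{k,n}{\pi}$ is a genuine $\locpastpdl$ formula. All top-level moves of $\rho$, hence of $\rho^{k}\cdot(\rho^{n})^{*}\cdot\test{\neg\p{\rho}}$, are $\lmove_i$, so this path formula is $i$-local; its test subformulas ($\psi$, $\neg\psi$, and $\neg\p{\rho}$) are $\locpastpdl$ event formulas, the last one because $\p{\rho}$ is $\p{\cdot}$ of an $i$-local path formula. Together with the $\locpastpdl$ conjunct $\p\pi$, this shows that $\Mod{k,n}{\pi}$ is $\locpastpdl$.

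I do not expect a serious obstacle here: the conceptually difficult work, namely the inductive construction of $\same{i}{\pi}$, has already been carried out in \cref{lm: same is locpastpdl}, and the monotonicity of \sdpath formulas (\cref{lem:properties}(2)) guarantees that $\neg\same{i}{\pi}$ marks precisely the events at which a strictly new $\pi$-target first appears, matching the informal description of the count. The two points requiring care are the reduction to the $i$-line (justifying that causal ``$f<e$'' together with $\psi$ forces $f$ onto process $i$ strictly below $e$) and the off-by-one bookkeeping in the modular count, in particular the use of $0\le k<n$ to rule out $M_\psi<k$.
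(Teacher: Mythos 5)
Your proposal is correct and follows essentially the same route as the paper: both reduce the causal count to counting $\psi$-marked events ($\psi=\p{\pi}\wedge\neg\same{i}{\pi}$) along the totally ordered $i$-line and realise the modular count by iterating the deterministic local jump $\prevon{i}{\psi}$ inside a starred $i$-local path ending in a ``no earlier $\psi$-event'' test. The only (cosmetic) difference is that you package the whole thing as the single path $\rho^{k}\cdot(\rho^{n})^{*}\cdot\test{\neg\p{\rho}}$, whereas the paper defines $\Mod{0,n}{\pi}$ with the starred path and obtains $\Mod{k,n}{\pi}$ by recursively prefixing $\p{\prevon{i}{\psi}}$; the two formulations are logically equivalent.
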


\begin{proof}
We note that every event satisfying $\p{\pi}$ is an $i$-event. Let $\psi=\p{\pi}\land\neg\same{i}{\pi}$.  

If $t,e \models \p\pi$, then the events $f < e$ satisfying $\psi$ are the events
$(\prevon{i}{\psi})^m(e)$ ($m > 0$), and the minimal such event (which does not satisfy
$\prevon{i}{\psi}$) satisfies $\neg\big(\p{\lmove_i^+}\psi\big)$, see \cref{fig: mod counting lemma}.
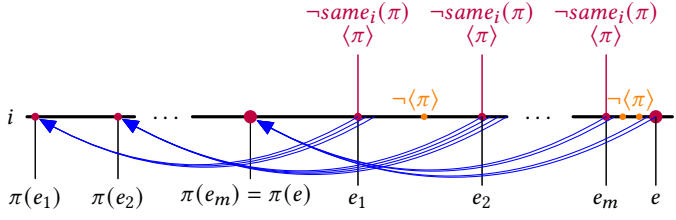
\begin{figure}[htb]
  \centering
  \begin{tikzpicture}[line cap = round, line join = round, >=triangle 45,scale=1.1]
    \node at (0,2.5) {$i$};
    \draw[black, very thick] (0.2,2.5) -- (1.5,2.5);
    \draw[black, very thick] (2.2,2.5) -- (6,2.5);
    \draw[black, very thick] (6.8,2.5) -- (8,2.5);
     
      \filldraw[purple] (4.2,2.5) to (4.2,3.25) node[anchor=south]{$\p{\pi}$} (4.2,3.75) node[]{$\lnot same_{i}(\pi)$};
      \filldraw[black] (4.2,2.5) to (4.2,1.67) node[anchor=north]{$e_{1}$};

      \filldraw[purple] (7.2,2.5) to (7.2,3.25) node[anchor=south]{$\p{\pi}$} (7.2,3.75) node[]{$\lnot same_{i}(\pi)$};
      \filldraw[black] (7.2,2.5) to (7.2,1.67) node[anchor=north]{$e_{m}$};
      \node at (6.4,2.46) {$\cdots$};

      \filldraw[black] (0.3,2.5) to (0.3,1.75) node[anchor=north]{$\pi(e_{1})$};
      \filldraw[black] (1.3,2.5) to (1.3,1.75) node[anchor=north]{$\pi(e_{2})$};
      \filldraw[black] (2.9,2.5) to (2.9,1.75); 
      \node at (2.85,1.55) {$\pi(e_{m})=\pi(e)$};

      \node[big dot] (pi1) at (0.3,2.5) {};
      \node[big dot] (e1) at (4.2,2.5) {};
      \draw[->, blue] (4.2,2.5) to[bend left] (pi1);
      \draw[->, blue] (4.3,2.5) to[bend left] (pi1);
      \draw[->, blue] (4.4,2.5) to[bend left] (pi1);

      \node[bigorange dot] () at (5,2.5) {};
      \node [orange] (notpi) at (4.9,2.7) {$\lnot\p{\pi}$};

      \node[big dot] (pi3) at (1.3,2.5) {};
      \node[big dot] (e3) at (5.7,2.5) {};
      \node at (1.9,2.46) {$\cdots$};
      \draw[->, blue] (5.7,2.5) to[bend left] (pi3);
      \draw[->, blue] (5.8,2.5) to[bend left] (pi3);
      \draw[->, blue] (5.9,2.5) to[bend left] (pi3);
      \draw[->, blue] (6.0,2.5) to[bend left] (pi3);

      \filldraw[purple] (5.7,2.5) to (5.7,3.25) node[anchor=south]{$\p{\pi}$} (5.7,3.75) node[]{$\lnot same_{i}(\pi)$};
      \filldraw[black] (5.7,2.5) to (5.7,1.67) node[anchor=north]{$e_{2}$};

      \node[bigbig dot] (piem) at (2.9,2.5) {};
      \node[big dot] (em) at (7.2,2.5) {};
      \node[bigbig dot] (e) at (7.8,2.5) {};

      \filldraw[black] (7.8,2.5) to (7.8,1.67) node[anchor=north] {$e$};

      \node[bigorange dot] () at (7.4,2.5) {};
      \node[bigorange dot] () at (7.6,2.5) {};

      \node [orange] (notpi) at (7.5,2.7) {$\lnot\p{\pi}$};

      \draw[->, blue] (7.2,2.5) to[bend left] (piem);
      \draw[->, blue] (7.3,2.5) to[bend left] (piem);
      \draw[->, blue] (7.7,2.5) to[bend left] (piem);
      \draw[->, blue] (7.8,2.5) to[bend left] (piem);

\end{tikzpicture}
  \caption{Mod Counting}  
  \label{fig: mod counting lemma}
\end{figure}
As a result, choosing the following formulas for the $\Mod{k,n}{\pi}$
\begin{align*}
\Mod{0,n}{\pi} & = \p{\pi} \land
    \Big\langle \Big( \big( \prevon{i}{\psi} \big)^{n} \Big)^{*} \Big\rangle
    \neg\big(\p{\lmove_{i}^{+}}\psi\big)
\\
    \Mod{k,n}{\pi} & = \p{\pi} \land \p{\prevon{i}{\psi}} \Mod{k-1,n}{\pi}
    \text{ for every } 1\leq k<n.
\end{align*}
proves the statement.
\end{proof}

We can finally complete the task of \cref{subsec:eliminateEventConstants}.

\begin{proof}[Proof of \cref{prop: existence of separating formulas}]
Let $\pi$ be an \sdpath formula of positive length, say $\pi = \test{\phi'}\cdot\prevon{i}{\phi}\cdot\pi'$, and let $n=\length{\pi}+1$. In particular, $\p\pi$ holds only at $i$-events. Let
  \begin{multline*}
    \Xi(\pi)\enspace=\enspace\Big\{\lnot \p{\pi},\p{\pi},\same{i}{\pi},\neg\same{i}{\pi}\Big\} \\ \enspace\cup\enspace
  \Big\{\Mod{k,n}{\pi}\mid 0\leq k<n\Big\} \,.
  \end{multline*}
We want to show that $\Xi(\pi)$ is a separating set of formulas for $\pi$, that is, if $t$
is a trace, $e$ is an event and $\pi(e)$ exists, then there exist $\xi \in \Xi(\pi)$ such
that $\xi$ holds at $e$ and not at $\pi(e)$. It follows easily from the definition of
$\Xi(\pi)$ that there is also a formula $\xi'\in\Xi(\pi)$ such
that $\xi'$ holds at $\pi(e)$ and not at $e$.

Suppose that this is not the case: there exists a trace $t$ and an event $e$ such that
$\pi(e)$ exists and, for every formula $\xi\in \Xi(\pi)$, if $\xi$ holds at $e$, then it
holds at $\pi(e)$ as well.  Since $\p\pi \in \Xi(\pi)$ holds at $e$, the event $\pi(e)$
also satisfies $\p\pi$.

Let $e_{1} < \cdots < e_{m}$ be the $i$-events $f$ such that $\pi(e) < f \le e$ and
$t,f\models\p{\pi}\land\neg\same{i}{\pi}$.  If $m = 0$, then every $i$-event $f$ such that
$\pi(e) < f \le e$ which satisfies $\p\pi$, satisfies $\same{i}{\pi}$ as well.  In
particular, $e$ satisfies $\same{i}{\pi}$ and it follows that $\pi(\pi(e)) = \pi(e)$.
This cannot be since $\length\pi > 0$.  Therefore $m > 0$.

By construction, $\pi(e) < e_1$ and $e_{h-1} \le \prevon{i}{\p\pi}(e_h)$ for each $1 < h \le m$. By definition of $\same{i}{\pi}$, we have $\pi(e_{1})<\cdots<\pi(e_{m}) \le \pi(e) < e_1$.
If $m\geq n = \length\pi +1$, then Lemma~\ref{lem:technical} yields $e_{1}\leq\pi(e_{m})$, a contradiction. Therefore, $1 \le m < n$.

Now consider the formula $\same{i}{\pi}$ and $\neg\same{i}{\pi}$ in $\Xi(\pi)$. Assuming, as we do, that $\Xi(\pi)$ is not separating, shows that $e$ satisfies $\same{i}{\pi}$ if and only if $\pi(e)$ does. Therefore $e$ satisfies $\p\pi \land \neg\same i\pi$ if and only $\pi(e)$ does. As a result, $m$ is also the number of events $f$ satisfying $\p\pi \land \neg\same i\pi$ and $\pi(e)\leq f<e$.

Finally, let $k$ be such that $t,e\models\Mod{k,n}{\pi}$ (there is necessarily such a $k$). Since $\Mod{k,n}{\pi} \in \Xi(\pi)$, $\pi(e)$ too satisfies $\Mod{k,n}{\pi}$, and this implies that $m=0\mod n$ --- again a contradiction since we verified that $1 \le m < n$. This concludes the proof that $\Xi(\pi)$ is a separating set.
\end{proof}%

\subsection{Expressing Constant Trace Formulas in $\locpastpdl$}\label{subsec:eliminateTraceConstants}

This subsection is dedicated to proving the analog of \cref{Yconstants theorem} for constant trace formulas.

\begin{theorem}\label{Lconstants theorem}
  The constant trace formulas $\Lleq{i}{j}$ and $\Lleq{i,j}{k}$ can be expressed in $\locpastpdl$
\end{theorem}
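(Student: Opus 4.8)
The plan is to mirror the construction of \cref{subsec:eliminateEventConstants}, replacing the single evaluation event by the pair of leading events of two processes. The first, easy, observation is that equality of leading events is directly expressible: $\Leq{i}{j}$ holds if and only if $\EM_i\on{j}\wedge\EM_j\on{i}$, since $\Li\in E_j$ forces $\Li\le\Lj$ and symmetrically. Using this together with a case distinction on whether $\Lj$ is an $i$-event, I would reduce $\Lleq{i}{j}$ to the genuinely hard case $\Li<\Lj$ with $\Lj\notin E_i$, in which $\Li=\Y_i(\Lj)=\Levent{j,i}{t}$; by \cref{lem:Y_i and sdpath} this event is reached from $\Lj$ by an sd-path of length $<|\pset|$. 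The same reasoning applied to $\Levent{i,j}{t}=\Y_j(\Li)$ reduces $\Lleq{i,j}{k}$ to comparing two events that are each addressed by a bounded sd-path from a leading event.

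Accordingly, I would introduce trace-level comparison formulas $\LEQ{i}{j}{\pi}{\pi'}$ and $\EQ{i}{j}{\pi}{\pi'}$, the exact analogues of $\Pleq{\pi}{\pi'}$ and $\Peq{\pi}{\pi'}$, asserting respectively that $\pi(\Li)$ and $\pi'(\Lj)$ exist with $\pi(\Li)\le\pi'(\Lj)$, resp.\ $\pi(\Li)=\pi'(\Lj)$. The constants are then obtained exactly as in \cref{cor:comp-with-sdpaths1,cor:comp-with-sdpaths2}: $\Lleq{i}{j}$ becomes a boolean combination, over sd-paths $\pi,\pi'$ of length between $1$ and $|\pset|-1$, of existence assertions of the form $\EM_\ell\p{\sigma}$ and of the comparison formulas $\LEQ{\cdot}{\cdot}{\cdot}{\cdot}$, and similarly for $\Lleq{i,j}{k}$. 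The reduction of $\LEQ$ to $\EQ$ mirrors the proof of \cref{prop: existence of Pleq}: by the analogue of \cref{lem:order1}, $\pi(\Li)\le\pi'(\Lj)$ holds iff some prefix of $\pi$ read from $\Li$ and some short sd-path read from $\Lj$ land on the same event, which is an $\EQ$-statement.

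The heart of the matter, and the step I expect to be the main obstacle, is constructing $\EQ{i}{j}{\pi}{\pi'}$, i.e.\ deciding whether two events addressed from two \emph{different} leading events coincide. This is exactly the ``gossip'' content --- determining the causal order of leading events --- that the paper avoids solving with timestamps. I would define $\EQ{i}{j}{\pi}{\pi'}$ to hold when $\p{\pi}$ holds at $\Li$, $\p{\pi'}$ holds at $\Lj$, and $\EM_i\p{\pi}\xi \Leftrightarrow \EM_j\p{\pi'}\xi$ for every $\xi$ in a finite set of separating formulas. Correctness in one direction is clear; for the other, if the addressed events $x=\pi(\Li)$ and $y=\pi'(\Lj)$ differ, one argues as in \cref{prop: existence of Peq}: either their locations differ, and some $\on{p}$ separates them, or they are comparable, say $x<y$, and --- crucially --- the smaller event is reached from the larger leading event by an sd-path of length $<|\pset|$ (again as in \cref{lem:Y_i and sdpath}), so that a \emph{fixed finite} union of the separating sets of \cref{prop: existence of separating formulas} contains a formula distinguishing $x$ from $y$, which $\EM_i$ and $\EM_j$ can then test. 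The delicate point is precisely this bounded-reachability claim: it is what lets a comparison across two distinct anchors be carried out with boundedly many separating formulas, and it rests on the fact that, for leading events, comparability entails a short causal path --- the same phenomenon that drives the proof of \cref{lem:Y_i and sdpath}.
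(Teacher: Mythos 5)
Your proposal follows essentially the same route as the paper: reduce the constants to trace-level comparison formulas $\LEQ{i}{j}{\pi}{\pi'}$ and $\EQ{i}{j}{\pi}{\pi'}$ anchored at the leading events, reduce $\LEQ$ to $\EQ$ via the leading-event strengthening of \cref{lem:order1} (the paper's \cref{lem:prefix-maximum}, whose extra case $\Li<f$ is exactly your ``comparability of leading events entails a short causal path''), and build $\EQ$ from a bounded union of the separating sets of \cref{prop: existence of separating formulas}. The only imprecision is quantitative: when $x=\pi(\Li)<y=\pi'(\Lj)$, the smaller event is reached from the larger \emph{addressed} event $y$ by an \sdpath of length at most $\length{\pi}+|\pset|$ (a bridge of length at most $|\pset|$ composed with a suffix of $\pi$), not from the leading event by a path of length $<|\pset|$ --- still a fixed bound depending only on $\pi,\pi'$, so your finite union of separating sets goes through as intended.
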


The proof strategy is the same as for \cref{Yconstants theorem}. We start by strengthening \cref{lem:order1}, using the fact that we are now only concerned with events at the end of the trace.

\begin{lemma}\label{lem:prefix-maximum}
Let $i$ be a process, $t$ be a trace, $e$ be the maximum $i$-event of $t$ and $f$ be any event. Let $\pi$ be an \sdpath formula such that $\pi(e)$ is defined. Then $\pi(e) \leq f$ if and only if $\pi$ can be factored as $\pi = \pi_{1}\cdot\pi_{2}$ and there exists an \sdpath formula $\pi_{3}$ of length at most $|\pset|$ such that $\pi_{1}(e) = \pi_{3}(f)$.         
\end{lemma}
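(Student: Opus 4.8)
The statement closely parallels \cref{lem:order1}, with the hypothesis $e\not<f$ replaced by the stronger assumption that $e$ is the \emph{maximum} $i$-event of $t$. The plan is to reuse the proof of \cref{lem:order1} almost verbatim, and to exploit the maximality of $e$ precisely to remove the earlier restriction that one of the two events not lie strictly below the other. First I would dispatch the easy direction exactly as before: if $\pi_1(e)=\pi_3(f)$ for some factorization $\pi=\pi_1\cdot\pi_2$, then $\pi(e)\le\pi_1(e)=\pi_3(f)\le f$, using that $\pi_2$ (resp.\ $\pi_3$) can only move into the past, so that $\pi(e)\le\pi_1(e)$ and $\pi_3(f)\le f$.

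For the converse, assume $\pi(e)\le f$ and write $\pi$ in standardized form
$$\pi=\test{\varphi'_{0}}\cdot\prevon{j_{1}}{\varphi_{1}}\cdot\test{\varphi'_{1}}
\cdot\cdots\cdot\prevon{j_{k}}{\varphi_{k}}\cdot\test{\varphi'_{k}}.$$
Set $e_0=e$ and $e_i=\prevon{j_i}{\varphi_i}(e_{i-1})$, so that $e_k=\pi(e)\le f$. The key point where maximality enters is the base step of the \cref{lem:order1} argument: there, one needs $e_0=e\not\le f$ to guarantee that the least index $i$ with $e_i\le f$ satisfies $i\ge 1$. Here I cannot assume $e\not\le f$; instead I treat the case $e\le f$ separately. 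Since $e$ is the \emph{maximum} $i$-event and $e\le f$, every $i$-event lies in the past of $f$; in particular $\pi(e)=e_k$ is reached from $e$ by moves, and I claim the whole of $\pi$ can be replayed from $f$. Concretely, I would take $\pi_1=\test{\varphi'_0}$ (so $\pi_1(e)=e$) and observe that $e=\prevon{i}{\on i}(f')$ for a suitable short \sdpath reaching an $i$-event; more cleanly, since $e$ is the maximum $i$-event and $e\le f$, \cref{lem:Y_i and sdpath} (applied when $e<f$) or the trivial test $\test\True$ (when $e=f$) produces an \sdpath formula $\pi_3$ of length $<|\pset|$ with $e=\pi_3(f)$, and the factorization $\pi=\pi_1\cdot\pi_2$ with $\pi_1=\test{\varphi'_0}$ gives $\pi_1(e)=e=\pi_3(f)$, as required.

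In the remaining case $e\not\le f$, the original argument of \cref{lem:order1} applies unchanged: let $i$ be the least index with $e_i\le f$, necessarily $1\le i\le k$; set $\pi_1=\test{\varphi'_{0}}\cdot\prevon{j_{1}}{\varphi_{1}}\cdots\prevon{j_{i}}{\varphi_{i}}$ so that $\pi_1(e)=e_i$; let $g$ be the maximal $j_i$-event with $g\le f$, which satisfies $e_i\le g<e_{i-1}$ since $e_{i-1}\not\le f$; use \cref{lem:Y_i and sdpath} to write $g=\Y_{j_i}(f)=\pi_3'(f)$ with $\length{\pi_3'}<|\pset|$ (or $g=f$ with $\pi_3'=\test\True$); and finally set $\pi_3=\pi_3'$ if $g=e_i$, or $\pi_3=\pi_3'\cdot\prevon{j_i}{\varphi_i}$ otherwise, so that $\pi_1(e)=e_i=\pi_3(f)$ and $\length{\pi_3}\le|\pset|$.

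\textbf{Main obstacle.} The only genuinely new content beyond \cref{lem:order1} is the handling of the case $e\le f$, which the weaker hypothesis of \cref{lem:order1} explicitly excluded. The subtlety is that one must produce the short \sdpath $\pi_3$ pointing from $f$ back to $e$; this is exactly where the maximality of $e$ as an $i$-event is indispensable, since it guarantees $e=\Y_i(f)$ (when $e<f$) and hence that $e$ is reachable by a bounded \sdpath via \cref{lem:Y_i and sdpath}. Verifying that the length bound $|\pset|$ is respected in all branches is routine but should be stated carefully.
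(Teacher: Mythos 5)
Your proposal is correct and follows essentially the same route as the paper: the easy direction by concatenation, and for the converse a case split on whether $e$ lies below $f$, invoking \cref{lem:order1} when it does not, and using the maximality of $e$ (which gives $e=\Y_i(f)$) together with \cref{lem:Y_i and sdpath} when it does. The paper merely cites \cref{lem:order1} in the first case rather than replaying its argument, and takes $\pi_1=\test{\True}$ instead of $\test{\varphi'_0}$ in the second; these differences are cosmetic.
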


\begin{proof}
  If $\pi_1, \pi_2, \pi_3$ exist such that $\pi =\pi_{1} \cdot \pi_{2}$ and $\pi_{3}(f)
  =\pi_{1}(e)$ then concatenating both sides by $\pi_{2}$ we have $\pi(e) =
  \pi_2(\pi_{1}(e)) \le \pi_{1}(e) = \pi_{3}(f) \le f$.

Conversely, suppose that $\pi(e) \le f$.  If $e \nless f$, we apply \cref{lem:order1} to
get the expected result.  Suppose now that $e < f$.  Since $e$ is the maximum $i$-event,
\cref{lem:Y_i and sdpath} shows that there exists an \sdpath formula $\pi'$ of length at
most $|\pset|$ such that $e = \pi'(f)$.  We conclude by letting $\pi_3 = \pi'$, $\pi_1 =
\test\True$ and $\pi_2 = \pi$.
\end{proof}

\begin{lemma}\label{lem: LEQ EQ pi1 pi2}
Let $i, j$ be processes and let $\pi_1, \pi_2$ be \sdpath formulas.  There exist
$\locpastpdl$ trace formulas $\EQ{i}{j}{\pi_{1}}{\pi_{2}}$ and
$\LEQ{i}{j}{\pi_{1}}{\pi_{2}}$ such that for all traces $t$, $t$ satisfies
$\EQ{i}{j}{\pi_{1}}{\pi_{2}}$ (resp.\ $\LEQ{i}{j}{\pi_{1}}{\pi_{2}}$) if and only if
$\pi_{1}(\Li)$ and $\pi_{2}(\Lj)$ exist, and $\pi_{1}(\Li) = \pi_{2}(\Lj)$ (resp.\
$\pi_{1}(\Li) \leq \pi_{2}(\Lj)$).
\end{lemma}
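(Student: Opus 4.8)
The plan is to mirror the development of \cref{Yconstants theorem}, working at the level of trace formulas: first I would construct the equality formula $\EQ{i}{j}{\pi_{1}}{\pi_{2}}$ (the analog of \cref{prop: existence of Peq}), and then build $\LEQ{i}{j}{\pi_{1}}{\pi_{2}}$ out of it (the analog of \cref{prop: existence of Pleq}), with \cref{lem:prefix-maximum} playing the role that \cref{lem:order1} played in the event case. The only new ingredient is that the modalities $\EM_{i}$ and $\EM_{j}$ let me evaluate event formulas at the anchors $\Li$ and $\Lj$. I will use repeatedly that, since \sdpath formulas are deterministic (\cref{lem:properties}), $\EM_{i}\p{\pi_{1}}\xi$ holds exactly when $\pi_{1}(\Li)$ is defined and satisfies $\xi$, and that $\p{\pi_{1}}\xi$ is a $\locpastpdl$ event formula whenever $\xi$ is (same argument as \cref{lem:properties}(3), pushing $\xi$ through the local pieces of $\pi_{1}$).

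For the equality formula I set
$$\EQ{i}{j}{\pi_{1}}{\pi_{2}} = \EM_{i}\p{\pi_{1}} \land \EM_{j}\p{\pi_{2}} \land \neg\bigvee_{\xi\in\Xi}\big(\EM_{i}\p{\pi_{1}}\xi \land \EM_{j}\p{\pi_{2}}\neg\xi\big),$$
where $\Xi = \{\on{\ell}\mid \ell\in\pset\}\cup\bigcup_{\sigma}\Xi(\sigma)$, the union running over \sdpath formulas $\sigma$ of length between $1$ and $\max(|\pi_{1}|,|\pi_{2}|)+|\pset|$; this is finite since there are only finitely many \sdpath formulas of bounded length. Writing $g_{1}=\pi_{1}(\Li)$ and $g_{2}=\pi_{2}(\Lj)$, the first two conjuncts force $g_{1},g_{2}$ to exist and the last asserts that no formula of $\Xi$ separates $g_{1}$ from $g_{2}$. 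If $g_{1}=g_{2}$ the last conjunct is trivially true. For the converse I assume $g_{1}\neq g_{2}$ and exhibit a separating $\xi$. If $\loc(g_{1})\not\subseteq\loc(g_{2})$ I take $\xi=\on{\ell}$ with $\ell\in\loc(g_{1})\setminus\loc(g_{2})$. Otherwise $g_{1},g_{2}$ are comparable, say $g_{1}<g_{2}$; applying \cref{lem:prefix-maximum} with $e=\Li$, $\pi=\pi_{1}$, $f=g_{2}$ factors $\pi_{1}=\sigma_{1}\cdot\sigma_{2}$ and yields $\sigma_{3}$ of length $\le|\pset|$ with $\sigma_{1}(\Li)=\sigma_{3}(g_{2})$, so $g_{1}=(\sigma_{3}\cdot\sigma_{2})(g_{2})$ for the \sdpath formula $\sigma=\sigma_{3}\sigma_{2}$, which has positive length and length $\le|\pi_{1}|+|\pset|$. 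Then \cref{prop: existence of separating formulas} provides $\xi\in\Xi(\sigma)\subseteq\Xi$ holding at $\sigma(g_{2})=g_{1}$ but not at $g_{2}$; the symmetric case $g_{2}<g_{1}$ uses $\Xi(\sigma')$ for $g_{2}=\sigma'(g_{1})$. In every case the disjunction is satisfied, contradicting the assumed truth of the formula.

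For the inequality formula I set
$$\LEQ{i}{j}{\pi_{1}}{\pi_{2}} = \EM_{i}\p{\pi_{1}} \land \EM_{j}\p{\pi_{2}} \land \bigvee_{\sigma_{1},\sigma_{3}}\EQ{i}{j}{\sigma_{1}}{\pi_{2}\cdot\sigma_{3}},$$
where $\sigma_{1}$ ranges over prefixes of $\pi_{1}$ and $\sigma_{3}$ over \sdpath formulas of length at most $|\pset|$, a finite disjunction. If this holds, some $\EQ{i}{j}{\sigma_{1}}{\pi_{2}\sigma_{3}}$ holds, giving $\sigma_{1}(\Li)=\sigma_{3}(\pi_{2}(\Lj))\le\pi_{2}(\Lj)$; since $\sigma_{1}$ is a prefix of $\pi_{1}$ we get $\pi_{1}(\Li)\le\sigma_{1}(\Li)\le\pi_{2}(\Lj)$. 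Conversely, if $\pi_{1}(\Li)\le\pi_{2}(\Lj)$, I apply \cref{lem:prefix-maximum} with $e=\Li$, $\pi=\pi_{1}$, $f=\pi_{2}(\Lj)$ to obtain a prefix $\sigma_{1}$ of $\pi_{1}$ and an \sdpath formula $\sigma_{3}$ of length $\le|\pset|$ with $\sigma_{1}(\Li)=\sigma_{3}(\pi_{2}(\Lj))=(\pi_{2}\sigma_{3})(\Lj)$, which is exactly the condition defining $\EQ{i}{j}{\sigma_{1}}{\pi_{2}\sigma_{3}}$.

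The hard part is the backward direction of the equality formula. Because $g_{1}$ and $g_{2}$ are the images of two \emph{different} anchors $\Li$ and $\Lj$, I cannot separate them with a common-anchor argument as in \cref{prop: existence of Peq}, where both events were images of a single $e$. The crucial point is that \cref{lem:prefix-maximum} (unlike \cref{lem:order1}) applies even when the anchor $\Li$ lies below the target event $g_{2}$, and it is precisely this that bounds the length of the witnessing \sdpath formula $\sigma$ with $g_{1}=\sigma(g_{2})$; that bound is what keeps the separating set $\Xi$ finite and hence the whole formula a genuine $\locpastpdl$ trace formula.
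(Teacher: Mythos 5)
Your proposal is correct and follows essentially the same route as the paper: the same definitions of $\EQ{i}{j}{\pi_{1}}{\pi_{2}}$ (existence conjuncts plus negated disjunction over a finite separating set built from the $\Xi(\sigma)$ and the $\on{\ell}$) and of $\LEQ{i}{j}{\pi_{1}}{\pi_{2}}$ (a disjunction of $\EQ{i}{j}{\sigma_1}{\pi_2\cdot\sigma_3}$ over prefixes $\sigma_1$ of $\pi_1$ and short $\sigma_3$), with \cref{lem:prefix-maximum} used exactly where the paper uses it. Your closing remark about why \cref{lem:prefix-maximum} rather than \cref{lem:order1} is needed here (the anchor $\Li$ may lie below $\pi_2(\Lj)$) correctly identifies the one point where this argument genuinely differs from the event-formula case.
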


\begin{proof}
Let $\EQ{i}{j}{\pi_{1}}{\pi_{2}}$ and  $\LEQ{i}{j}{\pi_{1}}{\pi_{2}}$ be the following formulas:
\begin{align*}
  \EQ{i}{j}{\pi_{1}}{\pi_{2}} = &\EM_{i}(\p{\pi_{1}}) \land \EM_{j}(\p{\pi_{2}}) \\
  &\land \lnot \bigvee_{\xi \in \Xi} \EM_{i}(\langle \pi_{1} \rangle \xi) \land \EM_{j}(\langle \pi_{2} \rangle \lnot \xi) \\
  \LEQ{i}{j}{\pi_{1}}{\pi_{2}} =& \EM_{i}(\p{\pi_{1}}) \land \EM_{j}(\p{\pi_{2}}) \\
  & \land \bigvee_{\pi_{1}', \pi_{3}} \EQ{i}{j}{\pi_{1}'}{\pi_{2}\cdot\pi_{3}} 
\end{align*}
with $\Xi = \{\test{\on{i}} \mid i \in \pset\} \cup \bigcup_{\pi} \Xi(\pi)$ where the union runs over the \sdpath formulas  $\pi$ with length at least $1$ and at most $|\pset| + \max(||\pi_{1}||,||\pi_{2}||)$ (and $\Xi(\pi)$ is given by \cref{prop: existence of separating formulas}); and where the last disjunction is over the prefixes $\pi_{1}'$ of $\pi_{1}$ and the \sdpath formulas $\pi_{3}$ of length at most $|\pset|$.

Observe first that $t \models \EM_{i}(\p{\pi_{1}}) \land \EM_{j}(\p{\pi_{2}})$ if and only if $\pi_{1}(\Li)$, $\pi_{2}(\Lj)$ exist. If in addition $\pi_{1}(\Li) = \pi_{2}(\Lj)$, then, for every event formula $\varphi$, we have $t \models \EM_{i}(\p{\pi_{1}}\varphi)$ if and only if $t \models \EM_{j}(\p{\pi_{2}}\varphi)$. In particular, $t$ satisfies $\EQ{i}{j}{\pi_{1}}{\pi_{2}}$.

Conversely, suppose that $f_1 =\pi_{1}(\Li)$ and $f_2= \pi_{2}(\Lj)$ exist and $f_1 \neq f_2$. If $\loc(f_1) \nsubseteq \loc(f_2)$, let $\ell$ be a process in $\loc(f_1)\setminus\loc(f_2)$. Then $\xi = \test{\on{\ell}}\in \Xi$ and $t$ satisfies $\EM_{i}(\langle \pi_{1} \rangle \xi) \land \EM_{j}(\langle \pi_{2} \rangle \lnot \xi)$. Thus $t$ does not satisfy $\EQ{i}{j}{\pi_{1}}{\pi_{2}}$.

If instead $\loc(f_1) \subseteq \loc(f_2)$, let $\ell\in \loc(f_1)$. Then $f_1$ and $f_2$ are $\ell$-events, and hence $f_1 < f_2$ or $f_2 < f_1$. Assume $f_1 < f_2$ (the proof is similar if $f_2 < f_1$). By \cref{lem:prefix-maximum}, there exist \sdpath formulas $\pi'_1, \pi''_1, \pi_3$ such that $\pi_{1} = \pi_{1}'\cdot\pi_{1}''$, $\length{\pi_3} \le |\pset|$ and $\pi_{1}'(\Levent{i}{t}) = \pi_{3}(f_2)$, see \cref{fig: case in 3.18}.
\begin{figure}[h]
  \centering
  \begin{tikzpicture}[scale=0.94]
    \draw[black, very thick] (0.2,2.5) -- (4,2.5);
         \filldraw[purple] (2.5,4) circle (3pt) node[anchor=north west]{$e = \Li$};
         \filldraw[purple] (2.5,1) circle (3pt) node[anchor=north west]{$\Lj$};
         \filldraw[purple] (1,2.5) circle (3pt) node[anchor=north east]{$\pi_{1}(\Li)$};
         \filldraw[purple] (1.8,3.3) circle (3pt) node[anchor=south east]{};
         \filldraw[purple] (2,2.5) circle (3pt) (2.8,2.3) node[]{$\pi_{2}(\Lj)$};
    
    \path [draw=blue,snake it] (2.5,4)--(1.8,3.3) (2,3.9) node{$\pi_{1}'$};
    \path [draw=blue,snake it] (1.8,3.3)--(1,2.5) (1.1,3) node{$\pi_{1}''$};
    \path [draw=blue,snake it] (2.5,1)--(2,2.5) (2.6,1.65) node{$\pi_{2}$};
    \path [draw=blue,snake it] (2,2.5)--(1.8,3.3) (2.1,3) node{$\pi_{3}$};

    \draw[black, very thick] (4.5,4) -- (4.5,1);

    \draw[black, very thick] (5.2,2.5) -- (9,2.5);

    \filldraw[purple] (7.5,4) circle (3pt) node[anchor=north west]{$\Li$};
    \filldraw[purple] (7.5,1) circle (3pt) node[anchor=north west]{$\Lj$};
    \filldraw[purple] (6,2.5) circle (3pt) node[anchor=north east]{$\pi_{2}(\Lj)$};
    \filldraw[purple] (6.8,1.75) circle (3pt) node[anchor=south east]{};
    \filldraw[purple] (7,2.5) circle (3pt) (7.8,2.3) node[]{$\pi_{1}(\Li)$};

    \path [draw=blue,snake it] (7.5,4)--(7,2.5) (7,3.3) node{$\pi_{1}$};
    \path [draw=blue,snake it] (6.8,1.75)--(6,2.5) (6.2,2) node{$\pi_{2}''$};
    \path [draw=blue,snake it] (7.5,1)--(6.8,1.75) (7,1) node{$\pi_{2}'$};
    \path [draw=blue,snake it] (7,2.5)--(6.8,1.75) (7.2,2) node{$\pi_{3}$};

\end{tikzpicture}
  \caption{Applying \cref{lem:prefix-maximum} when $e = \Li$, $f_1 = \pi_{1}(\Li)$, $f_2 = \pi_{2}(\Li)$ and $f_1 < f_2$ (left) or $f_2 < f_1$ (right)}
  \label{fig: case in 3.18}  
\end{figure}
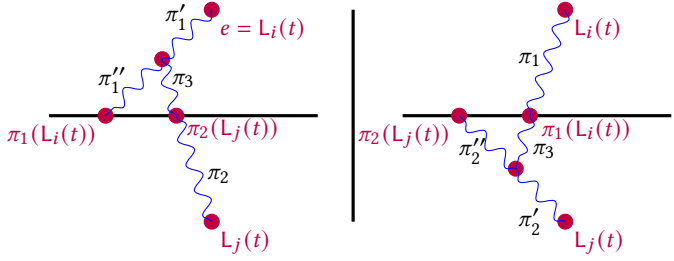
In particular $f_{1} = \pi_{1}''(\pi_{3}(f_2))$.  Let then $\pi' = \pi_{3}\cdot\pi_{1}''$.  In
particular $\length{\pi'} \ge 1$ (since $f_1=\pi'(f_2)<f_2$) and $\length{\pi'_1} \le
\length{\pi_1}$, so $\Xi(\pi')\subseteq \Xi$.
Therefore, we find $\xi\in \Xi(\pi')$ such that $t$ satisfies $\EM_{i}(\langle \pi_{1}
\rangle \xi) \land \EM_{j}(\langle \pi_{2} \rangle \lnot \xi)$, and hence it does not
satisfy $\EQ ij{\pi_{1}}{\pi_{2}}$.

This concludes the proof of the statement on $\EQ ij{\pi_{1}}{\pi_{2}}$. The proof of the statement on $\LEQ ij{\pi_{1}}{\pi_{2}}$ is a direct application of \cref{lem:prefix-maximum}, which states that $\pi_{1}(\Li) \leq \pi_{2}(\Lj)$ if and only if there exist \sdpath formulas $\pi'_1$, $\pi''_1$ and $\pi_3$ such that $\pi_{1} = \pi_{1}'\cdot\pi_{1}''$, $\length{\pi_{3}}\le |\pset|$ and $\pi_{1}'(\Li) = \pi_{3}(\pi_{2}(\Lj))$. This is true if and only if $t$ satisfies $\bigvee_{\pi_{1}', \pi_{3}} \EQ{i}{j}{\pi_{1}'}{\pi_{2}\cdot\pi_{3}}$. 
\end{proof}

We can now complete the proof of the expressive completeness of $\locpastpdl$.

\begin{proof}[Proof of \cref{Lconstants theorem}]
We want to show that the constant trace formulas $\Lleq{i}{j}$ and $\Lleq{i,j}{k}$ are logically equivalent to $\locpastpdl$ formulas.

It is immediate that $\Lleq{i}{j}$ is equivalent to the $\locpastpdl$ formula $\LEQ{i}{j}{\True?}{\True?}$ from \cref{lem: LEQ EQ pi1 pi2}.

Now recall that $t$ satisfies $\Lleq{i,j}{k}$ if and only if $\Levent{k}{t}$,
$\Levent{i,j}{t}$ exist and $\Levent{i,j}{t} \leq \Levent{k}{t}$. 

Notice that $\Levent{i}{t}$ is a $j$-event if and only if $t \models \EM_{i}(\on j)$.  In
this case, $\Levent{i,j}{t} = \Levent{i}{t}$.  If $\Levent{i}{t}$ is not a $j$-event, then
$\Levent{i,j}{t}$ is equal to $\Y_{j}(\Levent{i}{t})$ and \cref{lem:Y_i and sdpath}
asserts that $\Levent{i,j}{t} = (\pi\cdot\test{\on{j}})(\Li)$ for some \sdpath formula $\pi$
of length at least 1 and at most $|\pset|-1$.

The existence of $\Levent{i,j}{t}$ and $\Levent{k}{t}$ is asserted by the sentence
$\bigvee_{\pi}\EM_{i}(\p{\pi}\on{j}) \land \EM_{k}(\True)$, where the disjunction runs
over the \sdpath formulas $\pi$ of length at most $|\pset|-1$.

This justifies considering the following $\locpastpdl$ formula:
\begin{align*}
  \Phi = & \EM_{k}(\True) \wedge \bigvee_{\pi}\EM_{i}(\p{\pi}\on{j}) \\
  & \wedge\left(\bigwedge_{\pi}\EM_{i}(\p{\pi}\on{j}) \Longrightarrow 
  \LEQ{i}{k}{\pi\cdot\test{\on{j}}}{\test{\True}}\right),
\end{align*}
where both the disjunction and the conjunction run over the \sdpath formulas $\pi$ of
length at most $|\pset|-1$.
As discussed, if $t$ satisfies $\Phi$, then it satisfies $\Lleq{i,j}{k}$.

Conversely, suppose that $t$ satisfies $\Lleq{i,j}{k}$ and that $\Levent{k}{t}$ and
$\Levent{i,j}{t}$ exist.  Then every event of the form $(\pi\cdot\test{\on{j}})(\Li)$ lies
below $\Levent{i,j}{t}$, and hence below $\Levent{k}{t}$.  Thus $t$ satisfies $\Phi$.
This establishes that $\Lleq{i,j}{k}$ is logically equivalent to $\Phi$, which concludes
the proof.
\end{proof}

\section{Applications}\label{sec:app}
\subsection{Asynchronous devices and cascade product}
We first quickly review the model of asynchronous automata due
to Zielonka \cite{Zielonka-RAIRO-TAI-87}. Besides using these automata
to accept trace languages, we also use them, as in
\cite{MukundSohoni-DC97, AdsulGSW-CONCUR20, AdsulGSW-LMCS22}, to 
locally compute relabelling
functions on input traces, similar in spirit to the sequential
letter-to-letter transducers on words. The {\em local cascade product} of asynchronous automata
from \cite{AdsulGSW-CONCUR20, AdsulGSW-LMCS22} is a natural generalization
of cascade product of sequential automata and, like in the word case 
\cite{StraubingBook}, it 
corresponds to compositions of related relabelling functions.

Recall that, in keeping with our earlier notation, $\distribution = \{\Sigma_i\}_{i \in \pset}$ is a distributed alphabet
over the set $\pset$ of processes with total alphabet $\Sigma = \cup_{i \in \pset} \Sigma_i$.

An \emph{asynchronous automaton} $\A$ over $\distribution$ (or simply over $\s$) is a tuple 
$\A = ({\{\ls_i\}}_{i \in \pset},
{\{\lt_a \}}_{a \in \alphabet}, \sinit)$ where 
\begin{itemize}
	\item for each $i \in \pset$, $\ls_i$ is a finite non-empty set of
		local $i$-states.
	\item for each $a \in \alphabet$, $\lt_a\colon \ls_a \to \ls_a$ is a deterministic joint
	transition function where
		$\ls_a = \prod_{i \in \loc(a)} \ls_i$ is the set of 
		$a$-states.
	\item with $S =  \prod_{i \in \pset} \ls_i$ as the set of all global states, $\sinit \in S$ is a designated initial global state.
\end{itemize}
For a global state $s \in S$, we write $s=(s_a, s_{-a})$ where
$s_a$ is the projection of $s$ on $\loc(a)$ and $s_{-a}$ is the projection
on the complement $\pset\setminus \loc(a)$.
For $a \in \alphabet$, the joint transition function $\lt_a\colon\ls_a \to \ls_a$
can be naturally extended to a global transition function 
$\gt_a\colon S \to S$, on global states as follows: 
$\gt_a((s_a, s_{-a}))= (\lt_a(s_a), s_{-a})$. Further we define, for
a trace $t$ over $\s$, $\gt_t\colon S \to S$ by letting $\gt_{\epsilon}$ be
the identity function on $S$, the composition $\gt_t = \gt_a \circ \gt_{t'}$ when $t=t'a$. 
The well-definedness of $\gt_t$ follows easily from the 
fact that, for a pair $(a,b)$ of independent letters, $\gt_a \circ \gt_b = 
\gt_b \circ \gt_a$.
We denote by
$\A(t)$ the global state $\gt_t(\sinit)$ reached when running $\A$ on $t$.

Asynchronous automata are used to accept trace languages. If $F$
is a subset of the set $S$ of global states, we say that 
$L(\A, F)= \{ t \mid \A(t) \in F \}$ is the \emph{trace language accepted by $\A$ 
with final states $F$}. A trace language is said to be \emph{accepted by $\A$}
if it is equal to $L(\A,F)$ for some $F \subseteq S$.

In this work, we use asynchronous automata also as 
letter-to-letter asynchronous transducers, which compute a relabelling function. Let $\g$ be a finite 
non-empty set.
The set $\s \times \g$ naturally inherits a distribution from $\distribution$:
$\loc((a,\gamma)) = \loc(a)$.  We denote by $\traces$
(resp.\ $\ltraces$) the set of traces over corresponding alphabets.
A function $\theta\colon \traces \to \ltraces$ is called a \emph{$\g$-labelling 
function} if, for every $t=(E, \leq, \lambda) \in \traces$,
$\theta(t)=(E, \leq, (\lambda, \mu)) \in \ltraces$. Thus a 
$\g$-labelling function simply decorates each event $e$ of the trace $t$
with a label $\mu(e)$ from $\g$. 

An \emph{asynchronous $\g$-transducer} over $\s$ is a tuple 
$\widehat{\A} = (\A, \{\mu_a\})$  where $\A=(\{\ls_i\}, \{\lt_a\}, \sinit)$ is
an asynchronous automaton and each $\mu_a$ ($a \in \Sigma$) is
a map $\mu_a\colon \ls_a \to \g$. The $\g$-labelling
function \emph{computed} by $\widehat{\A}$ is the following map from $\traces$ to $\ltraces$, also denoted by $\widehat{\A}$:
for $t = (E, \leq, \lambda) \in \traces$, let 
$\widehat{\A}(t)= (E, \leq, (\lambda, \mu)) \in \ltraces$ such that, 
for every $e \in E$ with
$\lambda(e)=a$, $\mu(e) = \mu_{a}(s_a)$ where $s= \A(\Da e)$.

An asynchronous automaton (resp.\ transducer) with local state 
sets $\{S_i\}$ is said to be \emph{localized at process $i$} if 
all local state sets $S_j$ with $j\neq i$ are singletons. 
It is \emph{localized} if it is localized at some process. 
Note that, in a device localized at $i$, only process $i$ carries
non-trivial information and all non-trivial transitions are on
letters in which process $i$ participates.

Now we introduce the important notion of local cascade product  of 
asynchronous transducers
\cite{DBLP:conf/fsttcs/AdsulS04,AdsulGSW-LMCS22}. 

\begin{definition}
Let $\widehat{\A}= (\{\ls_i\}, \{\lt_a\}, \sinit, \{\mu_a\})$ be
a $\g$-transducer over $\s$ and 
$\widehat{\B} = (\{\ls[Q]_i\}, \{\lt_{(a,\gamma)} \}, \sinit[q], 
\{\nu_{(a, \gamma)} \})$ be $\Pi$-transducer over $\s \times \g$. 
We define the local cascade product of $\widehat{\A}$ and $\widehat{\B}$ to
be the $(\g \times \Pi)$-transducer $\widehat{\A} \lc \widehat{\B} =
(\{\ls_i \times \ls[Q]_i\}, \{\lt[\nabla]_a \}, (\sinit,\sinit[q]), 
\{\tau_a \})$ over $\s$ where
$\nabla_a((s_a, q_a))= (\lt_a(s_a), \lt_{(a, \mu_a(s_a))}(q_a))$ and
$\tau_a\colon \ls_a \times \ls[Q]_a \to \g \times \Pi$ is defined by
$\tau_a((s_a, q_a))= (\mu_a(s_a), \nu_{(a,\mu_a(s_a))}(q_a))$.
\end{definition}

In the sequential case, that is, when $|\pset|=1$, the local cascade
product coincides with the well-known operation of cascade product of
sequential letter-to-letter transducers.

The following lemma (see \cite{AdsulGSW-LMCS22}) is easily verified.
\begin{lemma} The $(\g \times \Pi)$-labelling function 
computed by $\widehat{\A} \lc \widehat{\B}$ is the composition of
the $\g$-labelling function computed by $\widehat{\A}$ and the
$\Pi$-labelling function computed by $\widehat{\B}$: for every $t \in \traces$,
$$(\widehat{\A} \lc \widehat{\B} )(t) = \widehat{\B}( \widehat{\A}(t) )$$
\end{lemma}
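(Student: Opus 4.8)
The plan is to prove, by induction along any linearisation of the input trace, the global-state invariant
$$(\widehat{\A} \lc \widehat{\B})(u) = \bigl(\A(u),\, \B(\widehat{\A}(u))\bigr)$$
for every trace $u \in \traces$, under the canonical identification of $\prod_{i}(S_i \times Q_i)$ with $\bigl(\prod_i S_i\bigr) \times \bigl(\prod_i Q_i\bigr)$: the first coordinate is $\A$'s global state after reading $u$, and the second is $\B$'s global state after reading the relabelled trace $\widehat{\A}(u) \in \ltraces$. Granting this invariant, the lemma follows by comparing, event by event, the labels the two sides attach to each event of $t$ — both sides keep the poset $(E,\le)$ of $t$ unchanged, so only the labels are in question, and the alphabets $\s\times(\g\times\Pi)$ and $(\s\times\g)\times\Pi$ are identified by associativity.

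First I would establish the invariant. The base case $u=\epsilon$ is immediate, since all three automata start in their initial states and $\widehat{\A}(\epsilon)=\epsilon$. For the step, write $u=u'a$ where $e$ is the last event of the chosen linearisation and $\lambda(e)=a$. By maximality of $e$, the strict past $\Da e$ is contained in $u'$, and every event of $u'\setminus\Da e$ is concurrent with $e$; being incomparable to an $a$-event, such an event involves no process of $\loc(a)$. The key trace-specific observation is therefore that $\A(\Da e)_a=\A(u')_a$: running $\A$ on the $\loc(a)$-disjoint events of $u'\setminus\Da e$ leaves the $\loc(a)$-components untouched, because a global transition modifies only components inside its own location. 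Consequently the label $\mu(e)=\mu_a(\A(\Da e)_a)$ that $\widehat{\A}$ assigns to $e$ equals $\mu_a(\A(u')_a)$, which is exactly the value the product's transition $\nabla_a$ feeds into its second component.

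Next I would push both coordinates through the $a$-transition. By induction the product sits in state $(\A(u'),\,\B(\widehat{\A}(u')))$; applying $\nabla_a$ on the $\loc(a)$-projection sends the first coordinate to $\A(u)$ and the second to the result of $\B$'s transition on the letter $(a,\mu_a(\A(u')_a))=(a,\mu(e))$. Since $e$ is maximal, the label of every event $f\neq e$ is unchanged between $\widehat{\A}(u')$ and $\widehat{\A}(u)$ (labels depend only on strict pasts, and $e\notin\Da f$), so $\widehat{\A}(u)$ is simply $\widehat{\A}(u')$ with the single event $e$, labelled $(a,\mu(e))$, appended. Hence the second coordinate becomes exactly $\B(\widehat{\A}(u))$, and the invariant is restored.

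Finally I would read off the output labels. For an event $e$ with $\lambda(e)=a$, the product uses its state on $\Da e$, which by the invariant is $(\A(\Da e),\,\B(\widehat{\A}(\Da e)))$, and outputs $\tau_a$ on its $\loc(a)$-projection, namely $(\mu_a(\A(\Da e)_a),\,\nu_{(a,\mu_a(\A(\Da e)_a))}(\B(\widehat{\A}(\Da e))_a))$. The first coordinate is precisely the $\g$-label $\mu(e)$ produced by $\widehat{\A}$. For the second, I would use that the relabelling is causal — the restriction of $\widehat{\A}(t)$ to the down-set $\Da e$ coincides with $\widehat{\A}(\Da e)$ — so that $\B$'s state on the strict past of $e$ inside $\widehat{\A}(t)$ is $\B(\widehat{\A}(\Da e))$, whose $\loc(a)$-projection, fed through $\nu_{(a,\mu(e))}$, is exactly the $\Pi$-label that $\widehat{\B}$ attaches to $e$ in $\widehat{\B}(\widehat{\A}(t))$. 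Thus all labels agree and the two traces coincide. I expect the main obstacle to be precisely the two causality facts — that $\A(\Da e)_a=\A(u')_a$ and that restriction and one-event extension of the relabelling behave well — as these are the only places where the genuine trace structure (rather than the word case) enters; the remainder is a routine induction.
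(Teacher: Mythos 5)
Your proof is correct. Note that the paper itself gives no argument for this lemma --- it is stated with the remark that it is ``easily verified'' and a pointer to the cited reference --- so there is no in-paper proof to compare against; your write-up simply supplies the verification that is left implicit. The structure you chose (induction along a linearisation to establish the global-state invariant $(\A(u),\B(\widehat{\A}(u)))$, then reading off the output labels via $\tau_a$) is the natural one, and you correctly isolate the only two places where genuine trace structure enters: that $\A(\Da e)_a=\A(u')_a$ because the events of $u'$ outside $\Da e$ are concurrent with $e$ and hence have locations disjoint from $\loc(a)$, and that the relabelling $\widehat{\A}$ is causal, so restricting $\widehat{\A}(t)$ to a down-set agrees with applying $\widehat{\A}$ to that down-set. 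One small notational caveat: you write $(\widehat{\A}\lc\widehat{\B})(u)$ and $\B(\widehat{\A}(u))$ for global states of the underlying automata, whereas the paper reserves $\widehat{\A}(t)$ for the relabelled trace and $\A(t)$ for the state; in a final version you should keep the hat-free notation for states throughout to avoid conflating the transducer with its automaton.
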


\subsection{$\locpastpdl$ translation into cascade product}
Now we exploit the locality and the natural hierarchical structure
of $\locpastpdl$ formulas to translate them into local cascade 
products of localized devices.

We start with the definition of a natural relabelling function associated
with a collection $F$ of $\locpastpdl$ event formulas. Let 
$\g_F = \{\True,\False\}^F$. For each trace $t \in \traces$ and event $e$ in $t$, 
we let $\mu_F(e) \in \g_F$ be the tuple of truth-values of the formulas $\phi \in F$ at $e$, and we let $\theta^F$ be the $\g_F$-labelling function given by $\theta^F(t) = (E,\leq,(\lambda, \mu_F))$, for each
$t = (E,\leq,\lambda) \in \traces$.

We can now state an important result on event formulas in $\locpastpdl$. This result, without an analysis of global states of the resulting
construction, is already implicit in \cite{AdsulGastinSarkarWeil-CONCUR22}.

\begin{theorem}\label{thm:locpastpdl-event-formula-to-automata}
  Let $\phi$ be a $\locpastpdl$ event formula and $\theta^\phi$ be the corresponding
  $\{\True, \False\}$-labelling function.  Then $\theta^\phi$ can be computed by a local
  cascade product of localized asynchronous transducers. The number of global states in 
  this transducer is $2^{\mathcal{O}(|\varphi|)}$.
\end{theorem}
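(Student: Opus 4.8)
The plan is to induct on the structure of $\phi$, assembling the cascade bottom-up. I would first fix a topological enumeration $\psi_{1},\dots,\psi_{N}$ of the event subformulas of $\phi$ (so that every subformula of $\psi_{r}$ precedes it, and $\psi_{N}=\phi$, with $N=O(|\phi|)$), and then build localized transducers $T_{1},\dots,T_{N}$ so that $T_{r}$ outputs, at every event, the truth value of $\psi_{r}$, under the assumption that the truth values of $\psi_{1},\dots,\psi_{r-1}$ are already recorded in its input label. The transducer realizing $\theta^{\phi}$ is then the local cascade product $T_{1}\lc\cdots\lc T_{N}$, read off on the component corresponding to $\psi_{N}=\phi$. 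Correctness is immediate from the composition lemma for local cascade products: at level $r$ the labels produced by $T_{1},\dots,T_{r-1}$ are exactly the truth values of $\psi_{1},\dots,\psi_{r-1}$, which $T_{r}$ may consult through the enriched letter it reads.

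The base and Boolean cases carry no state. For an atomic $\psi_{r}=a$, the transducer $T_{r}$ has singleton local state at every process (hence is trivially localized) and its output at a letter $b$ is the truth value of ``$b=a$''. For $\psi_{r}=\neg\varphi$ or $\psi_{r}=\varphi\lor\varphi'$, the subformulas $\varphi,\varphi'$ occur among $\psi_{1},\dots,\psi_{r-1}$, so their truth values at the current event are present in the enriched letter; $T_{r}$ is again a singleton-state transducer whose output is the corresponding Boolean combination of those bits. Such levels contribute only a bounded factor to the global state count.

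The substance of the proof is the case $\psi_{r}=\p{\pi}\varphi$ with $\pi$ an $i$-local path formula. I would split $\pi$ (as a regular expression) into a zero-move part $\pi_{0}$ and a part $\pi_{\ge 1}$ whose every match uses at least one top-level $\lmove_{i}$. The value $\p{\pi_{0}}\varphi$ is a purely local Boolean combination of label bits available in the enriched letter, so it can be emitted directly by the output function at every event. For $\pi_{\ge 1}$, note that a match must begin on an $i$-event and proceed backward along the process-$i$ line $f_{1}<\cdots<f_{m}$, checking, at the visited positions, tests $\test{\chi}$ whose $\chi$ are among the already-computed $\psi_{1},\dots,\psi_{r-1}$. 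Colouring each $f_{s}$ by the truth values of these finitely many tests, one sees that $\p{\pi_{\ge 1}}\varphi$ holds at $f_{r}$ iff the coloured word $c_{1}\cdots c_{r}$ has a suffix matching the reversed expression $(\pi_{\ge 1}\cdot\test{\varphi})^{R}$, where $\lmove_{i}$ consumes one coloured letter and tests act as predicates on the adjacent letter. This is a regular suffix-membership property, recognized by a deterministic automaton $D$ for $\Gamma_{c}^{*}\,L\big((\pi_{\ge 1}\cdot\test\varphi)^{R}\big)$ scanned left to right. I would then let $T_{r}$, localized at $i$, have process $i$ carry the state of $D$: on each $i$-event it reads the current colour from the enriched letter and updates $D$; its output at $e$ is $\p{\pi_{0}}\varphi$ on non-$i$-events, and on $i$-events the disjunction of $\p{\pi_{0}}\varphi$ with the acceptance status of $D$ after incorporating $e$ (available because the output function sees both the pre-event $D$-state and the current colour). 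Since no process other than $i$ carries state, $T_{r}$ is genuinely localized at $i$.

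The remaining, and in my view principal, difficulty is the state-count analysis that yields $2^{O(|\phi|)}$ rather than $2^{O(|\phi|^{2})}$. The point is that the automaton $D$ at level $\psi_{r}$ depends only on the top-level regular-expression structure of $\pi$, the embedded event subformulas appearing merely as opaque colour predicates; hence $D$, and therefore the global state space of $T_{r}$, has size $2^{O(\|\psi_{r}\|)}$ where $\|\psi_{r}\|$ counts only the top-level symbols of $\psi_{r}$. Because every symbol of $\phi$ lies in the top level of exactly one subformula, $\sum_{r}\|\psi_{r}\|=O(|\phi|)$, and the global state space of the cascade, being the product of the $T_{r}$'s, has size $\prod_{r}2^{O(\|\psi_{r}\|)}=2^{O(|\phi|)}$. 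Keeping this accounting tight is exactly what a naive induction (which would re-expand nested subformulas at each level, or nest cascades inside cascades) fails to do; the discipline of treating all embedded event subformulas as already-computed label bits, shared across a single bottom-up pass, is what makes the bound come out singly exponential.
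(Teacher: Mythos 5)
Your proposal is correct and follows essentially the same route as the paper: a bottom-up induction over event subformulas in which embedded tests are treated as already-computed label bits, the $i$-local path expression is compiled into a DFA over the enriched alphabet and localized at process $i$, and the $2^{\mathcal{O}(|\phi|)}$ bound comes from charging each symbol of $\phi$ to the top level of exactly one subformula. The only difference is presentational — you flatten the recursion into one linear cascade over a topological enumeration, whereas the paper nests cascades via $\A_F \lc \B$ — but by associativity of the local cascade product these yield the same construction and the same accounting.
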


Let us highlight the main ideas required to prove \cref{thm:locpastpdl-event-formula-to-automata}.
The proof proceeds by structural induction on $\phi$, and constructs 
a cascade product $\A_\phi$ of localized transducers which computes $\theta^\phi$.
The most non-trivial case is when $\phi=\p{\pi}$ where $\pi$ is an $i$-local path
formula. Recall that $\pi$ is a regular expression involving top-level moves 
$\lmove_{i}$ and test formulas.  Let $F$ be the set of event formulas which appear in
these test formulas.
By induction, for each $\psi \in F$, $\theta^\psi$ can be computed by cascade product $\A_\psi$ of 
localized transducers. By a direct product construction, which may be seen 
as a special case of cascade product, we can construct a transducer of
the desired form, say $\A_F$, which computes $\g_F$-labelling function $\theta^F$.
We finally construct $\A_\phi$ as a cascade product of $\A_F \lc \B$ where
$\B$ is localized at $i$. In order to construct $\B$, we first convert $\pi$
into a deterministic finite-state automata $\B_i$ over alphabet $\g_{F}$.  Now we obtain
the asynchronous transducer by localizing $\B_i$ at process $i$ and suitably designing a
labelling function which computes $\theta^\phi$.  See
\cite{AdsulGastinSarkarWeil-CONCUR22} for more details.

The complexity bound is also proved by structural induction.  The sizes $|\varphi|$ and 
$|\pi|$ of event and path formulas are defined inductively as expected, with 
$|\test{\psi}|=1+|\psi|$. For a path expression $\pi$, we also define the top-level size 
$||\pi||$ which does not take into account the size of the event formulas tested in 
$\pi$: $||\test{\psi}||=1$.
The transducers for atomic formulas $\varphi=a$ with $a\in\Sigma$ have a single global
state.  The transducers for the boolean connectives $\neg,\vee,\wedge$ also have a single
global state.  Consider now the non-trivial case $\varphi=\p{\pi}$ described above.  By
induction, the number of global states of each $\A_{\psi}$ is $2^{\mathcal{O}(|\psi|)}$.
We get the local cascade product $\A_{F}$ with $2^{\mathcal{O}(||F||)}$ global states,
where $||F||=\sum_{\psi\in F} |\psi|$.  Finally, we translate the regular expression
$\pi$, considering test formulas $\test{\psi}$ for $\psi\in F$ as uninterpreted symbols,
to a DFA with $2^{\mathcal{O}(||\pi||)}$ many states. A final cascade product yields 
$\A_{\varphi}$ with $2^{\mathcal{O}(|\varphi|)}$ many global states.

Now we turn our attention to translating $\locpastpdl$ sentences into automata. A basic trace formula $\Phi$ is of the form
$\EM_{i} \phi$ where $\phi$ is an event formula. A trace $t$ satisfies
$\EM_{i} \phi$ if the last $i$-event exists and $\phi$ holds at this event.
By \cref{thm:locpastpdl-event-formula-to-automata},
we have a local cascade product $\widehat{\A}$ of localized transducers which 
computes $\theta^\phi$ and thus, records the truth value of $\phi$ at every event.
It is easy to convert $\widehat{\A}$ into an asynchronous automaton which checks 
at the final global state if the last $i$-event exists and whether $\phi$ evaluates to $\True$ at this event. 
As an arbitrary trace formula is
simply a boolean combination of formulas of the form $\EM_{i} \phi$, which
can be handled by a direct product construction, we get the following result.

\begin{theorem}\label{thm:locpastpdl-trace-formula-to-automata}
	Let $\Phi$ be a $\locpastpdl$ sentence. 
  We can construct a local cascade product of localized asynchronous automata
  which precisely accepts the language defined by $\Phi$.
  The number of global states is $2^{\mathcal{O}(|\Phi|)}$.
\end{theorem}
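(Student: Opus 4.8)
The plan is to reduce everything to the basic trace formulas $\EM_i\phi$ and then close under Boolean combinations. So I would first fix a basic formula $\EM_i\phi$, where $\phi$ is a $\locpastpdl$ event formula. By \cref{thm:locpastpdl-event-formula-to-automata}, the $\{\True,\False\}$-labelling function $\theta^\phi$ is computed by a local cascade product $\widehat{\A}_\phi$ of localized asynchronous transducers with $2^{\mathcal O(|\phi|)}$ global states; running $\widehat{\A}_\phi$ decorates every event $e$ with the truth value of $\phi$ at $e$. The only remaining task is to turn this \emph{labelling} device into an \emph{accepting} device that tests the value produced at the last $i$-event.

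To do this I would append, as the topmost level of the cascade, a small automaton $\B_i$ localized at process $i$ whose local $i$-state is a \emph{latch} taking a value in $\{u,\True,\False\}$, initialized to $u$ (``undefined''). Recall that in the local cascade product, when an event $e$ labelled $a$ is processed, the lower device hands the upper level the label $\mu_a(s_a)$, which is exactly the truth value of $\phi$ at $e$; since $i\in\loc(a)$ whenever $e$ is an $i$-event labelled $a$, process $i$ participates in $e$ and $\B_i$ can overwrite its latch with this value. Consequently, after $\B_i$ has run on the whole trace, its latch holds the value of $\phi$ at the last $i$-event if that event exists, and $u$ otherwise. Declaring the accepting set to be those global states whose $\B_i$-component equals $\True$ then yields an automaton accepting exactly the language of $\EM_i\phi$. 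The product $\widehat{\A}_\phi\lc\B_i$ is still a local cascade product of localized devices, and its global-state count is $2^{\mathcal O(|\phi|)}\cdot 3=2^{\mathcal O(|\phi|)}$.

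For an arbitrary sentence $\Phi$, I would write it as a Boolean combination of basic formulas $\EM_{i_1}\phi_1,\dots,\EM_{i_m}\phi_m$ and take the direct product of the automata just built for each $\EM_{i_k}\phi_k$. The key structural observation is that a direct product is a special case of a local cascade product: stacking all the localized levels of every $\widehat{\A}_{\phi_k}\lc\B_{i_k}$ into a single cascade, and letting each level simply ignore the extra labels produced by the levels belonging to other basic formulas, realizes the parallel run of all components. The accepting set is then defined by evaluating the Boolean structure of $\Phi$ on the per-component acceptance bits (latch-equals-$\True$ for each $k$). Since global-state counts multiply across a cascade, the total is $\prod_k 2^{\mathcal O(|\phi_k|)}=2^{\mathcal O(\sum_k|\phi_k|)}=2^{\mathcal O(|\Phi|)}$, as required.

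The main obstacle, and the only genuinely new point beyond \cref{thm:locpastpdl-event-formula-to-automata}, is the transducer-to-automaton conversion in the second step: one must check that the latch update is scheduled at precisely the right moment (it records $\mu_a(s_a)$, the value of $\phi$ at the current event, rather than at a neighbouring one), that detecting existence of the last $i$-event is handled by the undefined initial value $u$, and that the latch device is genuinely localized at $i$ so that the whole construction remains a local cascade product of localized automata. The remaining points---that a direct product embeds into a local cascade product and that the global-state bound is preserved---are routine once these are in place.
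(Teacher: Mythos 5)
Your proposal is correct and follows essentially the same route as the paper: invoke \cref{thm:locpastpdl-event-formula-to-automata} to get a localized cascade transducer for each event formula $\phi$, convert it to an acceptor that records the value of $\phi$ at the last $i$-event, and handle Boolean combinations by a direct product (which, as the paper also notes, embeds into a local cascade product). Your explicit three-state latch localized at process $i$ is simply a concrete realization of the conversion step the paper dismisses as ``easy,'' and your complexity accounting matches the paper's $2^{\mathcal{O}(|\Phi|)}$ bound.
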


\subsection{Zielonka's theorem, distributed Krohn-Rhodes theorem and a 
Gossip implementation}
Now we present some important applications of the expressive completeness
of $\locpastpdl$.
The next theorem provides a new proof of the fundamental theorem of
Zielonka which characterizes regular trace languages using asynchronous
automata.

\begin{theorem}\label{thm:Zielonka-theorem}
A trace language is regular if and only if it is accepted by a local cascade 
product of localized asynchronous automata. In particular, every regular trace language can be accepted by an asynchronous automaton.
\end{theorem}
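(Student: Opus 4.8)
The plan is to assemble this theorem from the two principal results already established: the expressive completeness of $\locpastpdl$ (\cref{thm: locpastpdl is expressively complete}) and its effective translation into a local cascade product of localized asynchronous automata (\cref{thm:locpastpdl-trace-formula-to-automata}). Since the statement is an equivalence, I would prove the two implications separately and then read off the ``in particular'' clause.

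For the direction from regularity to acceptance by a local cascade product, I would start with an arbitrary regular trace language $L$. By \cref{thm: locpastpdl is expressively complete}, $L$ is defined by some $\locpastpdl$ sentence $\Phi$. \cref{thm:locpastpdl-trace-formula-to-automata} then yields, directly, a local cascade product of localized asynchronous automata that accepts exactly $L$ (with global-state space singly exponential in $|\Phi|$, although that bound is not needed here). This half therefore requires no new argument: it is precisely the conjunction of the two cited theorems.

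For the converse, I would first observe that a local cascade product of localized asynchronous automata is itself a single asynchronous automaton: by the very definition of $\lc$, the product $\widehat{\A}\lc\widehat{\B}$ is an asynchronous automaton over $\distribution$ whose local $i$-state set is $\ls_i\times\ls[Q]_i$, and iterating this observation collapses any finite local cascade product into one asynchronous automaton $\A$, the accepting set $F$ being carried along as a subset of its global states. It then remains to show that every language $L(\A,F)$ accepted by an asynchronous automaton is regular. This is the elementary half of Zielonka's theorem, and it must be argued \emph{without} appealing to the hard timestamping/gossip construction, so as to avoid circularity. I would argue it via recognizability: the assignment $t\mapsto\gt_t$, sending a trace to the transformation it induces on the finite global-state set $S$, has finite image in the transformation monoid of $S$ and respects concatenation, well-definedness being exactly the commutation $\gt_a\circ\gt_b=\gt_b\circ\gt_a$ for independent letters already noted above. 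Since $L(\A,F)=\{t\mid\gt_t(\sinit)\in F\}$ is a union of classes of this finite-image assignment, it is recognized by a finite monoid and hence regular.

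Finally, the ``in particular'' clause is immediate: applying the forward direction to a regular $L$ produces a local cascade product accepting $L$, which by the collapsing observation above \emph{is} a single asynchronous automaton. I expect no genuine obstacle in this proof itself; all the difficulty has been front-loaded into \cref{thm: locpastpdl is expressively complete}, whose technical core is the elimination of constant formulas in \cref{sec:expcomplete}, and into the modular translation of \cref{thm:locpastpdl-trace-formula-to-automata}. The only point demanding care is keeping the easy direction self-contained, so that this new route to Zielonka's theorem does not secretly reuse the very result it is meant to reprove.
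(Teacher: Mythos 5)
Your proposal is correct and follows essentially the same route as the paper: the forward direction is exactly the composition of \cref{thm: locpastpdl is expressively complete} with \cref{thm:locpastpdl-trace-formula-to-automata}, and the converse collapses the cascade product into a single asynchronous automaton whose accepted language is regular. The only difference is that you spell out the easy converse (via the finite transformation monoid of global states) where the paper simply cites it as known, which is a reasonable and non-circular way to keep that half self-contained.
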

\begin{proof}
Let $L$ be a regular trace language. By \cref{thm: locpastpdl is expressively complete}, there is a $\locpastpdl$ sentence $\Phi$ which defines $L$.
By \cref{thm:locpastpdl-trace-formula-to-automata}, we can construct
a local cascade product of localized automata which accepts the language
defined by $\Phi$, that is, $L$.

Conversely, a local cascade product of localized automata is
an asynchronous automaton, which is known to 
accept a regular trace language.
\end{proof}

Another important application is a novel distributed Krohn-Rhodes theorem. 
Recall that the classical 
Krohn-Rhodes theorem \cite{KR} states that every sequential automaton 
can be simulated by a {\em cascade product} of simple automata, namely, 
two-state reset automata and permutation automata. In a two-state reset
automaton, %
the transition function of 
each letter is either the identity function or a constant function.
In contrast, in a permutation automaton, each letter induces a permutation
of the state set. 

Let $i \in \pset$. A two-state reset automaton localized at 
process $i$ is an asynchronous automaton localized at $i$ which
has two $i$-local states and the local transition on each letter
is either the identity function or a constant. Similarly, in a localized
permutation automaton, each letter of the active process induces
a permutation of its local state set while the other local state sets
are singletons.

\begin{theorem}
	Every regular trace language can be accepted by a local
	cascade product of localized two-state reset automata and
	localized permutation automata.
\end{theorem}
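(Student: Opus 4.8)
The plan is to combine the expressive completeness of $\locpastpdl$ with the classical Krohn--Rhodes theorem, applied \emph{process by process}. Starting from a regular trace language $L$, \cref{thm:locpastpdl-trace-formula-to-automata} provides a local cascade product $\widehat{\A}_{1} \lc \cdots \lc \widehat{\A}_{m}$ of localized asynchronous automata (and transducers) that accepts $L$, where each $\widehat{\A}_{k}$ is localized at some process $i_{k}$. I would then replace every single factor $\widehat{\A}_{k}$ by a local cascade product of localized two-state reset and localized permutation automata computing the same relabelling, and finally splice all these decompositions together.

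The key observation driving the replacement is that a device localized at a process $i$ is nothing but a \emph{sequential} automaton (resp.\ transducer) reading the $i$-events: a letter $a \notin \Sigma_{i}$ has $i \notin \loc(a)$, so its joint transition leaves the only non-trivial local state untouched, while a letter $a \in \Sigma_{i}$ induces a genuine transition on the $i$-local state set. Thus the global state reached after reading a trace $t$ depends only on the sequence of labels of its $i$-events, exactly as for a sequential device over the input alphabet obtained from $\Sigma_{i}$ (augmented, for the later factors, by the labels fed in by earlier levels of the outer cascade). To such a sequential device I would apply the transducer form of the Krohn--Rhodes theorem \cite{KR,StraubingBook}, decomposing it as a sequential cascade product of two-state reset transducers and permutation transducers. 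Re-localizing each factor of this sequential cascade at the same process $i$ yields localized two-state reset and localized permutation automata; and since all these factors live on the single process $i$ (every letter outside $\Sigma_{i}$ acting as the identity on all of them), their local cascade product coincides with the sequential cascade on the $i$-events, hence still computes the original relabelling. The identity is both a legal reset transition and a permutation, so the inactive letters cause no trouble.

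With every $\widehat{\A}_{k}$ thus rewritten, I would assemble the final decomposition using associativity of the local cascade product, which, by the composition lemma for local cascade products, simply corresponds to composing the associated relabelling functions: substituting each internal cascade into $\widehat{\A}_{1} \lc \cdots \lc \widehat{\A}_{m}$ produces a single local cascade product whose factors are all localized two-state reset or localized permutation automata. The accepting condition is carried along because each Krohn--Rhodes decomposition not only reproduces the relabelling computed by $\widehat{\A}_{k}$ but also \emph{covers} it, in the sense that the original global state is recoverable as a function of the decomposed global state; hence the final acceptance test of the outer product can be rephrased on the states of the assembled product, and the resulting automaton accepts exactly $L$.

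I expect the main obstacle to be the compatibility invoked in the previous paragraph between the two notions of cascade product: verifying rigorously that re-localizing a \emph{sequential} Krohn--Rhodes cascade at a fixed process $i$ yields a \emph{local} cascade product (in the asynchronous sense) computing the same function on $i$-events. This hinges on checking that the letters outside $\Sigma_{i}$ act as the identity simultaneously on all factors, and that the forward, label-passing synchronization of the local cascade product, restricted to a single active process, reproduces exactly the letter-augmentation used in the sequential cascade product. Once this single-process reduction is established, the remainder is bookkeeping.
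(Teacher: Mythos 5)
Your proposal is correct and takes essentially the same approach as the paper: start from the local cascade product of localized asynchronous automata provided by the expressive-completeness translation (Theorem~\ref{thm:Zielonka-theorem}), observe that each localized factor is in effect a sequential automaton on the events of its process, apply the classical Krohn--Rhodes theorem to each such factor, and splice the resulting decompositions back into one local cascade product. The paper compresses all of this into two sentences; your write-up merely supplies the re-localization and bookkeeping details it leaves implicit.
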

\begin{proof}
Note that, by essentially using the classical Krohn-Rhodes theorem,
an asynchronous automaton localized at process $i$ can be simulated
by a local cascade product of two-state reset automata localized
at $i$ and permutation automata localized at $i$. Therefore,
the theorem follows by \cref{thm:Zielonka-theorem}.
\end{proof}

Our final application concerns an implementation of a gossip
automaton/transducer.
Let $\gossip$ be the collection of constant event formulas 
$\{\Yleq{i}{j}\}_{i,j \in \pset}$. Consider the $\g_\gossip$-labelling 
function. Note that $\g_\gossip$ records, for every trace and
at each event of that trace, the truth values of the formulas in $\gossip$
in the form of an additional label from $\g_\gossip$.
So $\theta^\gossip$ keeps track of the ordering information between 
leading process events.
By a gossip transducer, we mean any asynchronous transducer
which computes $\theta^\gossip$
and hence keeps track of 
the latest gossip/information \cite{MukundSohoni-DC97} in a 
distributed environment. 

\begin{theorem}
The $\g_\gossip$-labelling function $\theta^\gossip$ can be computed by an asynchronous
transducer which is a local cascade product of 
localized asynchronous transducers.
\end{theorem}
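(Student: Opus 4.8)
The plan is to reduce this statement entirely to two results already in hand: the expressibility of the gossip constants in $\locpastpdl$ (\cref{Yconstants theorem}) and the event-formula translation of \cref{thm:locpastpdl-event-formula-to-automata}. First I would observe that, by \cref{Yconstants theorem}, each constant event formula $\Yleq{i}{j} \in \gossip$ is logically equivalent to a $\locpastpdl$ event formula, which I denote $\phi_{ij}$. Consequently the $\g_\gossip$-labelling function $\theta^\gossip$, which decorates each event with the tuple of truth-values of the formulas in $\gossip$, is nothing but the labelling function $\theta^F$ associated with the finite set $F = \{\phi_{ij} \mid i,j \in \pset\}$ of $\locpastpdl$ event formulas.

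Next I would apply \cref{thm:locpastpdl-event-formula-to-automata} to each $\phi_{ij}$, obtaining a local cascade product $\A_{ij}$ of localized asynchronous transducers that computes the $\{\True,\False\}$-labelling function $\theta^{\phi_{ij}}$. It then remains to assemble these $|\pset|^2$ devices into a single local cascade product computing the whole tuple. This is exactly the direct-product-as-cascade-product construction already used to build the joint transducer $\A_F$ in the proof of \cref{thm:locpastpdl-event-formula-to-automata}: one stacks the $\A_{ij}$ in cascade, arranging each successive factor to read only the original $\Sigma$-component of its input and to ignore the additional labels emitted by earlier factors, so that each factor still computes its own $\theta^{\phi_{ij}}$ on the underlying trace while the composite emits the full tuple. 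Since a local cascade product of local cascade products of localized transducers is again a local cascade product of localized transducers, the resulting device has the required form and computes $\theta^\gossip = \theta^F$.

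The point worth stressing is that there is essentially no remaining obstacle at this stage: the hard part has already been absorbed into \cref{Yconstants theorem}, whose separating-formula machinery replaces the timestamping or gossip mechanisms of earlier constructions. It is precisely the locality of the formulas $\phi_{ij}$ that makes the factors localized, and their hierarchical structure that lets them translate into a unidirectional cascade --- thereby dissolving the apparent circularity of information flow that had made a local, hierarchical implementation of gossip seem counter-intuitive. If one wants an explicit size bound, the only extra check is that $\gossip$ is finite, so the product of the singly-exponential bounds supplied by \cref{thm:locpastpdl-event-formula-to-automata} keeps the global-state space of the composite singly exponential in the total formula size.
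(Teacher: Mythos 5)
Your proposal is correct and follows exactly the paper's own argument: express each $\Yleq{i}{j}$ in $\locpastpdl$ via \cref{Yconstants theorem}, translate each resulting formula into a local cascade product of localized transducers via \cref{thm:locpastpdl-event-formula-to-automata}, and combine them by a direct product realized as a cascade. The extra detail you give on how the direct product is arranged, and the remark on the size bound, are consistent elaborations rather than a different route.
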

\begin{proof}
By \cref{Yconstants theorem}, event formulas in $\gossip$ can be expressed in
$\locpastpdl$. Further, by \cref{thm:locpastpdl-event-formula-to-automata},
for each of these $\locpastpdl$ formulas $\phi$, the corresponding
$\theta^\phi$ can be computed by a local cascade product of 
localized asynchronous transducers. Combining these together through
a direct product, we obtain an asynchronous transducer
of the desired form which computes $\theta^\gossip$.
\end{proof}

The existing constructions \cite{Zielonka-RAIRO-TAI-87,MukundSohoni-DC97}  
of a gossip transducer are intrinsically non-local and non-hierarchical.
It is important to note that our construction is quite 
inefficient in terms of the size of the resulting transducer. The key point
of our construction is to show the existence of a gossip implementation using
compositions of labelling functions each of which is localized at 
some process. It was not at all clear that such an implementation
could exist!

\section{Conclusion}\label{sec:conclusion}
Our main result is the identification of a local, past-oriented fragment 
of propositional dynamic logic for traces called $\locpastpdl$, which 
is expressively complete with respect to regular trace languages. 
The natural hierarchical structure of $\locpastpdl$ formulas 
allows a modular translation into a 
cascade product of localized automata. 
We have also given many important applications such as
a new proof of Zielonka's theorem, a novel distributed version of 
Krohn-Rhodes theorem and a hierarchical implementation of a 
gossip automaton. 

A natural question is to identify a fragment of $\locpastpdl$
which matches the first-order logic in expressive power. 
Another promising future direction is to characterize the precise power of
local cascade products of localized two-state reset automata. 
Extending these results with %
future-oriented path 
formulas, and to the setting of infinite traces are also 
interesting directions.

It remains to investigate the utility of $\locpastpdl$ specifications 
from a more practical and empirical viewpoint. %
Our modular and efficient translation of these specifications should provide
significant improvements in applications such as synthesizing
distributed monitors and verifying properties of concurrent programs.

\bibliographystyle{plainurl}

\end{document}